\title[The Art of Two-Round Voting]{The Art of Two-Round Voting}
\author{Qishen Han}
\email{hnickc2017@gmail.com}
\affiliation{%
  \institution{Rutgers University}\city{Piscataway}\state{NJ}\country{USA}}
\author{Grant Schoenebeck}
\email{schoeneb@umich.edu}
\affiliation{%
  \institution{University of Michigan}\city{Ann Arbor}\state{MI}\country{USA}}
\author{Biaoshuai Tao}
\email{bstao@sjtu.edu.cn}
\affiliation{%
  \institution{Shanghai Jiao Tong University}\city{Beijing}\country{China}}
\author{Lirong Xia}
\email{xialirong@gmail.com}
\affiliation{%
  \institution{Rutgers University and DIMACS}\city{Piscataway}\state{NJ}\country{USA}}
\begin{abstract}
We study the voting problem with two alternatives where voters' preferences depend on a not-directly-observable state variable. While equilibria in the one-round voting mechanisms lead to a good decision, they are usually hard to compute and follow. 
We consider the two-round voting mechanism where the first round serves as a polling stage and the winning alternative only depends on the outcome of the second round.
We show that the two-round voting mechanism is a powerful tool for making collective decisions. Firstly, every (approximated) equilibrium in the two-round voting mechanisms (asymptotically) leads to the decision preferred by the majority as if the state of the world were revealed to the voters. Moreover, there exist natural equilibria in the two-round game following intuitive behaviors such as informative voting, sincere voting~\citep{Austen96:Information}, and surprisingly popular strategy~\citep{prelec2017solution}. 
This sharply contrasts the one-round voting mechanisms in the previous literature, where no simple equilibrium is known. Finally, we show that every equilibrium in the standard one-round majority vote mechanism gives an equilibrium in the two-round mechanisms that is not more complicated than the one-round equilibrium. Therefore, the two-round voting mechanism provides a natural equilibrium in every instance including those where one-round voting fails to have a natural solution, and it can reach an informed majority decision whenever one-round voting can. Our experiments on generative AI voters also imply that two-round voting leads to the correct outcome more often than one-round voting under some circumstances. 


\end{abstract}
\begin{document}

\newtheorem{thm}{Theorem}
\newenvironment{thmnb}[1]
  {\renewcommand{\thethm}{\ref{#1}'}%
   \addtocounter{thm}{-1}%
   \begin{thm}}
  {\end{thm}}

  \newenvironment{lembis}[1]
  {\renewcommand{\thelem}{\ref{#1}}%
   \addtocounter{lem}{-1}%
   \begin{lem}}
  {\end{lem}}

\newtheorem{claim}{claim}
\newtheorem{prop}{Proposition}
\newenvironment{propnb}[1]
  {\renewcommand{\theprop}{\ref{#1}'}%
   \addtocounter{prop}{-1}%
   \begin{prop}}
  {\end{prop}}
\newtheorem{lem}{Lemma}
\newenvironment{lemnb}[1]
  {\renewcommand{\thelem}{\ref{#1}'}%
   \addtocounter{lem}{-1}%
   \begin{lem}}
  {\end{lem}}

\newtheorem{asm}{Assumption}
\newtheorem{conj}{Conjecture}

\newtheorem{Alg}{Algorithm}
\newtheorem{prob}{Problem}
\newtheorem{prof}{Proof}
\newtheorem{coro}{Corollary}
\newenvironment{coronb}[1]
  {\renewcommand{\thecoro}{\ref{#1}'}%
   \addtocounter{coro}{-1}%
   \begin{coro}}
  {\end{coro}}
  
\theoremstyle{acmdefinition}
\newtheorem{dfn}{Definition}
\newtheorem{ex}{Example}

\newcommand\qishen[1]{{\color{blue} \footnote{\color{blue}Qishen: #1}} }
\newcommand\gs[1]{{\color{orange} \footnote{\color{orange}Grant: #1}} }

\newcommand{\blue}[1]{\textcolor{blue}{#1}}

\newcommand{\sag}{i}
\newcommand{\ag}{n}
\newcommand{\pr}{p}
\newcommand{\prnum}{m}
\newcommand{\prvec}{M}
\newcommand{\prprf}{\mathcal{M}}
\newcommand{\func}{f}
\newcommand{\prdist}{\Pi}
\newcommand{\rpvec}{\hat{M}}
\newcommand{\rpprf}{\hat{\mathcal{M}}}
\newcommand{\jdg}{\mathcal{J}}
\newcommand{\ut}{u}
\newcommand{\vt}{v}
\newcommand{\Ut}{U}
\newcommand{\stg}{\sigma}
\newcommand{\Stg}{\Sigma}
\newcommand{\stgp}{\Sigma}

\newcommand{\acc}{A}
\newcommand{\err}{I}

\newcommand{\atn}{j}
\newcommand{\Atn}{J}
\newcommand{\Atnset}{\mathcal{A}}
\newcommand{\wos}{k} 
\newcommand{\Wos}{K} 
\newcommand{\Wosset}{\mathcal{W}} 
\newcommand{\Wosrv}{W} 
\newcommand{\sig}{m} 
\newcommand{\sigi}{s}
\newcommand{\Sig}{M} 
\newcommand{\Sigset}{\mathcal{S}} 
\newcommand{\Sigrv}{S} 
\newcommand{\Sigp}{Sp}
\newcommand{\Siggroup}{Sg}
\newcommand{\wn}{r}

\newcommand{\bA}{\mathbf{A}}
\newcommand{\bR}{\mathbf{R}}
\newcommand{\bP}{\mathbf{P}}

\newcommand{\hist}{H}
\newcommand{\Histset}{\mathcal{H}}

\newcommand{\agh}{\alpha_H}
\newcommand{\ags}{\alpha_S}
\newcommand{\aga}{\alpha_A}
\newcommand{\Thd}{\mu}

\newcommand{\hst}{x} 
\newcommand{\hstset}{X}
\newcommand{\termhst}{Z}
\newcommand{\rhst}{\hst_{0}}
\newcommand{\thst}{z}
\newcommand{\blf}{\mu} 
\newcommand{\blfp}{M}
\newcommand{\typevec}{t}
\newcommand{\typevecset}{\mathcal{T}}
\newcommand{\infpart}{\mathcal{I}}
\newcommand{\infset}{I}
\newcommand{\oneround}{k}
\newcommand{\plf}{Pl}

\newcommand{\cst}{c} 
\newcommand{\ct}{c}
\newcommand{\spt}{th} 

\newcommand{\Rdone}{Rd^{1}}

\newcommand{\agf}{\alpha_{F}}
\newcommand{\agu}{\alpha_{U}}
\newcommand{\agc}{\alpha_{C}}
\newcommand{\tf}{\ag_{F}}
\newcommand{\tu}{\ag_{U}}
\newcommand{\tc}{\ag_{C}}
\newcommand{\lp}{\lambda}
\newcommand{\instance}{\mathcal{I}}
\newcommand{\textinst}{environment}

\newcommand{\thh}{T}
\newcommand{\exs}{\varphi}

\newcommand{\stgbar}{\hat{\stg}}
\newcommand{\stgpbar}{\hat{\stgp}}

\newcommand{\stgone}{\stg^{1}}
\newcommand{\stgtwo}{\stg^{2}}
\newcommand{\aYES}{{\bf\color{blue}YES}}
\newcommand{\aNO}{{\bf\color{orange}NO}}

\newcommand{\Qishen}[1]{\textbf{\color{orange} [Qishen: #1\textbf{]}}} 

\newcommand{\nocontentsline}[3]{}
\newcommand{\tocless}[2]{\bgroup\let\addcontentsline=\nocontentsline#1{#2}\egroup}

\begin{titlepage}

\maketitle

\vspace{1cm}
\setcounter{tocdepth}{1} 
\tableofcontents

\end{titlepage}


\section{Introduction}
\label{sec:intro}
{
How can people make wise informed collective decisions?  In many scenarios, people's preferences depend on the information they receive, and a collective decision needs to be made fairly. 
For example, suppose that a community wants to decide whether or not to adopt a restrictive COVID-19 policy. The preferences of the community members depend on their perceptions of the virality of the pandemic. While the virality is not directly observable, community members receive (noisy) signals about it to form their preferences, and their preferences can still be different even upon receiving the same signal.  For example, immunocompromised people may still prefer to adopt the restrictive policy when the signal is ``moderately dangerous'', while healthy people may prefer not to adopt the policy after receiving the same signal. Other examples include political elections, recruiting, group activity selection, etc.

A natural idea is to use voting to make collective decisions. When there are two alternatives, majority voting appears to be a natural choice. Majority voting reveals the state of the world (the virality of COVID-19 in the above example) almost surely when the population is large, according to the Condorcet Jury theorem~\citep{Condorcet1785:Essai}. 
Moreover, when voters are strategic, majority voting can still aggregate voters' fully informed preferences via strategic behaviors~\citep{Feddersen97:Voting, han2023wisdom}. In other words, it chooses the majority winner {\em as if the state of the world were revealed to the voters.} In the literature, such majority decision has been referred to as \emph{full information equivalence}~\citep{Feddersen97:Voting}, \emph{majority wish}~\citep{schoenebeck21wisdom}, and \emph{informed majority decision}~\citep{han2023wisdom}.
In this paper, we will use the term ``informed majority decision''. 


However, all equilibria of such one-round majority games in previous work are quite complicated and hard to follow. For example, the seminal paper by~\citet{Austen96:Information} shows that, under majority voting, the natural and intuitive behavior of {\em informative voting}, in which every voter votes for his/her signal, and {\em sincere voting}, in which every voter votes as if he/she is making an individual decision and vote for his/her preferred alternative in expectation, are not Nash equilibria in most situations.
The equilibrium constructed by~\citet{Feddersen97:Voting} requires the computation of a complicated Riemann integral. The equilibrium constructed by~\citet{han2023wisdom} involves carefully calculated mixed strategies. 


In analyzing strategic behavior, we emphasize the importance of natural and intuitive equilibria. It is quite unlikely that real-world agents can calculate out and play to counter-intuitive mixed or asymmetric strategies, let alone play the same equilibrium. Therefore, these counter-intuitive strategy profiles fail to predict the outcome of strategy agents as they are intended for. Moreover, the simplicity of the mechanism is also a desirable property and a proxy of usability, as the theoretical guarantees of complex mechanisms may not translate into practice.


\begin{center}
{\bf Can we design {\em simple} mechanisms with {\em natural} equilibria to achieve the informed majority decision?}
\end{center}

The definition of a natural equilibrium depends on the scenarios and the purposes. In this paper, we identify the following criteria. Firstly, agents play pure/deterministic strategies. Mixed or randomized strategies are hard for agents to compute and implement and are rarely applied by real-world voters. Secondly, agents play symmetrically, i.e., agents with the same preferences should play the same strategy. Otherwise, the mapping from similar agents to diverse strategies is unclear and unincentivized, and the equilibrium is unlikely to predict the outcome. Moreover, agents can easily compute the strategies. Two examples of intuitive and natural strategies are informative voting and sincere voting, where the first simply follows the signal, and the second follows the preferences in expectation. 

Unfortunately, previous one-round voting mechanisms fail to have a natural equilibrium. 
Moreover, not only is a natural equilibrium not identified in these papers, but \textbf{one round voting game with the majority rule has NO pure symmetric equilibrium} under a large class of games (Proposition ~\ref{prop:unnatural}). The known equilibria are either randomized or asymmetric and are usually hard to compute. 


\begin{ex}[Failure of informative voting being a Nash equilibrium~\citep{Austen96:Information}]
Consider the aforementioned COVID-19 policy problem. Agents are deciding whether to accept or reject a restrictive policy. The world state, a.k.a. the virality of the pandemic, can be of High risk or Low risk. Agents receive High or Low signals related to the world state. The signal distribution under different risk levels and the utility of agents under different states and outcomes are specified in Table~\ref{tbl:failure}. The signal distribution is biased, as an agent is always more likely to receive a High signal no matter what the risk level is.


\begin{table}[htbp]
    \begin{subtable}[t]{0.5\linewidth}
        \centering
\begin{tabular}{@{}ccc@{}}
\toprule
State of the World    & High Signal & Low Signal \\ \midrule
High Risk & 0.9         & 0.1        \\
Low Risk  & 0.6         & 0.4        \\ \bottomrule
\end{tabular}
\subcaption{Signal distributions.}
    \end{subtable}
    \begin{subtable}[t]{0.4\linewidth}
        \centering
\begin{tabular}{@{}ccc@{}}
\toprule
State    & Accept & Reject  \\ \midrule
High Risk & 1         & 0       \\
Low Risk  & 0         & 1       \\ \bottomrule
\end{tabular}
\subcaption{Agents' utilities.}
    \end{subtable}
\caption{An instance where informative voting fails to be an equilibrium. \label{tbl:failure}}
\end{table}
\vspace{-1em}

A strategic agent considering a unilateral deviation only cares about the case where exactly half of the other agents vote for Accept and the other half vote for Reject because otherwise her vote will not affect the outcome. If all other agents vote informatively, this implies exactly half of the other agents receive High signals. However, agents receive High signals with a high probability of 0.9 when the risk level is High. Therefore, the world state is much more likely to be Low risk when half of the agents receive Low signals, and a strategic agent will deviate and vote for Reject even if her signal is High. This directly implies that informative voting fails to be a Nash equilibrium. $\hfill \qed$

\end{ex}

Another line of work designs mechanisms to achieve the informed majority decision via simple equilibria, yet the mechanism structures are sophisticated. \citet{schoenebeck21wisdom} propose an information-aggregation mechanism that incentivizes truthful reporting and reaches the informed majority decision with a high probability. However, the mechanism leverages the ``surprisingly popular'' mechanism~\citep{prelec2017solution} and the median-voter theorem~\citep{Black48:Rationale}, 
resulting in a structure that is quite sophisticated and may not be trusted by the people. Moreover, The mechanism elicits a distribution from each agents, which requires infinite bits to be represented. This further increases the difficulty of implementing the mechanism. 

\subsection{Our Contribution}

In this paper, we provide a solution to the research question of simple mechanism design---\emph{two-round voting}, which has a polling stage (first round) and a voting stage (second round). 
After the first round but before the second, the agents can observe the number of votes cast for each alternative in the first round, and the majority vote in the second round completely determines the winner. 

For a sharp contrast, we first show that the one-round voting game with the majority rule has NO pure and symmetric equilibrium under a large class of games (Proposition~\ref{prop:unnatural}), which also rules out the natural informative voting and sincere voting. We also illustrate the existing equilibria which either mixed or asymmetric, making them difficult for agents to calculate and coordinate. 

On the other hand, the two-round voting mechanism is a simple and powerful mechanism with natural equilibria.
Firstly, we prove that the two-round voting mechanism indeed achieves good decisions with strategic agents, as every $\varepsilon$-approximate 
strong Bayes Nash equilibrium, in which no group of agents has incentives to deviate, leads to the informed majority decision with high probability (Theorem~\ref{thm:eqtoacc}). The probability converges to 1 as $\varepsilon$ goes to 0. 
Secondly, the two-round voting mechanism has the merit of a simple mechanism in both structure-wise and equilibrium-wise. It has a simple structure that involves simply conducting a poll before the actual vote. Such a polling-voting structure has been widely applied in real-world political elections. 

Moreover, we show a natural 
$\varepsilon$-approximate strong Bayes Nash equilibrium that leads to the informed majority decision (Theorem~\ref{thm:sincere}).  As the number of voters increases, $\varepsilon$ converges to 0. The equilibrium combines the most natural and classical informative voting and sincere voting. In this equilibrium, agents whose preferred alternative changes with the world state vote informatively (vote for their signal) in the first round, and vote sincerely (vote for the alternative they preferred in expectation conditioned on the first-round outcome) in the second round. The ``informative+sincere'' equilibrium follows the intuition of sharing information and choosing the (expected) preferred alternative. 
We further discover that the ``informative+sincere'' equilibrium belongs to a class of ''informative + threshold'' equilibria (Theorem~\ref{thm:threshold}), in which agents vote informatively in the first round and vote for the alternative whose first-round vote exceeds a threshold. The ''informative + threshold'' class derives more natural equilibria, such as the ''informative + surprsingly popular'' equilibrium in which agents vote for the ``surprsingly popular'' alternative in the second round.

Furthermore, we show with mild assumptions that, every one-round equilibrium gives a two-round equilibrium that is not more ``complicated'' (Theorem~\ref{thm:one_and_two}). Specifically, agents in the two-round voting mechanism can ignore the polling stage and directly play the one-round equilibrium in the second round. 
Therefore, the two-round voting mechanism provides natural equilibrium in every instance including those where one-round voting fails to have a natural solution, and it can reach an informed majority decision whenever one-round voting can. 

We also test the one-round voting and the two-round voting with synthetic experiments where we apply generate AI agents to simulate voters. When the signal distribution is biased, the AI agents powered by the reasoning model reach the informed majority decision more often in the two-round voting than in the one-round voting. We also observe behaviors such as Bayesian update and vote switch (in the second round). Our experiments provide insights into future applications for AI-augmented or proxy votes in the future and serve as references for real-world experiments.

\subsection{Related Work and Discussion}
The study of the binary voting game with imperfectly informed agents originates from the famous Condorcet Jury Theorem~\citep{Condorcet1785:Essai}, which shows the strength of voting when seeking the truth. Condorcet showed that the majority vote can reveal the ``correct alternative'' with probability converging to one as the number of voters increases, conditioned on (1) agents honestly reflecting their information in the votes (informative voting), and (2) the probability of an agent getting the correct information $p > 0.5$. 

The seminal paper by~\citet{Austen96:Information} is the first to study this problem in a game-theoretical setting. They show that informative voting is not compatible with the strategic behavior of voters. As a consequence, a large literature focuses on the existence of non-informative equilibrium and its effect on making a good decision. \citet{Wit98:Rational}, \citet{Feddersen1998:innocent}, and \citet{Myerson98:Extended} discovers the existence of a mixed strategy equilibrium that reveals the ground truth under different settings. \citet{Feddersen97:Voting} propose a unique class of equilibria that reach the informed majority decision. \citet{han2023wisdom} reveal the equivalence between a strategy profile being an equilibrium and reaching a good decision. None of these works guarantee the existence of a ``simple'' equilibrium. 

Multiple works go ``beyond'' the one-round voting game, investigating and proposing more sophisticated mechanisms to aggregate information. \citet{schoenebeck21wisdom} design an information aggregation mechanism that incentivizes informative voting and leads to an informed majority decision with high probability. Their mechanism leverages the ``surprisingly popular mechanism'' and the median-voter theorem to incentivize informative votes. \citet{Coughlan2000:-In-Defense} shows that when agent preferences are restricted to similar, adding an communication phase before voting leads to a good decision. Their result does not extend to scenarios where agents have diverse preferences as in our setting because they adopted a different solution concept (Bayesian-Nash equilibrium) and their techniques are very different from ours. \citet{Morgan2008Polls} study a two-round decision-making game in which voters vote in the first-round poll and the individual decision maker chooses the decision in the second round. In their model, agents cannot make a direct decision but only affect the decision-maker by their reported signals.
\citet{Dekel00:Sequential} study the voting game in which agents publicly declare their vote in a predetermined order. They show that a strategy profile is an equilibrium in sequential voting if and only if a corresponding profile is an equilibrium in one-round voting. This implies that sequential voting may not have a simple equilibrium. Both papers are very different from ours in the model and the results.  
\citet{Chopra2004Knowledge} studies the necessary knowledge for agents to strategize in the voting problems. 
Other directions of one-round voting with imperfect information include  dependent signals~\citep{Nitzan84:Significance,Shapley84:Optimizing,Kaniovski10:Aggregation}, agents with different competencies~\citep{Nitzan80:Investment,Gradstein87:Organizational,Ben11:Condorcet}, voting with more than two alternatives~\citep{Young88:Condorcet,Goertz14:Condorcet}, and empirical studies~\citep{Ignacio2014hypo,Battaglini2010swing}. 

Our paper also aligns with the idea that communication in voting improves aggregation, especially~\citet{Geraldi2007deliberative} which shows that communication before voting helps stabilize the voting outcome. Their results show that equilibrium outcomes in “deliberation+different voting rules” are the same. \citet{Goeree2011experimental} further show empirically that these outcomes resemble those in one-round informative voting. They don’t examine whether outcomes lead to good decisions, while we show that all equilibria in the two-round voting lead to good decisions. 

Our paper is also related to strategic information transmission where information communication occurs between informed experts and an uninformed decision maker~\citep{crawford1982strategic,battaglini2004policy,Morgan2008Polls,austen1993interested,battaglini2002multiple}. Divergent preferences between experts and decision-makers usually lead to information loss. However, even when experts strategically report their information, the decision maker can still collect enough information to make an optimal decision~\citep{baharad2012beyond,Morgan2008Polls}.

}

\section{Preliminaries}
\label{sec:prelim}
{In this section, we introduce the information structure and the one-round voting game model adopted from previous work~\citep{schoenebeck21wisdom,han2023wisdom}. 

\paragraph{Alternatives and World States.} $\ag$ agents vote for two alternatives $\bA$ and $\bR$. Two possible world states $\Wosset = \{L, H\}$ reflect the overall tendency of the preferences, 
where $\bA$ is more preferred in $H$, and $\bR$ is more preferred in $L$. We use $k$ to denote a generic world state. The world state cannot be observed directly by the agents. Let $P_{H} = \Pr[\Wosrv = H]$ and $P_{L} = \Pr[\Wosrv = L]$ be the common prior of the world states. We assume $P_H > 0$ and $P_L > 0$.

\paragraph{Private Signals.} Each agent receives a signal in $\Sigset = \{l, h\}$ related to the world state. We use $\sig$ to denote a generic signal and $\Sigrv_\sag$ to denote the random variable representing the signal that agent $\sag$ receives. We assume the signals are independent and have identical distributions conditioned on the world state. $P_{\sig\wos} = \Pr[\Sigrv_{\sag} = \sig \mid \Wosrv = \wos]$ is the probability that an agent receives signal $\sig$ under world state $\wos$. The signal distributions $((P_{hH}, P_{lH}), (P_{hL}, P_{lL}))$ are common knowledge.  The signals are positively correlated to the world states. We assume $P_{hH} > P_{hL}$ and $P_{lH} < P_{lL}$. On the other hand, we allow biased signals and DO NOT assume $P_{hH} > P_{lH}$ or $P_{hL} < P_{lL}$, which is a stronger assumption. 

\begin{figure*}[htp]
    \centering
    \includegraphics[width = 0.99\linewidth]{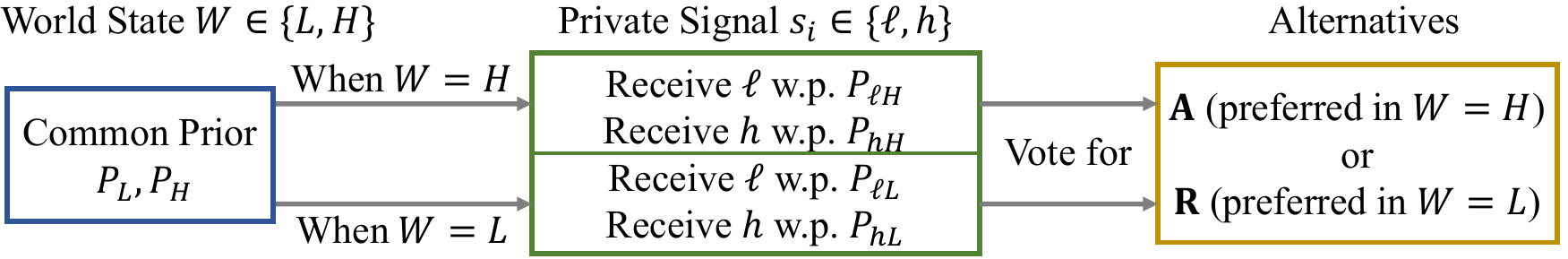}
    \caption{The information structure of the world state and private signals.}
    \label{fig:signal}
\end{figure*}

\paragraph{Utility and Types of Agents.} Each agent $\sag$ has a utility function on the true world state and the outcome of the vote. 
Here we assume agents have 0-1 utility functions to convey the main idea of our work while hiding much of the complexity, 
Formally, we have  $\vt_{\sag}: \Wosset \times \{\bA, \bR\}\to \{0,1\}$. The extension to the setting with general integer utility functions is in Section~\ref{sec:general_util}. 

Agents can be categorized into three types by their preferences. 
{\em Predetermined} agents always prefer the same alternative, and {\em contingent} agents have preferences depending on the world state. Predetermined agents can be further categorized into {\em friendly} and {\em unfriendly} agents based on the alternative they prefer. The utility function of each type of agents is in Table~\ref{tab:util}. 

\noindent\begin{minipage}{\linewidth}
\vspace{0.3cm}
\centering
\begin{tabular}{@{}ccccc@{}}
\toprule
Type         & $\vt_i(H,\bA)$         & $\vt_i(L,\bA)$         & $\vt_i(H,\bR)$         & $\vt_i(L,\bR)$         \\ \midrule
Friendly & 1          & 1           & 0           & 0           \\
Unfriendly & 0           & 0           & 1           & 1          \\
Contingent & 1           & 0           & 0          & 1           \\ \bottomrule
\end{tabular}
\captionof{table}{Utility of agents.\label{tab:util}}
\end{minipage}


Let $F, U, C$ be the set of three types of agents respectively, and $\agf, \agu$, and $\agc$ be the approximated fraction of each type of agent. Formally, given $\ag$ agents, $\tf = \lfloor \agf\cdot \ag \rfloor$ is the number of friendly agents, $\tu = \lfloor \agu\cdot \ag \rfloor$ is the number of unfriendly agents, and $\tc = \ag - \tf - \tu$ is the number of contingent agents.  $\agf, \agu$, and $\agc$ are common knowledge and do not depend on $\ag$.

\paragraph{Informed Majority Decision.}
The goal of the voting is to output the {\em informed majority decision}, the alternative favored by the majority given the world state is known to all the agents. We assume that neither $\agf$ nor $\agu$ is larger than 0.5. Otherwise, friendly or unfriendly agents can dominate the voting outcome, and the alternative favored by the majority will be independent of the world state. Therefore, in the rest of this paper, $\bA$ is the informed majority decision when the world state is $H$, and $\bR$ is the informed majority decision when the world state is $L$. 

\begin{ex}[Information Structure]
    \label{ex:setting} Consider the COVID-19 policy-making scenario. Suppose $\ag = 20$ voters are deciding whether to accept ($\bA$) or reject ($\bR$) the policy.World state $H$ represents a viral pandemic, and world state $L$ represents the opposite. The common prior on the world state is $P_H = 0.6$, and $P_L = 0.4$. 
    The signals are noisy information about the world state. Suppose the community exaggerates the pandemic, and the signal is biased. The distribution is $P_{hH} = 0.8$, $P_{lH} = 0.2$, $P_{hL} = 0.6$, and $P_{lL} = 0.4$. An agent receives an $h$ signal with probability $0.8$ when the world state is $H$ and probability $0.6$ when the state is $L$. 

    Among 20 agents, there are $5$ friendly agents, $6$ unfriendly agents, and $9$ contingent agents. Neither friendly nor unfriendly agents consist of the majority. Therefore, the informed majority decision is $\bA$ when the world state is $H$ and $\bR$ when the world state is $L$. \hfill\qed

\end{ex}

\paragraph{Sequence of Environments.}
An \textinst{} $\instance$ is a set of parameters of a voting game, which contains the agent number $\ag$, the world state prior distribution $(P_L, P_H)$, the signal distributions $(P_{hH}, P_{lH})$ and $(P_{hL}, P_{lL})$,  and the approximated fraction of each type of agents $(\agf, \agu, \agc)$. 

In this paper, we study the problem asymptotically. Therefore, we define a sequence of \textinst{}s in which the number of voters increases while most other parameters remain the same. 
Let $\{\instance_\ag\}_{\ag=1}^{\infty}$ (or $\{\instance_\ag\}$ for short) be a sequence of \textinst{}s, where each $\instance_\ag$ is an \textinst{} of $\ag$ agents.
The \textinst{}s in a sequence share the same prior distributions, signal distributions, and approximated fractions of each type.

\subsection{One round voting}

\paragraph{One-round voting.} In the one-round voting game, every agent casts a vote for either $\bA$ or $\bR$. The alternative getting more than a $0.5$ fraction of votes becomes the winner. 

\paragraph{Strategy in the one-round voting game.} We use $\stgbar$ to denote the strategy of an agent in the one-round voting game. $\stgbar$ is a mapping from an agent's signal to a probability distribution on $\{\bA, \bR\}$ representing her action. 
A strategy is pure if the vote is deterministic. A strategy profile $\stgpbar$ is the vector of strategies of all agents. A strategy profile is pure if it contains only pure strategies and symmetric if all agents with equal utility function play the same strategy. When agents have 0-1 utilities, symmetricity is equivalent to agents of the same type playing the same strategy. 

\begin{definition}[Informative Voting]
    An agent votes informatively if he/she always votes for $\bA$ when his/her signal is $h$ and always votes for $\bR$ when his/her signal is $l$.
\end{definition}

\paragraph{Expected utility} Let $\lp_{\wos}^{\bA}(\stgpbar)$ ($\lp_{\wos}^{\bR}(\stgpbar)$, respectively) be the (ex-ante, before agents receiving their signals) probability that $\bA$ ($\bR$, respectively) becomes the winner in the one round voting when the world state is $\wos$.
The expected utility of an agent $\sag$ exclusively depends on $\lp_{\wos}^{\bA}(\stgpbar)$ and $\lp_{\wos}^{\bR}(\stgpbar)$:
\begin{align*}
    \ut_{\sag}(\stgpbar) = &\ P_{L}(\lp_{L}^{\bA}(\stgpbar)\cdot\vt_{\sag}(L, \bA) + \lp_{L}^{\bR}(\stgpbar)\cdot \vt_{\sag}(L, \bR))\\
    &\ +  P_{H}(\lp_{H}^{\bA}(\stgpbar)\cdot\vt_{\sag}(H, \bA) + \lp_{H}^{\bR}(\stgpbar)\cdot \vt_{\sag}(H, \bR)). 
\end{align*}

\paragraph{Sequence of Strategy Profiles.}
We define a sequence of strategy profiles $\{\stgpbar_\ag\}_{\ag=1}^{\infty}$ on an \textinst{} sequence $\{\instance_\ag\}$, where for each $\ag$, $\stgpbar_{\ag}$ is a strategy profile of a one-round voting game with parameters in $\instance_{\ag}$. We do not have additional assumptions about the agents. Therefore, for different $\ag$, the strategies of agents can be drastically different without further specification. 

\paragraph{$\varepsilon$-strong Bayes Nash Equilibrium (BNE)} 
Due to the nature of the majority rule, deviation from a single agent can unlikely change the outcome. Therefore, we take group strategic behavior of agents in to consideration and adopt the solution concept of (approximated) strong Bayes Nash equilibrium.
In an $\varepsilon$-strong Bayes Nash Equilibrium, no group of agents can increase their utilities by more than $\varepsilon$ through deviation. A strategy profile $\stgpbar = (\stgbar_1, \stgbar_2,\cdots, \stgbar_\ag)$ is an {\em $\varepsilon$-strong Bayes Nash Equilibrium} ($\varepsilon$-strong BNE) if there does not exist a subset of agents $D$ (the {\em deviating group}) and a strategy profile $\stgpbar' = (\stgbar_1', \stgbar_2',\cdots, \stgbar_\ag')$ (the {\em deviating strategy profile}) such that
\begin{enumerate}
    \item $\stgbar_\sag = \stgbar_\sag'$ for all $\sag\not\in D$; 
    \item $\ut_{\sag}(\stgpbar') \ge \ut_{\sag}(\stgpbar)$ for all $\sag\in D$; and 
    \item there exists $\sag\in D$ such that $\ut_{\sag}(\stgpbar') > \ut_{\sag}(\stgpbar) +\varepsilon$. 
\end{enumerate}

By definition, when $\varepsilon=0$, the equilibrium is a strong Bayes Nash Equilibrium where no group of agents can strictly increase their utilities through deviation. 

\section{Equilibria in One-Round Voting}
\label{sec:one_round_unnatural}

\citet{han2023wisdom} show that a strategy profile where predetermined agents play their dominant strategies leads to the informed majority decision with high probability if and only if it is an $\varepsilon$-strong BNE.
This implies the informed majority decision is (approximately) reached for every equilibrium in the one-round voting game. 
However, the equilibria in the one-round voting game can be complicated and unnatural.
In this section, we show scenarios where natural strategy profiles fail to reach high fidelity (and thus are not approximate strong BNEs according to the if-and-only-if characterization) and how complex the existing equilibria are. 

Firstly, in games with only contingent agents  (so all agents have the same object), NO pure symmetric equilibria exist in the one round voting game in ANY environment with biased signals. 


\begin{prop}
    \label{prop:unnatural}
    Suppose $\agc = 1$. Then for any sequence of \textinst{}s $\{\instance_n\}$ such that $P_{hL} > 0.5$ or $P_{lH} > 0.5$, 
    no sequence pure symmetric strategy profiles $\{\stgpbar_\ag\}$ of the one-round voting game satisfies that $\stgpbar_{\ag}$ is a $\varepsilon$-strong BNE for every $\ag$ with $\varepsilon \to 0$.  
\end{prop}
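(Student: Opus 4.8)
The plan is to exploit the fact that, with $\agc = 1$, every pure symmetric profile of the one-round game has all $\ag$ agents playing one of only \emph{four} pure strategies: \emph{informative} ($h\mapsto\bA$, $l\mapsto\bR$), \emph{anti-informative} ($h\mapsto\bR$, $l\mapsto\bA$), \emph{constant} $\bA$, and \emph{constant} $\bR$. I would then invoke the characterization of \citet{han2023wisdom} recalled at the start of this section: since there are no predetermined agents here, its hypothesis (``predetermined agents play their dominant strategies'') is vacuously satisfied, so a sequence of profiles that is an $\varepsilon$-strong BNE with $\varepsilon\to 0$ must reach the informed majority decision with probability tending to $1$. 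Hence it suffices to show that \emph{no} sequence of pure symmetric profiles reaches the informed majority decision with high probability under the stated hypotheses; the negation of the equilibrium property then follows by contraposition.

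To analyze the four strategies I would use the law of large numbers: in world state $\wos$ the fraction of $\bA$-votes concentrates at $P_{h\wos}$ under informative voting and at $P_{l\wos}$ under anti-informative voting, while it is identically $1$ or $0$ under the two constant strategies. Recall the informed majority decision is $\bA$ in $H$ and $\bR$ in $L$. Constant $\bA$ elects $\bA$ in both states and so fails in $L$, and constant $\bR$ fails symmetrically in $H$. Informative voting elects the informed majority decision in both states precisely when $P_{hH}>0.5$ and $P_{hL}<0.5$; the hypothesis $P_{hL}>0.5$ or $P_{lH}>0.5$ (the latter being equivalent to $P_{hH}<0.5$) negates exactly this conjunction, so informative voting elects the wrong alternative with high probability in at least one state. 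Anti-informative voting would need $P_{lH}>0.5$ and $P_{lL}<0.5$, which is impossible because positive correlation forces $P_{lH}<P_{lL}$; thus it also fails in at least one state, independently of the biased-signal hypothesis.

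In every case the state in which the strategy errs carries positive prior probability, so the probability of \emph{not} reaching the informed majority decision is at least $\min(P_H,P_L)-o(1)$, and this lower bound is uniform over the four strategies (there are finitely many fixed signal distributions, so the concentration error and the vanishing tie probability are uniform). Consequently, for any sequence of pure symmetric profiles $\{\stgpbar_\ag\}$, whatever strategy each $\stgpbar_\ag$ happens to use, the probability of reaching the informed majority decision is bounded away from $1$ for all large $\ag$. Therefore the sequence does not reach the informed majority decision with high probability, and by the \citet{han2023wisdom} characterization it cannot be an $\varepsilon$-strong BNE with $\varepsilon\to 0$.

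The technically essential (though elementary) step is the exhaustive case analysis, where the point to get right is that the biased-signal hypothesis $P_{hL}>0.5$ or $P_{lH}>0.5$ is \emph{exactly} what breaks informative voting, while positive correlation alone already breaks anti-informative voting. I expect the main obstacle to be bookkeeping rather than depth: confirming that the failing state always has positive probability and that the $o(1)$ concentration and tie-probability bounds hold uniformly across the finitely many strategies, and double-checking that the direction of the \citet{han2023wisdom} characterization I invoke (namely, $\varepsilon$-strong BNE with $\varepsilon\to 0$ forces the informed majority decision) indeed applies in the degenerate regime with no predetermined agents.
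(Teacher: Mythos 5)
Your proposal is correct and takes essentially the same route as the paper's proof: enumerate the four pure symmetric profiles, show each fails to reach the informed majority decision with high probability (constant profiles fail with probability $P_H$ or $P_L$, informative voting fails by concentration under the biased-signal hypothesis, anti-informative voting fails regardless), and conclude via the \citet{han2023wisdom} if-and-only-if characterization that none can be an $\varepsilon$-strong BNE with $\varepsilon\to 0$. Your treatment is marginally more careful than the paper's (handling both hypothesis cases directly rather than WLOG, breaking anti-informative voting via $P_{lH}<P_{lL}$ alone, and noting uniformity across the finitely many strategies so the sequence cannot escape by switching strategies), but the decomposition and the key invocation are the same.
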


\begin{proof}


Without loss of generality, we consider the case that $P_{hL} > 0.5$, and the reasoning for the case that $P_{lH} > 0.5$ is similar. 
There are in total four pure symmetric strategy profiles: all agents always vote for $\bA$, always vote for $\bR$, vote informatively, and vote against their private signals (vote for $\bA$ when receiving signal $\ell$ and for $\bR$ when receiving signal $h$). Always voting for an alternative has a probability of $P_H$ or $P_L$ to miss the informed majority decision and cannot be an equilibrium. 
When $P_{hL} > 0.5$, agents are more likely to receive $h$ than $\ell$ on world state $L$. Therefore, by applying the Hoeffding inequality, informative voting does not lead to the informed majority decision in the world state $L$ with high probability and cannot be an equilibrium. Similarly, voting against signals cannot be an equilibrium as $P_{hH}> 0.5$.
\end{proof}

More characterizations that natural strategies --- for example, sincere voting ---  fail to be equilibria in one-round voting are in Appendix~\ref{apx:one_round}. 

In the \textinst{}s where pure symmetric strategies fail to be equilibria, reaching the informed majority decision requires agents to artificially ``shift'' the probabilities in the strategy. 
For example, consider a biased setting where the common prior is $P_H = P_L = 0.5$, and the signal distribution is $P_{hH}=0.9, P_{lH}=0.1, P_{hL}=0.7$, and $P_{lL}=0.3$. An equilibrium under this setting is a symmetric mixed profile where agents with signal $l$ always vote for $\bR$, and agents with $h$ vote for $\bA$ with probability 0.6. 
In this strategy profile, the expected fraction of $\bA$ votes in the world $H$ is $0.6\times 0.9=0.54>0.5$ and the expected fraction of $\bA$ votes under world $L$ is $0.6\times 0.7=0.42<0.5$.
Alternatively, instead of playing a mixed strategy, agents can play a pure asymmetric profile where $40\%$ of agents always vote for $\bR$. 

These ``successful'' strategy profiles, however, are often too complex for agents to agree on.
For the mixed strategy profile, even if agents are clever enough to be aware that a probability shift needs to be done, different agents may choose different probability shifts, which may still lead to the undesirable outcome.
For the pure asymmetric profile, without an intrinsic incentive, it is difficult and unclear for identical agents to coordinate who casts the ``opposite vote'' $\bR$ when receiving signal $h$.


\section{Two-round Voting and Equilibria}
\label{sec:model}
{
In this section, we introduce our two-round voting mechanism. One-round and two-round voting share the same \textinst{}, including the number of agents, distributions, the fraction of each type, and utilities. Their difference lies in the game form and the strategies. 

\paragraph{Two-round Voting.} Agents play a two-round anonymous voting game where the winner is completely determined by the second round. The five steps are shown in Algorithm~\ref{alg:two-round}. 
\begin{algorithm}[htbp]
\caption{Two-round voting mechansim}\label{alg:two-round}
    \begin{algorithmic}[1]
    \STATE All the agents receive their signals. 
    \STATE \textbf{First round} (Polling stage): agents cast the first round vote for $\bA$ or $\bR$. 
    \STATE Agents observe the number of votes for $\bA$ in the first round denoted by $\hst \in \{0, 1, \cdots, \ag\}$.
    \STATE \textbf{Second Round} (Voting stage): agents cast the second round vote for $\bA$ or $\bR$. 
    \STATE The alternative that gets more than a $0.5$ fraction of votes in the second round becomes the winner.
\end{algorithmic}
\end{algorithm}

\paragraph{Strategy.}
A strategy in a two-round voting game  $\stg_{\sag}$ is a pair $(\stgone_{\sag}, \stgtwo_{\sag})$. The first-round strategy $\stgone_{\sag}$ is a mapping from agent $\sag$'s signal to a distribution on $\{\bA, \bR\}$ denoting her action in the first round. The second-round strategy $\stgtwo_{\sag}$ is a mapping from agent $\sag$'s signal AND the result of the first-round vote $\hst$ to a distribution on $\{\bA, \bR\}$ denoting her action in the second round. 
An agent $\sag$ votes {\em informatively} in the first round if he/she always votes for $\bA$ when his/her signal is $h$ and always votes for $\bR$ when his/her signal is $l$.
A strategy profile $\stgp$ is the vector of strategies of all agents. 
The definition of pure and symmetric strategy profile follows that in the one-round voting. 

\paragraph{Sequence of Strategy Profile}
We define a sequence of two-round voting strategy profiles $\{\stgp_\ag\}_{\ag=1}^{\infty}$ on an \textinst{} sequence $\{\instance_\ag\}$, where for each $\ag$, $\stgp_{\ag}$ is a strategy profile of a two-round voting game with parameters in $\instance_{\ag}$. 


\paragraph{Expected Utility and Fidelity.} 
Given a strategy profile $\stgp$, let $\lp_{\wos}^{\bA}(\stgp)$ ($\lp_{\wos}^{\bR}(\stgp)$, respectively) be the---ex-ante, before agents receiving their signals---probability that $\bA$ ($\bR$, respectively) becomes the winner when the world state is $\wos$. 

The expected utility in the two-round voting is in the same form as that in the one-round voting.
\begin{align*}
    \ut_{\sag}(\stgp) = &\ P_{L}(\lp_{L}^{\bA}(\stgp)\cdot\vt_{\sag}(L, \bA) + \lp_{L}^{\bR}(\stgp)\cdot \vt_{\sag}(L, \bR))\\
    &\ +  P_{H}(\lp_{H}^{\bA}(\stgp)\cdot\vt_{\sag}(H, \bA) + \lp_{H}^{\bR}(\stgp)\cdot \vt_{\sag}(H, \bR)). 
\end{align*}

{\em Fidelity} is the likelihood that the informed majority decision is reached. In the two-round voting game, the fidelity of a strategy profile $\stgp$ is as follows. 
\begin{align*}
    \acc(\stgp) = &\ P_{L}\cdot \lp_{L}^{\bR}(\stgp) + P_{H} \cdot \lp_{H}^{\bA}(\stgp). 
\end{align*}

For contingent agents with 0-1 utility, their expected utility function $\ut(\stgp) = \acc(\stgp)$. 




An $\varepsilon$-strong Bayes Nash Equilibrium in the two-round voting has the same definition as that in the one-round voting except that the components (strategies, expected utilities) are those in the two-round voting game. 



\paragraph{Some Remarks.}
The game of two-round voting is not exactly a standard \emph{Bayesian game}, as the strategy of an agent depends not only on the signal received but also on the outcome of the first round.
Our definitions of strategies, utility functions, and equilibria are straightforward and natural extensions of the standard Bayesian games.

Another natural formulation for the two round-voting is an \emph{extensive-form game}.
We believe this is an interesting direction for future work.
However, standard equilibrium solution concepts such as \emph{subgame perfect Nash equilibrium} and \emph{perfect Bayesian equilibrium} fail to capture the feature of group deviations.
The first step in this future direction is to propose a reasonable equilibrium concept that extends strong Bayes Nash equilibrium to extensive-form games.}
{
\subsection{Equilibria in Two Round Voting}
We show that the two-round voting mechanism is indeed a simple mechanism with a natural equilibrium and leads to the informed majority decision. 
Our first result shows that every $\varepsilon$-strong BNE in the two-round voting mechanisms with $\varepsilon$ converges to 0 has fidelity converging to 1. This guarantees that the mechanism can always achieve an informed majority decision under the strategic behavior of agents. 


\begin{thm}[{\bf All $\varepsilon$-BNE are good}{}]
    \label{thm:eqtoacc}
Let $\{\stgp_{\ag}\}$ be a sequence of profiles in the two-round voting game. If for any $\ag$, $\stgp_{\ag}$ is an $\varepsilon$-strong BNE with $\varepsilon = o(1)$, then $\lim_{\ag\to\infty} \acc(\stgp_\ag) = 1$.
\end{thm}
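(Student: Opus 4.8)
The plan is to argue by contradiction. Suppose the conclusion fails; then along a subsequence we have $\acc(\stgp_\ag)\le 1-\delta$ for some constant $\delta>0$. Since $\acc(\stgp)=P_H\lp_H^{\bA}+P_L\lp_L^{\bR}$, a constant deficit forces, along a further subsequence, one of the two ``wrong-way'' events to have constant probability: either $\lp_H^{\bR}\ge\delta'$ (the informed majority decision $\bA$ loses in state $H$) or $\lp_L^{\bA}\ge\delta'$ (it loses in state $L$), for some constant $\delta'>0$. I would treat the first case; the second is symmetric under the swap $(\bA,H,F)\leftrightarrow(\bR,L,U)$. The contradiction will come from exhibiting a group deviation that raises some coalition member's utility by a constant, which for large $\ag$ exceeds the vanishing $\varepsilon$ and violates the $\varepsilon$-strong BNE condition.

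The coalition I would use is $D=C\cup F$, the contingent and friendly agents. Since the informed majority decision genuinely depends on the state we have $\agu<1/2$, so $|D|=(1-\agu)\ag>\ag/2$ is a strict majority and a common second-round vote by $D$ determines the winner. The point of choosing this coalition is that $C$ and $F$ have \emph{aligned} interests in state $H$: both strictly prefer $\bA$ to win there (contingent because $H$ makes $\bA$ the informed majority decision, friendly unconditionally). The deviation is: in the first round every member of $D$ votes informatively, and in the second round every member of $D$ votes $\bA$ when the observed count $\hst$ exceeds a threshold $\tau_\ag$, and otherwise reverts to its play under $\stgp_\ag$.

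The analysis rests on two concentration facts via Hoeffding's inequality. First, once $D$ votes informatively in the first round, its contribution to $\hst$ concentrates around $(\agc+\agf)P_{h\wos}\ag$, differing across states by $(\agc+\agf)(P_{hH}-P_{hL})\ag$ because $P_{hH}>P_{hL}$. The only uncontrolled contribution is from the non-coalition unfriendly agents; even if they vote so as to mask the signal, they can shift this gap by at most $\agu(P_{hH}-P_{hL})\ag$, leaving a net separation of at least $(1-2\agu)(P_{hH}-P_{hL})\ag=\Theta(\ag)$ since $\agu<1/2$. As the standard deviations are $O(\sqrt\ag)$, a threshold $\tau_\ag$ placed in this gap makes $\{\hst\ge\tau_\ag\}$ agree with $\{\Wosrv=H\}$ up to probability $o(1)$. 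Second, conditioned on state $H$, all of $D$ voting $\bA$ yields more than $\ag/2$ votes for $\bA$, so $\bA$ wins with probability $1-o(1)$, pushing the state-$H$ term of every coalition member's utility to its maximum $P_H$; for a contingent member, whose utility is exactly $\acc(\stgp)$, this is a gain of at least $P_H\delta'-o(1)$, provided the state-$L$ term does not fall.

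The step I expect to be the main obstacle is exactly that last proviso: showing the deviation does not \emph{hurt} any coalition member through its effect on state $L$, where the preferences of $C$ and $F$ diverge. Making the first round informative also perturbs the distribution of $\hst$ in state $L$, which can change the second-round behavior both of the non-deviating unfriendly agents and of the coalition itself, so the threshold rule must be designed so that in state $L$ the coalition reproduces an outcome-equivalent version of its original play, leaving $\lp_L^{\bA}$ and $\lp_L^{\bR}$ unchanged up to $o(1)$; otherwise either the contingent agents (who want $\bR$ in $L$) or the friendly agents (who want $\bA$ in $L$) are made worse off. Handling this cleanly may require a coupling argument showing the state-$L$ outcome is stable under the perturbation, or, more robustly, selecting between the coalitions $C\cup F$ and $C\cup U$ according to which wrong-way event occurs, so that the \emph{targeted} state is always one on which the chosen coalition's members agree, leaving the off state untouched. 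The remaining work—bounding the $o(1)$ error terms and verifying the strict-improvement-by-more-than-$\varepsilon$ clause for large $\ag$—is routine given the constant-size gains established above.
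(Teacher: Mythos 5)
Your skeleton---argue by contradiction, form a coalition of the contingent agents with one predetermined type, vote informatively in the first round, use a threshold rule in the second, and invoke Hoeffding (including the correct observation that the non-deviating type can mask the signal by at most $\agu(P_{hH}-P_{hL})\ag$, leaving a $\Theta(\ag)$ separation)---is structurally very close to the paper's proof. But the step you yourself flag as the main obstacle is a genuine gap, and neither of your suggested fixes closes it. Having the coalition ``revert to its play under $\stgp_\ag$'' below the threshold does \emph{not} leave the state-$L$ outcome untouched: the deviation changes the distribution of the first-round count $\hst$ in both states, and the second-round strategies of the non-deviating unfriendly agents (and of the coalition itself once it reverts) are arbitrary functions of $\hst$, so $\lp_L^{\bA}$ can move from $1$ to $0$ or vice versa; hence any ``coupling/stability'' claim is false in general. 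Concretely, suppose under $\stgp_\ag$ that $\lp_H^{\bR}=\delta'$ but also $\lp_L^{\bA}=1$, and that at the new typical sub-threshold values of $\hst$ in state $L$ the old second-round rules elect $\bR$. Then every friendly agent loses $P_L$ on the $L$-term against a gain of only $P_H\delta'$ on the $H$-term; when $P_L>P_H\delta'$ the coalition $C\cup F$ violates the ``all deviators weakly improve'' clause, and no contradiction is obtained. Your fallback---choosing $C\cup F$ versus $C\cup U$ according to which wrong-way event occurs---does not help, because both wrong-way events can occur simultaneously, and the selection rule by itself does nothing to preserve the off-state outcome; preservation is exactly what is missing.

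The paper closes this hole with a different device, Lemma~\ref{lem:positive}: for $0$-$1$ utilities the utility changes of friendly and unfriendly agents are exact negatives of each other, so for \emph{any} deviation whose fidelity tends to $1$, at least one predetermined type weakly gains, and which type it is depends only on $\stgp_\ag$ (essentially on the comparison of $P_H\lp_H^{\bR}(\stgp_\ag)$ with $P_L\lp_L^{\bA}(\stgp_\ag)$), not on the deviation. The coalition is then chosen by this criterion---not by which wrong-way event occurs---and the deviating profile fixes \emph{both} states, so no off-state preservation is needed; the contingent members, whose utility equals the fidelity, supply the strict improvement beyond $\varepsilon$. (Note that in my example above this criterion selects $C\cup U$, the opposite of your rule.) Alternatively, your one-state-targeting plan could be salvaged, but it needs a construction you did not supply: for instance, use the public count $\hst$ as a correlation device, partitioning the sub-threshold values into two sets whose conditional probability in state $L$ matches $\lp_L^{\bA}(\stgp_\ag)$ up to $O(1/\sqrt{\ag})$ (possible since $\hst$ is a sum of independent Bernoullis, so each atom has mass $O(1/\sqrt{\ag})$), with the whole coalition voting unanimously for $\bA$ or $\bR$ according to which set $\hst$ falls in; a unanimous majority coalition then reproduces the state-$L$ win probability regardless of the unfriendly agents' behavior. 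As written, however, the proposal does not prove its critical step.
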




Moreover, we show the existence of a natural equilibrium in the two-round mechanism. While the classical informative voting or sincere voting may not be an equilibrium in the one-round voting game, their combination becomes an equilibrium as follows. 

\paragraph{``Informative + Sincere'' voting} All the friendly agents always vote for $\bA$ in both rounds, and all the unfriendly agents always vote for $\bR$ in both rounds. All the contingent agents vote informative in the first round and vote sincerely based on their observation of the first round outcome. That is, conditioned on the first-round strategies and the first-round outcome $\hst$, a contingent agent $i$ with private signal $\sig$ updates his/her belief on the world state. Then the agent votes for $\bA$ if and only if he/she believes the world state is more likely to be $H$, i.e., $\Pr[H \mid \sig, \hst] \ge \frac12$. 

\begin{thm}[{\bf``Informative + sincere'' equilibrium}{}]
    \label{thm:sincere}
    For any \textinst{} sequence $\{\instance_{\ag}\}$, the sequence of the ``information + sincere'' two-round voting strategy profiles $\{\stgp^{\dag}_{\ag}\}$ satisfies that (1) $\lim_{\ag\to\infty}\acc(\stgp^{\dag}_{\ag}) = 1$,  and (2) for every $\ag$, $\stgp^{\dag}_{\ag}$ is a $\varepsilon$-strong BNE where $\varepsilon$ converges to 0 as $\ag$ goes to $\infty$. 
\end{thm}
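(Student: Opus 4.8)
The plan is to prove the two claims separately: fidelity by a concentration argument on the poll, and the equilibrium property by a vote-counting argument that caps the manipulation power of any coalition. For claim~(1), I would analyze the first-round count $\hst$ under $\stgp^{\dag}_{\ag}$. Friendly agents contribute $\bA$, unfriendly agents contribute $\bR$, and contingent agents vote informatively, so the expected $\bA$-fraction of the poll is $\agf+\agc P_{hH}$ in state $H$ and $\agf+\agc P_{hL}$ in state $L$. Because $P_{hH}>P_{hL}$, these means are separated by the constant $\agc(P_{hH}-P_{hL})$, and a Hoeffding bound gives $\hst/\ag$ concentrated around the state-dependent mean with error $e^{-\Theta(\ag)}$. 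The sincere rule $\Pr[H\mid\sig,\hst]\ge\frac12$ is a monotone threshold in $\hst$ lying strictly between the two means (the signal $\sig$ moves it by only $O(1)$ votes), so with high probability every contingent agent infers the true state and, in the second round, votes for the informed majority alternative. This gives $\bA$ a share tending to $\agf+\agc=1-\agu>\frac12$ in state $H$ and $\bR$ a share tending to $1-\agf>\frac12$ in state $L$, whence $\acc(\stgp^{\dag}_{\ag})\to1$.

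For claim~(2), claim~(1) already pins down the equilibrium utilities: they tend to $1$ for contingent agents, to $P_H$ for friendly, and to $P_L$ for unfriendly. A contingent agent's utility equals $\acc\le1$ for every profile, so any deviation raises it by at most $1-\acc(\stgp^{\dag}_{\ag})=o(1)$. The remaining task is to bound the gain of friendly and unfriendly deviators. Since $\lp_H^{\bA}(\stgp^{\dag}_{\ag})\to1$ and $\lp_L^{\bR}(\stgp^{\dag}_{\ag})\to1$, a friendly deviator can gain a non-vanishing amount only by pushing $\lp_L^{\bA}(\stgp')$ away from $0$ (making $\bA$ win in state $L$), and an unfriendly deviator only by pushing $\lp_H^{\bR}(\stgp')$ away from $0$; I would write the two gains as $P_L\,\lp_L^{\bA}(\stgp')-o(1)$ and $P_H\,\lp_H^{\bR}(\stgp')-o(1)$.

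The \textbf{main obstacle} is to show that a coalition cannot make $\bA$ win in state $L$ (symmetrically $\bR$ in $H$) unless it contains $\Theta(\ag)$ agents who are strictly harmed, which would violate the requirement that every coalition member weakly benefits. The delicate point is the polling channel: by changing first-round votes a coalition can shift $\hst$ and mislead the honest contingent agents into voting the wrong way in round two. I would establish a vote-counting lemma: in state $L$ the honest contingent agents vote $\bA$ only when $\hst$ clears a threshold $\approx\ag(\agf+\agc\theta)$ with $\theta>P_{hL}$, yet the largest $\hst$ a coalition with type-counts $(d_F,d_U,d_C)$ can produce is $\agf\ag+\agc\ag\,P_{hL}+d_U+d_C(1-P_{hL})+O(\sqrt\ag)$; crossing the threshold thus forces $d_U+d_C(1-P_{hL})=\Theta(\ag)$. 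If the coalition does not mislead them, the honest majority $1-\agf>\frac12$ already elects $\bR$, so $\bA$ can win only if the coalition itself casts $\Theta(\ag)$ second-round $\bA$ votes, again forcing $d_U+d_C=\Theta(\ag)$. Either way a constant fraction of unfriendly or contingent agents must take part.

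To close, I would verify the incentives. Any outcome change strictly hurts contingent agents, whose payoff is $\acc$ and drops below $1-\Theta(1)$ once $\bA$ wins in $L$ (or $\bR$ in $H$) with positive probability, so none joins. Writing $c_L=\lp_L^{\bA}(\stgp')$ and $c_H=\lp_H^{\bR}(\stgp')$, the friendly and unfriendly gains are $\approx P_L c_L-P_H c_H$ and $\approx P_H c_H-P_L c_L$, which have opposite signs; hence friendly and unfriendly agents can never both weakly benefit. Consequently the $\Theta(\ag)$ unfriendly-or-contingent agents needed to elect $\bA$ in state $L$ are precisely those who are strictly harmed, contradicting the coalition conditions; so every admissible deviation has $\lp_L^{\bA}(\stgp')=o(1)$ and $\lp_H^{\bR}(\stgp')=o(1)$, bounding each deviator's gain by $o(1)$ and yielding the $\varepsilon$-strong BNE with $\varepsilon\to0$. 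I expect the vote-counting lemma---ruling out every way the poll can mislead honest contingent voters---to be the hardest part.
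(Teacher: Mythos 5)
Your proposal is correct in its essential structure and largely parallels the paper's proof. Part~(1) is identical: Hoeffding concentration of $\hst$ around the state-dependent means $\tf+\tc P_{hH}$ and $\tf+\tc P_{hL}$, plus the observation that the sincere cutoff sits $\Theta(\ag)$ votes inside both means (the paper isolates this as ``constant separability,'' Lemma~\ref{lem:sincere_is_threshold}, and derives Theorem~\ref{thm:sincere} as a special case of the general ``informative + threshold'' Theorem~\ref{thm:threshold}). Part~(2) also shares the paper's skeleton: contingent utility equals fidelity, and friendly/unfriendly gains are (up to the $o(1)$ baseline error) exact negatives $P_L c_L - P_H c_H$ and $P_H c_H - P_L c_L$, which is the paper's Lemma~\ref{lem:deviate}. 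The one genuine methodological difference is the last step. The paper disposes of single-type coalitions by monotonicity: under $\stgp^{\dag}$ friendly agents already vote $\bA$ with probability one in both rounds, so any friendly deviation can only lower $\hst$ (pushing threshold-playing contingent agents toward $\bR$) and only remove second-round $\bA$ votes, hence neither $\lp_H^{\bA}$ nor $\lp_L^{\bA}$ can increase. You instead prove a quantitative vote-counting bound on the largest $\hst$ a coalition of type-counts $(d_F,d_U,d_C)$ can produce. Both work; the paper's argument is leaner, while yours quantifies coalition power, which is exactly what lets you reason about mixed coalitions at all.

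One step, as written, does not hold and needs patching. The claim that the $\Theta(\ag)$ unfriendly-or-contingent agents needed to elect $\bA$ in state $L$ ``are precisely those who are strictly harmed'' is justified for contingent members (their payoff is fidelity) but not for unfriendly members: an unfriendly member's gain is $\approx P_H c_H - P_L c_L$, which is \emph{non-negative} whenever the same deviation also makes $\bR$ win in state $H$ with $c_H \ge (P_L/P_H)\,c_L - o(1)$. So electing $\bA$ in $L$ does not by itself harm the unfriendly deviators you need. To close this case you must invoke your own symmetric counting lemma---$c_H = \Theta(1)$ forces $\Theta(\ag)$ friendly-or-contingent members into the coalition---and then the exact-negation of the two predetermined gains: with both friendly and unfriendly members present and all weakly benefiting, both gains collapse to $o(1)$, so no member clears $\varepsilon$ and condition~(3) of the deviation fails. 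Relatedly, your statement that ``friendly and unfriendly agents can never both weakly benefit'' is too strong (they can, exactly when both gains vanish), but that degenerate case produces no violation. These patches use only tools already in your proposal, after which the argument is complete.
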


The rationale behind the ``informative + sincere'' strategy is straightforward---by observing the outcome of the first-round informative voting, every contingent agent knows the private information of all contingent agents anonymously, which informs them of the correct world state with high probability. Therefore, in the second round, they are nearly certain of their preferred alternative and vote accordingly.

While agents may not necessarily perform Bayesian updates in the votes, we show that the intuition of ''sharing information and getting updated'' derives a large class of ``informative + threshold'' equilibria. Therefore, as agents still follow this ``informative + threshold'' pattern, the informed majority can be reached. 

In an ``informative + threshold'' strategy profile, all the predetermined agents vote for their preferred alternative in both rounds; every contingent agent $i$ votes informatively in the first round and votes for $\bA$ in the second round if and only if the first round outcome $\hst$ exceeds some threshold $\hst^*_i$. For the ``informative + sincere'' voting, the threshold $\hst^*$ is determined by the constraint $\Pr[H \mid \sig, \hst] \ge \frac12$ as agents apply the Bayes Theorem to update their belief. 
The explicit form is
\begin{equation*}
    \hst^* = \tf + \frac{\log \frac{P_L}{P_H} + \tc \cdot \log \frac{P_{\ell L}}{P_{\ell H}}}{\log \frac{P_{hH}P_{\ell L}}{P_{\ell H}P_{hL}}}. 
\end{equation*}

\begin{ex}
    We calculate the ``informative + sincere'' strategy under the setting in Example~\ref{ex:setting}. In this \textinst{}, $\tf = 5$, $\tc = 9$, $P_H = 0.6$ $P_{hH} = 0.8$, and $P_{hL} = 0.6$. Therefore, $\hst^* = \frac{\log \frac23 + 9\times \log 2}{\log \frac83} \approx 10.95$. Therefore, the contingent voters who vote sincerely in the second round will vote for $\bA$ if and only if there are at least 11 votes for $\bA$ in the first round. \hfill\qed
\end{ex}

Additionally, ``informative + threshold'' class also includes more ``natural'' strategy profiles, such as the ``informative + surprisingly popular'' strategy profile.
\begin{ex}
    The ``informative + surprisingly popular'' strategy shares the similar idea with the ``surprisingly popular'' mechanism~\citep{prelec2017solution}. The threshold of every agent $i$ is the prior expectation on the first-round outcome $\hst$, which is $\hst^* = \tf + \tc \cdot (P_H \cdot P_{hH} + P_L\cdot P_{hL})$. This comes from that the $\tf$ friendly agents always vote for $\bA$, and the expected share of $\tc$ contingent agents voting for $\bA$ is $P_{hH}$ under world state $H$ and $P_{hL}$ under world state $L$.When the number of votes for $\bA$ exceeds the expectation, $\bA$ becomes the ``surprisingly popular'' alternative in the first round and is elected the winner in the second round. Otherwise, $\bR$ becomes the ``surprisingly popular'' alternative. \hfill\qed
\end{ex}

The ``informative + threshold'' strategy profiles share the merit the ``informative + sincere'' equilibrium has. Every agent votes deterministically. When agents with equal utility functions adopt the same threshold, they are also symmetric. Most importantly, every ``informative + threshold'' strategy profile in which all thresholds lie between the expected first-round outcome conditioned on world state $H$ and on world state $L$ is an equilibrium that leads to the informed majority decision. 

\begin{dfn}
    We say an sequence of ``informative + threshold'' strategies is constantly separable if there exists a $\ag_0 > 0$ and a constant $\delta  > 0$ such that for every $\ag \ge \ag_0$ and every agent $i$, $\tf + \tc\cdot P_{hL} + \delta \cdot \ag\le \hst_i^* \le \tf + \tc \cdot P_{hH} - \delta\cdot \ag$.
\end{dfn}
As illustrated in Figure~\ref{fig:naturaleq}, the expected votes for $\bA$ in the first round is $\tf + \tc \cdot P_{hH}$ under world state $H$ and $\tf + \tc \cdot P_{hL}$ under world state $L$. 

\begin{thm}[{\bf``Informative + threshold'' equilibrium}{}]
    \label{thm:threshold}
    Under any \textinst{} sequence $\{\instance_{\ag}\}$, every constantly separable sequence of profiles $\{\stgp^*\}$ satisfies that (1) $\lim_{\ag\to\infty}\acc(\stgp^{*}_{\ag}) = 1$,  and (2) for every $\ag$, $\stgp^{*}_{\ag}$ is a $\varepsilon$-strong BNE where $\varepsilon$ converges to 0 as $\ag$ goes to $\infty$. 
\end{thm}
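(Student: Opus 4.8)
The plan is to prove the two parts separately, deriving (1) first so that the error bound in (2) can be stated in terms of $1-\acc(\stgp^*_\ag)$. For Part (1), note that under the profile the first-round tally $\hst$ equals $\tf$ plus the number of contingent agents who draw signal $h$, since predetermined agents cast their fixed preferred alternative and contingent agents vote informatively. Conditioned on $H$ this is $\tf$ plus a sum of $\tc$ independent $\mathrm{Bernoulli}(P_{hH})$ variables, which concentrates around $\tf+\tc P_{hH}$; conditioned on $L$ it concentrates around $\tf+\tc P_{hL}$. Because $\tc=\Theta(\ag)$ and constant separability places every threshold $\hst^*_i$ at distance at least $\delta\ag$ strictly inside the interval $[\tf+\tc P_{hL},\,\tf+\tc P_{hH}]$, Hoeffding's inequality gives that with probability $1-e^{-\Omega(\ag)}$ the tally $\hst$ lies above every $\hst^*_i$ under $H$ and below every $\hst^*_i$ under $L$. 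On this event all contingent agents vote for the informed majority decision in the second round, and since $\tf,\tu<0.5\ag$ the predetermined minority cannot overturn it; hence $\lp_H^{\bA}(\stgp^*_\ag)\to 1$ and $\lp_L^{\bA}(\stgp^*_\ag)\to 0$, so $\acc(\stgp^*_\ag)\to 1$.

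For Part (2) I would set $\varepsilon_\ag:=2\bigl(1-\acc(\stgp^*_\ag)\bigr)$, which tends to $0$ by Part (1), and show that for every deviating group $D$ and every weakly-improving deviation $\stgp'$ no member of $D$ gains more than $\varepsilon_\ag$. The argument rides on two identities valid for \emph{any} profile: a contingent agent's utility equals the fidelity $\acc(\stgp')$, and the friendly and unfriendly per-agent utilities always sum to $1$ because $\lp_k^{\bA}+\lp_k^{\bR}=1$. Writing $\Delta_k:=\lp_k^{\bA}(\stgp')-\lp_k^{\bA}(\stgp^*_\ag)$, a friendly agent's gain is $P_L\Delta_L+P_H\Delta_H$ and an unfriendly agent's gain is its negation. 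I then split on the composition of $D$. If $D$ contains a contingent agent, weak improvement forces $\acc(\stgp')\ge\acc(\stgp^*_\ag)$, equivalently $P_H\Delta_H\ge P_L\Delta_L$; substituting this into the two gain expressions and using $\lp_H^{\bA}(\stgp^*_\ag)\to1$, $\lp_L^{\bA}(\stgp^*_\ag)\to0$ bounds every friendly gain by $2P_H(1-\lp_H^{\bA}(\stgp^*_\ag))\le\varepsilon_\ag$ and every unfriendly gain by $2P_L\lp_L^{\bA}(\stgp^*_\ag)\le\varepsilon_\ag$, while the contingent gain is at most $1-\acc(\stgp^*_\ag)\le\varepsilon_\ag$. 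If $D$ contains both a friendly and an unfriendly agent but no contingent agent, the sum-to-one identity together with both agents weakly improving forces both gains to be exactly $0$.

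The remaining case, where $D$ consists only of friendly agents (or only of unfriendly agents), is the main obstacle, since a deviation may be randomized and may depend jointly on the signal and on the observed tally $\hst$, so one cannot argue purely through the aggregate quantities $\lp_k^{\bA}$. I would resolve it by a per-realization (coupling) domination argument: because friendly agents already cast $\bA$ in both rounds, in every realization of signals and deviation randomness their first-round contribution to $\hst$ can only drop (it is at most $\tf$), so $\hst$ weakly decreases; by monotonicity of the threshold rule the contingent agents outside $D$ then cast weakly fewer $\bA$-votes, and the friendly round-two $\bA$-votes also cannot increase. Thus the second-round $\bA$-tally is dominated by its original value in every realization, giving $\lp_k^{\bA}(\stgp')\le\lp_k^{\bA}(\stgp^*_\ag)$ for both $k$ and hence no friendly gain; the unfriendly-only case is symmetric with all inequalities reversed. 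Combining the cases, every member of every weakly-improving deviating group gains at most $\varepsilon_\ag\to0$, which is exactly the $\varepsilon$-strong BNE condition.
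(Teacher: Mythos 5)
Your proposal is correct, and its overall skeleton matches the paper's: (i) Hoeffding concentration of the first-round tally around $\tf+\tc P_{hH}$ (under $H$) and $\tf+\tc P_{hL}$ (under $L$), with constant separability placing all thresholds $\delta\ag$ inside the gap, which is exactly the paper's Lemma~\ref{lem:fidelity_threshold}; (ii) a reduction showing any threatening deviating group consists of only friendly or only unfriendly agents; and (iii) a monotonicity argument killing such groups, which is the paper's Claim~\ref{claim:oneside_threshold} (your per-realization coupling makes this step more explicit and rigorous than the paper's sketch). Where you diverge is step (ii): the paper's Lemma~\ref{lem:deviate} splits on whether the \emph{deviating} profile's fidelity $\acc(\stgp')$ is close to $1$ (within $(2B+2)(1-\acc(\stgp^*))$) or not, and in the latter case shows contingent agents strictly lose while friendly and unfriendly agents cannot both gain; you instead split directly on the composition of $D$ and exploit two exact identities of the 0-1 setting --- contingent utility equals fidelity, and friendly plus unfriendly utility equals $1$, so their gains are negatives of each other --- to get cleaner bounds with the better constant $\varepsilon=2(1-\acc(\stgp^*_\ag))$ rather than the paper's $2B(B+2)(1-\acc(\stgp^*_\ag))$. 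The trade-off is generality: your zero-sum identity is specific to 0-1 utilities, whereas the paper's version of Lemma~\ref{lem:deviate} is written for general integer valuations bounded by $B$ and carries over to Theorem~\ref{thm:threshold_nb} using the sensitivity assumptions on predetermined agents' utilities; to extend your argument there you would need to replace the exact negation relation by the inequality-based sensitivity comparison the paper uses.
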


\begin{figure*}[htbp]
    \centering
    \includegraphics[width = 0.99
    \linewidth]{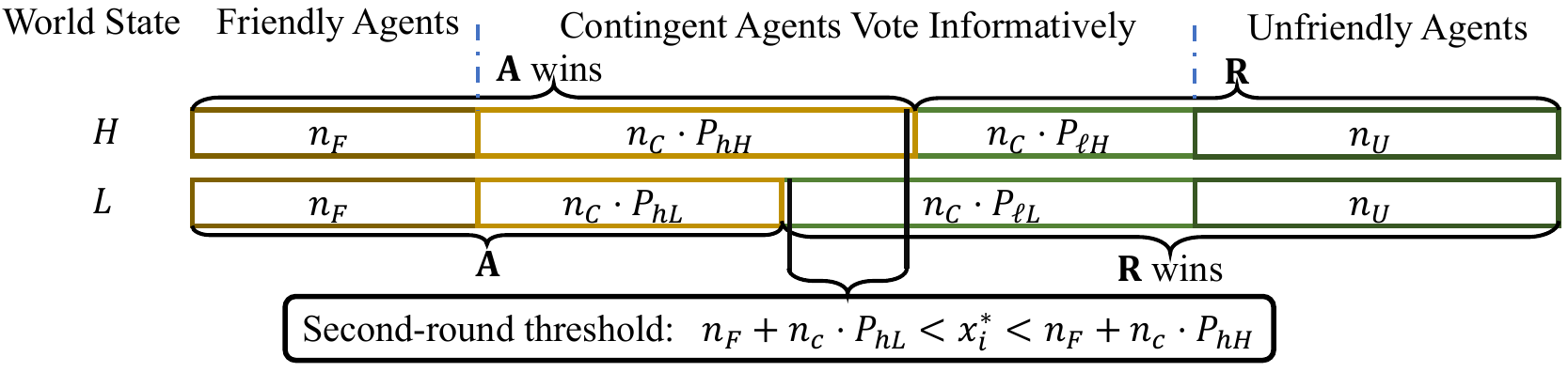}
    \caption{Illustration of the range of the second round threshold for an ``informative + threshold'' equilibrium. }
    \label{fig:naturaleq}
\end{figure*}

\paragraph{Remark on Theorem~\ref{thm:eqtoacc}.} 
Theorem~\ref{thm:eqtoacc} states that any $o(1)$-strong BNE leads to nearly optimal fidelity. Unlike the one-round voting, the other direction---optimal fidelity implying equilibria---does not hold in the two-round voting game, even when predetermined agents only play their dominant second-round strategies. 
Intuitively, some supportive behaviors of the predetermined agents in the first round can also constitute a strategy profile that leads to optimal fidelity, while predetermined agents have incentives to deviate from such behaviors to more self-interested strategies. 

For example, friendly and unfriendly agents voting informatively in the first round can help contingent agents infer the world state and lead to an informed majority decision. However, this does not form an equilibrium, as friendly agents have incentives to always vote for $\bA$ in the first round and convince contingent agents that the world state is $H$. Similarly, unfriendly agents have incentives to vote for $\bR$.

\subsection{Proof Sketch for Theorem~\ref{thm:eqtoacc}}
Suppose $\{\stgp_{\ag}\}$ is a sequence of strategy profiles whose fidelity does not converge to 1. We show that there exists a different sequence of profiles $\{\stgp_{\ag}'\}$ such that agents have incentives to deviate from $\stgp_\ag$ to $\stgp'_\ag$ for infinitely many $\ag$. 

We define a strategy profile in the two-round voting as {\em regular} if all friendly agents always vote for $\bA$ and all unfriendly agents always vote for $\bR$ in the {\em second} round. 
We first give the proof when $\{\stgp_{\ag}\}$ is a regular profile sequence, which characterizes the main idea of the proof while reducing complication, and then extend the proof to non-regular profiles. The full proof can be found in Appendix~\ref{apx:eqtoacc}. 

\subsubsection{Proof for regular profiles.}
We take an infinite set $\mathcal{N} \subseteq \mathbb{N}_+$ and a constant $\delta > 0$ such that for all $\ag \in \mathcal{N}$, the fidelity of $\stgp_{\ag}$ does not exceed $1 - \delta$. Let the deviating group $D$ be all the contingent agents, who prefer $\bA$ in world state $H$ and $\bR$ in world state $L$. Therefore, if the contingent agents switching from $\stgp_{\ag}$ to $\stgp_{\ag}'$ increases the fidelity, they get higher expected utility and are incentivized to do so. 

In a regular $\stgp_{\ag}$, all the predetermined agents vote for their preferred alternative in the second round. If all the contingent agents vote for the same alternative in the second round, this alternative will be the winner. Therefore, our construction of $\stgp_{\ag}'$ aims to achieve the result that contingent agents reveal the world state with high probability in the first round and vote for the preferred agents in the second round. 

Contingent agents can (collectively, with high probability) distinguish the world states because signal distributions are different in different states of the world. For all the contingent agents, the expected number of $h$ signals is $P_{hH}\cdot \tc$ in world state $H$ and $P_{hL}\cdot \tc$ in world state $L$. If all the contingent agents vote informatively in the first round, then there will be likely more votes for $\bA$ in world state $H$ than in world state $L$ and more $\bR$ votes in world state $L$ than in world state $H$. By the Hoeffding inequality, the number of votes will concentrate to the expectation with high probability as $\ag$ increases. Therefore, contingent agents can play a ``threshold'' strategy in the second round: vote for $\bA$ together if the first round result $\hst$ exceeds the threshold and vote for $\bR$ together otherwise. The threshold should be set between $P_{hH}\cdot \tc$ and $P_{hL}\cdot \tc$. 

However, this strategy still must overcome one difficulty: strategies of predetermined agents in the first round are not guaranteed to be ``regular'' and may offset the difference in the votes of contingent agents. For example, predetermined agents can vote for $\bR$ more often in world state $H$ than in world state $L$ in $\stgp_{\ag}$ so that $\hst$ has a smaller expectation in world state $H$ than in world state $L$. However, contingent agents can ``reverse'' their votes and observations accordingly: they vote opposite to their signal in the first round and vote for $\bA$ if $\hst$ is below a threshold. In this way, they can still distinguish the different world states. 

Now we explicitly present the strategy profile $\stgp_\ag'$. The deviators $D$ are exactly all the contingent agents. Let $\exs_H(\stgp_\ag')$ and $\exs_L(\stgp_\ag')$ be the expectation of the first-round outcome $\hst$ of $\stgp_\ag'$ conditioned on the world state $H$ and $L$ respectively. 

\begin{itemize}
        \item All predetermined agents (they do not deviate) play the same strategy as in $\stgp_{\ag}$. This includes that they play their dominant second-round strategies. 
        \item If more votes for $\bA$ from predetermined agents are expected in world state $H$ than in $L$ in the first round, all contingent agents vote informatively in the first round. For the second round, they vote for $\bA$ if $\hst \ge \frac{\exs_H(\stgp_\ag') + \exs_L(\stgp_\ag')}{2}$ and vote for $\bR$ otherwise.
        \item Otherwise, contingent agents ``reverse'' their strategy. In the first round, every contingent agent votes for $\bA$ if receiving $l$ and for $\bR$ if receiving $h$. In the second round, they vote for $\bA$ if $\hst < \frac{\exs_H(\stgp_\ag') + \exs_L(\stgp_\ag')}{2}$ and $\bR$ otherwise.
    \end{itemize}

As shown in the previous analysis, under $\stgp'_\ag$, contingent agents can distinguish the world state with high probability via the first-round vote, guaranteed by the Hoeffding Inequality. Therefore, the fidelity $\acc(\stgp'_\ag)$ converges to 1 as $\ag$ goes to infinity. 
In this way, for all sufficiently large $\ag$ such that $\acc(\stgp_{\ag})$ is not close to 1, by deviating from $\stgp_{\ag}$ to $\stgp_{\ag}'$, the expected utilities of contingent agents increase by a constant. Therefore, contingent agents have incentives to deviate, and $\stgp_{\ag}$ is NOT an $\varepsilon$-strong BNE for some constant $\varepsilon$.

\subsubsection{Proof Sketch for non-regular sequences.} We adopt the same idea to construct a strategy profile where the deviating group reveals the world state by the first-round vote and votes for the preferred agent in the second round. Then we show that as the fidelity converges to 1, the group has incentives to do so. 

However, for non-regular profiles, the deviating group containing only contingent agents may not work, as there is no assumption on the behavior of predetermined agents in the second round. (For example, if $\agc < 0.5$, and all the predetermined agents always vote for $\bA$, then the winner will always be $\bA$ regardless of the votes from contingent agents.) Fortunately, the following lemma shows that there will always be a type of predetermined agents have incentives to deviate together to reach the informed majority decision. 

\begin{lem} 
\label{lem:positive} 
Let $\{\stgp_{\ag}\}$ be a sequence of profiles such that there exists a constant $\delta$ and an infinite set $\mathcal{N}$ such that for all $\ag\in \mathcal{N}$, $\acc(\stgp_{\ag}) \le 1 - \delta$. 
Then there exists an infinite set $\mathcal{N}' \subseteq \mathcal{N}$ such that for each $\ag \in \mathcal{N}'$, at least one of the following holds for all sequences of profiles $\{\stgp_{\ag}'\}$ with $\lim_{\ag\to\infty} \acc(\stgp'_\ag) = 1$:
\begin{enumerate} 
\item For every friendly agent $\sag_1$, $\ut_{\sag_1} (\stgp_{\ag}') - \ut_{\sag_1} (\stgp_{\ag}) \ge 0$. 
\item For every unfriendly agent $\sag_2$, $\ut_{\sag_2} (\stgp_{\ag}') - \ut_{\sag_2} (\stgp_{\ag}) \ge 0$. \end{enumerate} 
Moreover, for a fixed $\ag$ and $\stgp_{\ag}$, either 1 holds for all $\stgp_{\ag}'$ or 2 holds for all $\stgp'$. 
\end{lem}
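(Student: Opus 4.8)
The plan is to exploit that friendly and unfriendly agents have exactly complementary (zero-sum) utilities. First I would observe that for any profile $\stgp$, every friendly agent's expected utility equals the ex-ante probability that $\bA$ wins, $\ut_{F}(\stgp) := P_{L}\lp_{L}^{\bA}(\stgp) + P_{H}\lp_{H}^{\bA}(\stgp)$, and every unfriendly agent's equals the probability that $\bR$ wins, $\ut_{U}(\stgp) := P_{L}\lp_{L}^{\bR}(\stgp) + P_{H}\lp_{H}^{\bR}(\stgp)$. Since exactly one alternative wins in each state, $\lp_{\wos}^{\bA}(\stgp) + \lp_{\wos}^{\bR}(\stgp) = 1$, so $\ut_{F}(\stgp) + \ut_{U}(\stgp) = P_{L} + P_{H} = 1$ for every profile. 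Consequently, for any two profiles $\stgp_{\ag}$ and $\stgp_{\ag}'$, the friendly gain and the unfriendly gain sum to zero: $[\ut_{F}(\stgp_{\ag}') - \ut_{F}(\stgp_{\ag})] + [\ut_{U}(\stgp_{\ag}') - \ut_{U}(\stgp_{\ag})] = 0$. Hence at least one of the two gains is nonnegative, which is exactly the dichotomy between conditions 1 and 2.

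Next I would pin down the target utilities of a high-fidelity profile. Because $\acc(\stgp_{\ag}') = P_{L}\lp_{L}^{\bR}(\stgp_{\ag}') + P_{H}\lp_{H}^{\bA}(\stgp_{\ag}') \to 1$ with $P_{L}, P_{H} > 0$, both $\lp_{L}^{\bR}(\stgp_{\ag}') \to 1$ and $\lp_{H}^{\bA}(\stgp_{\ag}') \to 1$, so $\ut_{F}(\stgp_{\ag}') \to P_{H}$ and $\ut_{U}(\stgp_{\ag}') \to P_{L}$ for \emph{every} sequence with $\lim \acc = 1$; the limits do not depend on which high-fidelity sequence is chosen. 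Thus the limiting friendly gain is $P_{H} - \ut_{F}(\stgp_{\ag})$ and the limiting unfriendly gain is $P_{L} - \ut_{U}(\stgp_{\ag}) = -(P_{H} - \ut_{F}(\stgp_{\ag}))$, again complementary. I would then classify each $\ag \in \mathcal{N}$ by the sign of $P_{H} - \ut_{F}(\stgp_{\ag})$: where it is nonnegative the friendly type benefits, otherwise the unfriendly type does. One of these two index sets is infinite; I take it as $\mathcal{N}'$ and let the corresponding type be the fixed benefiting type, which supplies the uniform choice demanded by the ``moreover'' clause.

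The delicate point, and the main obstacle, is upgrading the limiting inequality to the exact ``$\ge 0$'' required by the definition of a deviation, uniformly over all high-fidelity sequences. Writing $\ut_{F}(\stgp_{\ag}') - \ut_{F}(\stgp_{\ag}) = (P_{H} - \ut_{F}(\stgp_{\ag})) + (\ut_{F}(\stgp_{\ag}') - P_{H})$, the first term is the limiting gain and the second is an $o(1)$ error; the trouble is the ``balanced'' regime where $\ut_{F}(\stgp_{\ag})$ is within $o(1)$ of $P_{H}$ (equivalently $P_{H}\lp_{H}^{\bR}(\stgp_{\ag}) \approx P_{L}\lp_{L}^{\bA}(\stgp_{\ag})$), which the fidelity gap $\acc(\stgp_{\ag}) \le 1-\delta$ does \emph{not} exclude, since it only forces the \emph{sum} $P_{H}\lp_{H}^{\bR}(\stgp_{\ag}) + P_{L}\lp_{L}^{\bA}(\stgp_{\ag}) \ge \delta$ rather than the difference. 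Here the exact zero-sum identity of the first paragraph rescues the argument: for any single deviation $\stgp_{\ag}'$, whichever of the two gains is negative forces the other to be \emph{strictly} positive, so at least one type weakly benefits at every index with no appeal to limits. I would therefore prove the lemma by combining (i) the exact identity, which guarantees a weakly benefiting type for each $\ag$ and each $\stgp_{\ag}'$, with (ii) a pigeonhole over the two possible benefiting types to extract the infinite subset $\mathcal{N}'$ on which a single type works. The gap $\delta$ is retained mainly because, in the parent Theorem~\ref{thm:eqtoacc}, it supplies the strict $\Omega(\delta)$ improvement for the contingent agents that makes the overall group deviation profitable.
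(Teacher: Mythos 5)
Your core argument is correct for the 0--1 utility setting in which this lemma is stated, and it takes a genuinely different route from the paper's. You exploit the exact complementarity $\ut_{F}(\stgp)+\ut_{U}(\stgp)=1$ (friendly expected utility is the probability that $\bA$ wins, unfriendly that $\bR$ wins), so the two types' gains from any deviation sum to zero and at least one is nonnegative---for every $\ag$ and every single $\stgp_{\ag}'$, with no limits and indeed no use of the hypothesis $\acc(\stgp_{\ag})\le 1-\delta$. The paper instead expands the utility difference as $P_H\,(\lp_{H}^{\bA}(\stgp_{\ag}')-\lp_{H}^{\bA}(\stgp_{\ag}))(\vt_{\sag}(H,\bA)-\vt_{\sag}(H,\bR))+P_L\,(\lp_{L}^{\bR}(\stgp_{\ag}')-\lp_{L}^{\bR}(\stgp_{\ag}))(\vt_{\sag}(L,\bR)-\vt_{\sag}(L,\bA))$, uses $\acc(\stgp_{\ag})\le 1-\delta$ together with $\acc(\stgp_{\ag}')\to 1$ to guarantee that one of the two $\lp$-increments is strictly positive for large $\ag$ (this is what $\mathcal{N}'$ is for), and then invokes the sensitivity ordering ($\vt_{\sag_1}(H,\bA)-\vt_{\sag_1}(H,\bR)\ge \vt_{\sag_1}(L,\bA)-\vt_{\sag_1}(L,\bR)$ for friendly agents, reversed for unfriendly ones) to show that a friendly loss forces an unfriendly gain. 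For 0--1 utilities the paper's decomposition collapses to your identity (the two gains are exactly $\pm\bigl(P_H\Delta_H-P_L\Delta_L\bigr)$), so the arguments agree there; what the paper's heavier route buys is that the same proof covers the general valuation setting of Section~\ref{sec:general_util}, where $\ut_{F}+\ut_{U}=1$ fails and your identity is unavailable.

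The genuine weak point is your claim that (i) plus (ii) delivers the ``moreover'' clause. The zero-sum identity yields only a per-deviation dichotomy: for each fixed $\ag$ and each single $\stgp_{\ag}'$, whichever gain is negative forces the other to be positive. It does not make the benefiting type uniform across deviations, and the pigeonhole over indices cannot repair this, because classifying $\ag$ by the sign of $P_H-\ut_{F}(\stgp_{\ag})$ controls only the limiting gain. In the balanced regime you yourself flagged, uniformity genuinely fails: if $P_H(1-\lp_{H}^{\bA}(\stgp_{\ag}))=P_L(1-\lp_{L}^{\bR}(\stgp_{\ag}))=\delta/2$, then a deviation with $\lp_{H}^{\bA}=1$ and $\lp_{L}^{\bR}=1-\epsilon/P_L$ gives friendly gain $+\epsilon$ and unfriendly gain $-\epsilon$, while a deviation with $\lp_{H}^{\bA}=1-\epsilon/P_H$ and $\lp_{L}^{\bR}=1$ gives the opposite signs; both have fidelity $1-\epsilon$, arbitrarily close to $1$, so for such $\ag$ neither alternative holds for all high-fidelity deviations, and no choice of $\mathcal{N}'$ helps if this balance persists along $\mathcal{N}$. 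In fairness, the paper's formal proof establishes only the same per-deviation statement (if condition 1 fails for a given $\stgp_{\ag}'$, then condition 2 holds for that same $\stgp_{\ag}'$); the uniformity is asserted only in the informal sketch (``which likelihood increases more solely depends on $\stgp_{\ag}$''), which is accurate only up to $o(1)$ terms. So your proposal matches the paper's actual rigor on the main dichotomy and isolates the obstruction more explicitly than the paper does, but you should present the ``moreover'' clause as unproven (or provable only under an extra qualification, e.g., restricting to deviations whose fidelity deficit is small relative to $|P_H-\ut_{F}(\stgp_{\ag})|$, or weakening the inequalities to hold up to $o(1)$), rather than as a consequence of your identity.
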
 

\paragraph{Proof Sketch for Lemma~\ref{lem:positive}. }The fidelity $\acc(\stgp_{\ag})$ is a weighted sum of the likelihood that $\bA$ wins in world state $H$ (denoted as $\lambda_H^\bA(\stgp_\ag)$) and the likelihood that $\bR$ wins in world state $L$ (denoted as $\lambda_L^\bR(\stgp_\ag)$). From $\stgp_\ag$ to $\stgp_\ag'$, both likelihoods increase. If $\lambda_H^\bA(\stgp_\ag)$ increases more, the friendly agents get utility increase; and if $\lambda_L^\bR(\stgp_\ag)$ increases more, the unfriendly agents get utility increase. Moreover, as $\acc(\stgp'_\ag)$ is close to 1, which likelihood increases more solely depends on $\stgp_\ag$. Therefore, which type of agents gets utility increase does not depend on the deviating strategy $\stgp_\ag'$. 


In this way, we consider two deviating strategies. In the first strategy $\stgp_{F}$, all the friendly and contingent agents vote informatively in the first round and play a threshold strategy in the second round. In the second strategy $\stgp_{U}$, all the unfriendly agents and all the contingent agents vote informatively in the first round and play a threshold strategy in the second round. Since $\agf < 0.5$ and $\agu < 0.5$, if contingent agents and one type of predetermined agents vote for the same alternative, that alternative will be the winner. Therefore, with similar reasoning to the regular case, both $\stgp_{F}$ and $\stgp_{U}$ have fidelity converging to 1. Applying Lemma~\ref{lem:positive}, either all friendly agents or all unfriendly agents get the expected utility increase via deviation. Moreover, the type of agents with utility gain is the same in $\stgp_{F}$ and $\stgp_{U}$. Contingent agents will also get a constant increase in expected utility due to the increase in fidelity. Therefore, for each $\stgp_{\ag}$ whose fidelity is not close to 1, agents have incentives to deviate to either $\stgp_{F}$ or $\stgp_{U}$. Therefore, $\stgp_{\ag}$ is NOT an $\varepsilon$-strong BNE with constant $\varepsilon$. 

\subsection{Proof Sketch for Theorem~\ref{thm:sincere} and Theorem~\ref{thm:threshold}.}

As the ``informative + sincere'' strategy profile is also an ``informative + threshold'' strategy profile, it suffices to display the proof for Theorem~\ref{thm:threshold} and shows that the sincere threshold satisfies the constantly separable constraint. 

The proof of Theorem~\ref{thm:threshold} consists of two parts: Lemma~\ref{lem:fidelity_threshold} shows that the fidelity converges to 1, and Lemma~\ref{lem:eq_threshold} shows that the profiles are $o(1)$-strong BNE. The full proof is in Appendix~\ref{apx:threshold}. 


\begin{lem}
    \label{lem:fidelity_threshold}
    For any sequence of instance and any sequence of ``informative + threshold'' strategy profiles $\stgp$, if there exists a constant $\delta > 0$ such that for all $\ag$ and all agent $i$, $\tf + \tc\cdot P_{hL} + \delta \cdot \ag\le \hst_i^\ag \le \tf + \tc \cdot P_{hH} - \delta\cdot \ag$, then $\acc(\stgp)$ converges to 1. 
\end{lem}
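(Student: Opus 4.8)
The plan is to reduce the claim to two Hoeffding concentration estimates, one per world state, exploiting that in an ``informative + threshold'' profile the only randomness in the first-round $\bA$-tally $\hst$ comes from the contingent agents' signals. The $\tf$ friendly agents deterministically cast $\bA$ and the $\tu$ unfriendly agents cast $\bR$ in the first round, so $\hst = \tf + B$, where $B$ counts the contingent agents who receive signal $h$ (and hence vote $\bA$ informatively). Conditioned on the world state the signals are i.i.d.\ Bernoulli, so $B \sim \mathrm{Binomial}(\tc, P_{hH})$ under $H$ and $B \sim \mathrm{Binomial}(\tc, P_{hL})$ under $L$, giving $\mathbb{E}[\hst \mid H] = \tf + \tc P_{hH}$ and $\mathbb{E}[\hst \mid L] = \tf + \tc P_{hL}$. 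The separation hypothesis then places every threshold $\hst_i^\ag$ at least $\delta \ag$ below $\mathbb{E}[\hst \mid H]$ and at least $\delta \ag$ above $\mathbb{E}[\hst \mid L]$.

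First I would handle world state $H$. Since every $\hst_i^\ag \le \tf + \tc P_{hH} - \delta \ag$, comparing $\hst$ against $\max_i \hst_i^\ag$ lets one bound of all contingent agents at once: the event that $\hst$ fails to exceed some threshold is contained in $\{\hst \le \tf + \tc P_{hH} - \delta\ag\} = \{B \le \tc P_{hH} - \delta\ag\}$, and Hoeffding's inequality on a sum of $\tc \le \ag$ variables in $[0,1]$ gives $\Pr[B \le \tc P_{hH} - \delta\ag \mid H] \le \exp(-2\delta^2\ag^2/\tc) \le \exp(-2\delta^2\ag) \to 0$. On the complementary high-probability event every contingent agent votes $\bA$ in the second round, so $\bA$ collects $\tf + \tc$ votes against $\tu$; because $\agu < 0.5$ we have $\tu < \ag/2$ and hence $\tf + \tc = \ag - \tu > \ag/2$ for large $\ag$, so $\bA$ wins and $\lp_H^{\bA}(\stgp) \to 1$. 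World state $L$ is symmetric: using $\min_i \hst_i^\ag \ge \tf + \tc P_{hL} + \delta\ag$, Hoeffding bounds $\Pr[\hst > \min_i \hst_i^\ag \mid L]$ by $\exp(-2\delta^2\ag)$, so with probability tending to $1$ all contingent agents vote $\bR$, giving $\tu + \tc > \ag/2$ votes for $\bR$ (here using $\agf < 0.5$), whence $\lp_L^{\bR}(\stgp) \to 1$.

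Combining the two estimates yields $\acc(\stgp) = P_L \lp_L^{\bR}(\stgp) + P_H \lp_H^{\bA}(\stgp) \to P_L + P_H = 1$. The only points needing care, rather than genuine difficulty, are (i) collapsing the per-agent thresholds into a single tail bound by comparing $\hst$ with $\max_i \hst_i^\ag$ under $H$ and with $\min_i \hst_i^\ag$ under $L$, so one Hoeffding estimate suffices in each state, and (ii) checking the two majority inequalities, which is precisely where the standing assumptions $\agf < 0.5$ and $\agu < 0.5$ enter. The separation by a term linear in $\ag$ is exactly what drives the Hoeffding tails to zero exponentially, so the convergence of the fidelity is in fact exponentially fast.
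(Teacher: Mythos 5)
Your proof is correct and follows essentially the same route as the paper's: condition on each world state, apply Hoeffding's inequality to the first-round tally (whose only randomness comes from the contingent agents' informative votes), use the $\delta\cdot\ag$ separation of all thresholds from the conditional expectations $\tf+\tc P_{hH}$ and $\tf+\tc P_{hL}$, and then conclude by the majority count using $\agf,\agu<0.5$. Your write-up is in fact slightly more careful than the paper's on two minor points—collapsing the per-agent thresholds via $\max_i \hst_i^\ag$ and $\min_i \hst_i^\ag$, and explicitly verifying that $\tf+\tc>\ag/2$ (resp.\ $\tu+\tc>\ag/2$)—but the underlying argument is the same.
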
 

Lemma~\ref{lem:fidelity_threshold} follows similar reasoning in Theorem~\ref{thm:eqtoacc} of showing the deviating strategy profile have fidelity converging to 1. Firstly, the threshold $\hst^*$ is guaranteed to be between the expectation of the first-round vote conditioned on the world state being $H$ and being $L$. By the Hoeffding inequality, the result of the first round concentrates on the expectation with high probability. Therefore, with probability converging to 1, the first-round outcome is larger than all the thresholds under world state $H$ and lower than all the thresholds under world state $L$. This informs all the contingent agents of the world state almost surely, and they will vote for the informed majority decision with high probability.



\begin{lem}
    \label{lem:eq_threshold}
    $\stgp^{*}_{\ag}$ is an $\varepsilon$-strong BNE with $\varepsilon = o(1)$. 
\end{lem}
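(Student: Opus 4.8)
The plan is to show that for every deviating group $D$ and every deviating profile $\stgp_\ag'$ satisfying conditions (1)--(2) of the equilibrium definition, no member of $D$ gains more than $o(1)$, which gives the claimed $\varepsilon = o(1)$. The starting point is Lemma~\ref{lem:fidelity_threshold}, which yields $\acc(\stgp^*_\ag)\to 1$ and hence $\lp_H^{\bA}(\stgp^*_\ag)\to 1$ and $\lp_L^{\bR}(\stgp^*_\ag)\to 1$. Substituting into the utility formula, contingent agents obtain $\ut_i(\stgp^*_\ag)=\acc(\stgp^*_\ag)\to 1$, friendly agents obtain $\to P_H$, and unfriendly agents obtain $\to P_L$. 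Thus contingent agents are already essentially at their maximum, while friendly and unfriendly agents carry a constant utility gap ($P_L$ and $P_H$ respectively). The entire content of the lemma is to argue that this gap cannot be closed by any admissible joint deviation.

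First I would dispose of every group $D$ containing a contingent agent. That agent must weakly benefit, so $\acc(\stgp_\ag')\ge \acc(\stgp^*_\ag)=1-o(1)$; since $P_H,P_L$ are positive constants and each $\lp$ is at most $1$, this forces $\lp_H^{\bA}(\stgp_\ag'),\lp_L^{\bR}(\stgp_\ag')\to 1$ and hence $\lp_L^{\bA}(\stgp_\ag'),\lp_H^{\bR}(\stgp_\ag')\to 0$. These pin every type's utility under $\stgp_\ag'$ to its value under $\stgp^*_\ag$ up to $o(1)$, so no member gains more than $o(1)$. Next I would dispose of every group $D\subseteq F\cup U$ containing both a friendly and an unfriendly agent, using the identity $\ut_{\text{friendly}}(\stgp_\ag')+\ut_{\text{unfriendly}}(\stgp_\ag')=1$ (exactly one alternative wins, up to tie events contributing $o(1)$). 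Both members can weakly benefit only if the friendly utility is unchanged, in which case neither strictly gains. These two observations reduce the problem to groups $D$ that are entirely friendly or entirely unfriendly and contain no contingent agent.

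The crux is this single-type predetermined case, say $D\subseteq F$; the case $D\subseteq U$ is symmetric. Here all contingent and all unfriendly agents play $\stgp^*_\ag$. The key structural fact is directional: under $\stgp^*_\ag$ friendly agents already cast $\bA$ in the first round, so any deviation can only \emph{lower} the poll tally $\hst$, never raise it. I would then bound $\hst$ in world state $L$: the contingent agents contribute $\tc P_{hL}+o(\ag)$ first-round $\bA$-votes by Hoeffding, the unfriendly contribute none, and the friendly contribute at most $\tf$, so $\hst\le \tf+\tc P_{hL}+o(\ag)$. By constant separability every threshold satisfies $\hst_i^*\ge \tf+\tc P_{hL}+\delta\ag$, so for large $\ag$ we have $\hst<\hst_i^*$ for all contingent agents, who therefore all vote $\bR$ in the second round. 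Since $\agf<0.5$ gives $\tu+\tc>\ag/2$, alternative $\bR$ wins in state $L$ with probability $1-o(1)$, i.e. $\lp_L^{\bA}(\stgp_\ag')=o(1)$. Hence the friendly utility is $P_L\cdot o(1)+P_H\lp_H^{\bA}(\stgp_\ag')\le P_H+o(1)$, matching their $\stgp^*_\ag$ value up to $o(1)$.

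Collecting the cases, the maximum gain available to any weakly-improving deviation is $o(1)$, which is the desired bound. I expect the single-type case to be the main obstacle, since it is the only place where first-round manipulation of the observable poll $\hst$ could in principle alter the contingent agents' inference; the argument works precisely because a single predetermined type can only move $\hst$ in the direction that hurts it, and the constant separation margin $\delta\ag$ dominates the $o(\ag)$ Hoeffding fluctuations. The remaining work is bookkeeping: making the concentration bound uniform over all contingent thresholds (handled by the single $\ag_0$ and $\delta$ in constant separability) and confirming that tie events contribute only $o(1)$ to every utility.
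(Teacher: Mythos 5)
Your proof is correct, and its skeleton matches the paper's: establish fidelity convergence via Lemma~\ref{lem:fidelity_threshold}, reduce any profitable deviating coalition to one consisting of a single predetermined type, and then show such a coalition cannot succeed because the first-round tally stays on the wrong side of every threshold (Hoeffding concentration against the $\delta\cdot\ag$ separation margin). The genuine difference is in how the reduction is carried out. The paper proves a reusable lemma (Lemma~\ref{lem:deviate}) valid for general integer utilities bounded by $B$, taking $\varepsilon = 2B(B+2)(1-\acc(\stgp^*))$ and arguing by a two-case analysis on $\acc(\stgp')$ together with a sensitivity comparison between friendly and unfriendly preferences; that lemma is then reused in the proof of Theorem~\ref{thm:one_and_two}. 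You instead exploit two identities special to the 0-1 setting: a contingent agent's utility equals the fidelity (so any coalition containing one forces $\acc(\stgp')\ge 1-o(1)$, which pins every type's utility to its $\stgp^*$ value up to $o(1)$), and friendly plus unfriendly utilities sum to $1$ up to the tie probability under $\stgp^*$ (so both predetermined types cannot simultaneously weakly gain except by $o(1)$). This is more elementary and self-contained, but it does not transfer verbatim to the general-valuation version (Theorem~\ref{thm:threshold_nb}), where the paper's sensitivity argument is what does the work. In the endgame your argument also differs slightly in flavor: the paper's Claim~\ref{claim:oneside_threshold} is a monotonicity claim (friendly deviators, already voting $\bA$ everywhere, can only decrease the probability that $\bA$ wins), whereas you give a direct quantitative bound ($\lp_{L}^{\bA}(\stgp')=o(1)$ no matter what the friendly deviators do, since their first-round contribution is capped at $\tf$, so their utility is at most $P_H+o(1)$ against a baseline of $P_H-o(1)$); both are valid, and yours avoids any stochastic-dominance reasoning, which also makes the required uniformity of the $o(1)$ bound over all deviations explicit.
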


The proof of Lemma~\ref{lem:eq_threshold} follows a similar idea to the proof by~\citet[Theorem 3.3]{schoenebeck21wisdom}. For arbitrary $\ag$, and a strategy profile $\stgp_\ag$, let $\varepsilon = 2B(B+2)(1 - \acc(\stgp_{\ag}))$, where $B$ is the upper bound of the utility. Then, for any deviating strategy profile $\stgp_{\ag}'$, we show that the deviation will not succeed.
Therefore, $\stgp_{\ag}$ is an $\varepsilon$-strong BNE. Since $\acc(\stgp_{\ag}^*)$ converges to 1, $\varepsilon$ converges to 0. 

\begin{lem}[Informal]
    \label{lem:deviate} For any strategy profile $\stgp$, a deviating group $D$ that have incentives to deviate from $\stgp$ to another strategy profile $\stgp'$, under the approximation parameter $\varepsilon = 6(1 - \acc(\stgp))$ for $\varepsilon$-strong BNE, contains either only friendly agents or only unfriendly agents.
\end{lem}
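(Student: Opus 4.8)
The plan is to exploit the fact that under $0$--$1$ utilities every agent's expected utility is completely determined by her type and the four winning probabilities $\lp_H^{\bA},\lp_H^{\bR},\lp_L^{\bA},\lp_L^{\bR}$. Concretely, a friendly agent has utility $u_F(\stgp)=P_H\lp_H^{\bA}(\stgp)+P_L\lp_L^{\bA}(\stgp)=\Pr[\bA\text{ wins}]$, an unfriendly agent has $u_U(\stgp)=\Pr[\bR\text{ wins}]$, and a contingent agent has $u_C(\stgp)=\acc(\stgp)$. Since agents of the same type share the same utility, if a deviation $\stgp'$ strictly helps \emph{some} agent of a type by more than $\varepsilon$, it helps \emph{every} agent of that type by more than $\varepsilon$. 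The first bookkeeping step is to record two consequences of $\varepsilon=6(1-\acc(\stgp))$: writing the per-state tie probability $t_k=1-\lp_k^{\bA}-\lp_k^{\bR}\ge 0$ for $k\in\{H,L\}$ and $T=P_Ht_H+P_Lt_L$, the identity $1-\acc(\stgp)=P_H\lp_H^{\bR}(\stgp)+P_L\lp_L^{\bA}(\stgp)+T(\stgp)$ forces both $T(\stgp)\le \varepsilon/6$ and $1-\lp_H^{\bA}(\stgp)\le \varepsilon/(6P_H)$, and these bounds hold for \emph{every} $\stgp$ (when $\acc(\stgp)$ is small, $\varepsilon>1$ and no agent can gain more than $\varepsilon$ at all, so the statement is vacuous).

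First I would rule out contingent agents as the strict improver. Because $u_C=\acc\le 1$ while $u_C(\stgp)=1-\varepsilon/6$, any contingent agent gains at most $\acc(\stgp')-\acc(\stgp)\le \varepsilon/6<\varepsilon$, so no contingent agent clears the $\varepsilon$ bar. Hence a blocking coalition must contain a friendly or an unfriendly strict improver; assume by symmetry it is friendly, so $u_F(\stgp')-u_F(\stgp)>\varepsilon$ holds for every friendly agent.

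Next I would show that neither an unfriendly nor a contingent agent can sit in the coalition while weakly improving, which forces $D\subseteq F$. For unfriendly agents the argument is immediate from the complementarity identity $u_F+u_U=1-T$: it gives $(u_F+u_U)(\stgp')-(u_F+u_U)(\stgp)=T(\stgp)-T(\stgp')\le T(\stgp)\le \varepsilon/6$, so $u_U(\stgp')-u_U(\stgp)\le \varepsilon/6-\varepsilon<0$ and every unfriendly agent is strictly hurt. For contingent agents the key is that the friendly gain must originate almost entirely in state $L$: since $1-\lp_H^{\bA}(\stgp)\le \varepsilon/(6P_H)$, at most $\varepsilon/6$ of the gain can come from state $H$, so $P_L(\lp_L^{\bA}(\stgp')-\lp_L^{\bA}(\stgp))>5\varepsilon/6$. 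Bounding $\lp_L^{\bR}(\stgp')\le 1-\lp_L^{\bA}(\stgp')$ then yields $\acc(\stgp')-\acc(\stgp)=P_H(\lp_H^{\bA}(\stgp')-\lp_H^{\bA}(\stgp))+P_L(\lp_L^{\bR}(\stgp')-\lp_L^{\bR}(\stgp))\le \varepsilon/6-4\varepsilon/6=-\varepsilon/2<0$, so every contingent agent is strictly hurt as well. Combining the two exclusions shows the blocking coalition consists only of friendly agents; the symmetric computation with $H$ and $L$ (equivalently $\bA$ and $\bR$) interchanged handles an unfriendly strict improver and gives $D\subseteq U$.

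The hard part will be the contingent-exclusion step: one has to convert the aggregate inequality $u_F(\stgp')-u_F(\stgp)>\varepsilon$ into a statement about the state-$L$ winning probability and then propagate it to $\acc$, all while carrying the tie terms $t_H,t_L$ so that the signs come out strictly negative. This is precisely where the constant $6$ earns its keep—it leaves enough slack (the final bound $-\varepsilon/2$) that the residual state-$H$ gain and the tie probabilities cannot flip the inequality. A secondary point to handle with care is that the utilities are ex-ante expectations read off the \emph{same} four probabilities for $\stgp$ and $\stgp'$; fixing the strategies of the non-deviators guarantees $\lp_k^{\bA}(\stgp'),\lp_k^{\bR}(\stgp')$ are well defined for the modified profile, so all the inequalities above are legitimate.
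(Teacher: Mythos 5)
Your proposal is correct for the $0$--$1$ utility setting that the informal statement addresses (the constant $6$ is $2B(B+2)$ at $B=1$), but it is organized genuinely differently from the paper's proof. The paper splits on the fidelity of the deviating profile $\stgp'$: when $1-\acc(\stgp')$ is small it shows \emph{no} agent of any type can gain more than $\varepsilon$ (so the hypothesis of the lemma is vacuous), and when $1-\acc(\stgp')$ is large it shows that contingent agents strictly lose and that a gain of more than $\varepsilon$ for friendly agents forces a strict loss for unfriendly agents (and vice versa), each claim proved by a sign analysis on the changes in $\lp_{H}^{\bA}$ and $\lp_{L}^{\bR}$. You avoid that case split entirely by exploiting three identities special to $0$--$1$ utilities: (i) a contingent agent's utility equals $\acc$ exactly, so it can never be the strict improver, since $\acc(\stgp)=1-\varepsilon/6$ caps its gain at $\varepsilon/6$; (ii) the complementarity $u_F+u_U = 1-T$ (win probabilities sum to one minus the tie probability), which kills unfriendly agents whenever friendly agents gain more than $\varepsilon$; (iii) the localization of the friendly gain to state $L$ (state $H$ being nearly saturated under $\stgp$), which forces $P_L\lp_L^{\bA}$ up by more than $5\varepsilon/6$, hence $\lp_L^{\bR}$ down, hence $\acc(\stgp')-\acc(\stgp)<-\varepsilon/2$, killing contingent agents. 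I checked the arithmetic, including the propagation of strict inequalities in the degenerate case $\varepsilon=0$, and it is sound; your explicit bookkeeping of the tie terms $t_H, t_L$ is actually more careful than the paper, which leaves them implicit. What each approach buys: yours is shorter and more transparent for $0$--$1$ utilities; the paper's clunkier case split is what lets the argument survive in the general-valuation setting of Lemma~\ref{lem:deviate} (Formal) with bound $B>1$, where your key identities fail --- there, contingent utilities are no longer equal to $\acc$, agents of the same type need not share cardinal utilities, and $u_F+u_U$ is no longer a constant minus a tie term --- so your proof as written covers the informal statement but would need real modification to yield the paper's formal one.
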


The proof of Lemma~\ref{lem:deviate} consists of two cases. If $1 - \acc(\stgp') < 2\cdot (1 - \acc(\stgp))$, the fidelities of both profiles are close to 1, and the expected utilities of agents will have no significant difference in two profiles. Therefore, no agents can get more than $\varepsilon$ expected utility gain by deviation. Otherwise, we show that the deviating group contains only one type of predetermined agents. Firstly, contingent agents, whose expected utilities exactly equals to the fidelity, are excluded from the deviating group $D$, as their expected utilities decrease with the fidelity. Then, friendly agents and unfriendly agents cannot be in $D$ simultaneously, as an expected utility gain of more than $\varepsilon$ on one side implies an expected utility decrease on the other side. The formal statement of Lemma~\ref{lem:deviate} is in Appendix~\ref{apx:deviate}. 


Finally, we show that  $D$ contains only friendly agents or only unfriendly agents cannot succeed. 
Without loss of generality, suppose $D$ contains only friendly agents.  If friendly agents deviate in the first round, the result of the first-round voting $x$ will be less likely to reach the threshold, and contingent agents are more likely to vote for $\bR$. If friendly agents deviate in the second round, there will be strictly fewer agents voting for $\bA$ in expectation, and $\bA$ will be less likely to be the winner.
Therefore, any deviation cannot succeed, and $\stgp^{*}_{\ag}$ is a $\varepsilon$-strong BNE with $\varepsilon$ converges to 0. 

Finally, we show that the ``informative+sincere'' threshold is between the two expectation.
\begin{lem}
\label{lem:sincere_is_threshold}
    The sequence of ``informative + sincere'' strategy profiles is constantly separable. 
\end{lem}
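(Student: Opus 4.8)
The plan is to reduce the \emph{constantly separable} property to one state-independent inequality, $P_{hL} < \frac{b}{d} < P_{hH}$, and then to promote this strict inequality into a margin of order $\ag$. First I would recall that the sincere threshold is common to all contingent agents and abbreviate its three constants:
\[
  \hst^* = \tf + \frac{a + \tc\, b}{d},\qquad
  a := \log\tfrac{P_L}{P_H},\quad
  b := \log\tfrac{P_{\ell L}}{P_{\ell H}},\quad
  d := \log\tfrac{P_{hH}P_{\ell L}}{P_{\ell H}P_{hL}}.
\]
The signal assumptions $P_{\ell H} < P_{\ell L}$ and $P_{hL} < P_{hH}$ give $b > 0$ and $d = \log\frac{P_{hH}}{P_{hL}} + b > 0$, so $\hst^*$ is well defined. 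Since $\tf + \tc P_{hL}$ and $\tf + \tc P_{hH}$ are exactly the expected first-round $\bA$-counts under $L$ and $H$, subtracting $\tf$ turns the target bounds into $\tc P_{hL} + \delta\ag \le \frac{a + \tc b}{d} \le \tc P_{hH} - \delta\ag$, i.e.\ into bounding $\frac{a}{d} + \tc\big(\frac{b}{d} - P_{hL}\big)$ and $\tc\big(P_{hH} - \frac{b}{d}\big) - \frac{a}{d}$ from below.

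The crux is the inequality $P_{hL} < \frac{b}{d} < P_{hH}$, which I would prove with a divergence argument. Consider the affine function
\[
  \phi(t) = t\log\tfrac{P_{hH}}{P_{hL}} + (1-t)\log\tfrac{P_{\ell H}}{P_{\ell L}} = t\,d - b,
\]
whose unique root is $t^* = \frac{b}{d}$ and whose slope $d > 0$ makes it strictly increasing. Writing it as $\phi(t) = D\big(\mathrm{Bern}(t)\,\|\,\mathrm{Bern}(P_{hL})\big) - D\big(\mathrm{Bern}(t)\,\|\,\mathrm{Bern}(P_{hH})\big)$ and using strict positivity of the KL divergence between distinct Bernoullis, I get $\phi(P_{hL}) = -D\big(\mathrm{Bern}(P_{hL})\,\|\,\mathrm{Bern}(P_{hH})\big) < 0$ and $\phi(P_{hH}) = D\big(\mathrm{Bern}(P_{hH})\,\|\,\mathrm{Bern}(P_{hL})\big) > 0$. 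As $\phi$ is increasing with its root at $t^*$, this pins $P_{hL} < t^* < P_{hH}$, so $\gamma_1 := \frac{b}{d} - P_{hL}$ and $\gamma_2 := P_{hH} - \frac{b}{d}$ are strictly positive constants independent of $\ag$.

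Finally I would convert $\gamma_1,\gamma_2$ into a linear margin. Since $\tf = \lfloor \agf \ag\rfloor$ and $\tu = \lfloor \agu\ag\rfloor$, we have $\tc = \ag - \tf - \tu \ge \agc\,\ag$, and $\agc > 0$ (the tie $\agf = \agu = \tfrac12$, $\agc = 0$ is excluded because it would make the informed majority decision state-independent). Hence $\frac{a}{d} + \tc\gamma_1 \ge \frac{a}{d} + \agc\gamma_1\ag$ and $\tc\gamma_2 - \frac{a}{d} \ge \agc\gamma_2\ag - \frac{a}{d}$; the fixed constant $\frac{a}{d}$ is dominated once $\ag$ is large, so both exceed $\frac{\agc}{2}\min(\gamma_1,\gamma_2)\,\ag$ for all $\ag \ge \ag_0$. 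Taking $\delta := \frac{\agc}{2}\min(\gamma_1,\gamma_2) > 0$ and this $\ag_0$ gives exactly the constantly separable bounds. The only non-routine step is the divergence inequality $P_{hL} < \frac{b}{d} < P_{hH}$; the rest is bookkeeping over the floor functions and the additive constant $a$.
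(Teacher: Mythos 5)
Your proof is correct and follows essentially the same route as the paper's: both reduce constant separability to showing that $\frac{b}{d} = \frac{\log(P_{\ell L}/P_{\ell H})}{\log(P_{hH}P_{\ell L}/(P_{\ell H}P_{hL}))}$ lies strictly between $P_{hL}$ and $P_{hH}$, identify the two margins as KL divergences between the conditional signal distributions (your $\phi(P_{hL})$ and $\phi(P_{hH})$ are exactly the paper's two right-hand sides), and then pick $\delta$ proportional to $\agc$ times those constants. Your packaging via the affine function $\phi$ and your explicit handling of the floor functions, the sign of $a/d$, and the assumption $\agc > 0$ are slightly more careful than the paper's write-up, but the underlying argument is the same.
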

Lemma~\ref{lem:sincere_is_threshold} shares the same idea with Lemma~\ref{lem:fidelity_threshold}. Recall that threshold $\hst^*$ is the boundary for $\Pr[H \mid \hst, m] \ge \frac12$. When $\hst$ is close to the upper bound (expected utility when the world state is $H$), $\Pr[H \mid \hst, m]$ is close to 1; and when $\hst$ is close to the lower bound (expected utility when the world state is $L$), $\Pr[H \mid \hst, m]$ is close to 0. Therefore, the threshold must be in between.

\section{Two-Round Voting is More Powerful}
\label{sec:natural}
Section~\ref{sec:one_round_unnatural} and Section~\ref{sec:model} form a sharp contrast between one-round and two-round voting equilibria. In the one-round voting, any \textinst{} with a biased signal structure has no simple and natural strategies form equilibria (and, equivalently, reach the informed majority decision). In the two-round voting, Theorem~\ref{thm:sincere} guarantees a natural strategy profile to reach the informed majority decision and be an equilibrium. 

In this section, we show that in any sequence of \textinst{}s where there exists a one-round equilibrium (where predetermined agents play their dominant strategy), there always exists a two-round equilibrium that will not be more ``complicated'' than the one-round equilibrium. 
Precisely, in this two-round equilibrium, agents disregard the first-round voting and play exactly as the one-round equilibrium in the second round. 
A one-round voting strategy profile is {\em regular} if all predetermined agents play their dominant strategies.

\begin{thm}
\label{thm:one_and_two}
    (Informal) Under the same sequence of \textinst{}s, for any regular one-round voting strategy profile sequence being $o(1)$-strong BNE, any two-round voting profile sequence, where agents play exactly this one-round strategy in the second round, is also an $o(1)$-strong BNE. 
\end{thm}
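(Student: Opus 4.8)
The plan is to construct the two-round profile explicitly, carry the fidelity guarantee over from the one-round side, and then eliminate the only deviations that Lemma~\ref{lem:deviate} leaves open. Given the regular one-round $o(1)$-strong BNE $\{\stgpbar_\ag\}$, I would define $\stgp_\ag = (\stgone_\ag,\stgtwo_\ag)$ by setting $\stgtwo_i(\sig,\hst) := \stgbar_i(\sig)$ for every $\hst$ (agents ignore the poll in the second round) and choosing the first-round strategy arbitrarily, e.g. $\stgone_i := \stgbar_i$. Since the winner of the two-round game is determined solely by the second round and the joint law of the second-round votes given the signals coincides with the law of the one-round votes, the winner distribution in each world state matches: $\lp^\bA_\wos(\stgp_\ag) = \lp^\bA_\wos(\stgpbar_\ag)$ and likewise for $\bR$. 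Hence $\ut_i(\stgp_\ag) = \ut_i(\stgpbar_\ag)$ for every agent and $\acc(\stgp_\ag) = \acc(\stgpbar_\ag)$.

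Next I would transfer fidelity and reduce the set of dangerous deviations. Because $\{\stgpbar_\ag\}$ is regular and an $o(1)$-strong BNE, the characterization of~\citet{han2023wisdom} (a regular one-round profile is an $o(1)$-strong BNE iff its fidelity tends to $1$) gives $\acc(\stgpbar_\ag)\to 1$, so $\acc(\stgp_\ag)\to 1$. I then set $\varepsilon_\ag := 6(1-\acc(\stgp_\ag)) = o(1)$ and argue by contradiction: if $\stgp_\ag$ were not an $\varepsilon_\ag$-strong BNE, some group $D$ with deviation $\stgp'$ would leave every member weakly better off and some member better off by more than $\varepsilon_\ag$. Applying Lemma~\ref{lem:deviate} with precisely this $\varepsilon_\ag$, such a $D$ must consist only of friendly agents or only of unfriendly agents; in particular the contingent agents, the one type that could actually profit from the polling information, are excluded.

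It then remains to rule out a homogeneous predetermined deviation. Take $D$ all-friendly (the unfriendly case is symmetric with $\bR$ in place of $\bA$), and form the auxiliary profile $\tilde\stgp$ that agrees with $\stgp'$ on every first-round action but forces each agent of $D$ to vote $\bA$ in the second round regardless of $(\sig,\hst)$. Switching one friendly agent's second-round vote to $\bA$ leaves the poll outcome $\hst$ and every other agent's second-round vote unchanged (non-deviators ignore $\hst$, and the remaining deviators condition only on $\hst$ and their own signals), so it can only add an $\bA$ vote; by monotonicity of the majority rule this weakly raises $\Pr[\bA\text{ wins}\mid\wos]$ in each state, and since friendly agents value exactly the event that $\bA$ wins, it weakly raises every friendly agent's utility. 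Doing this one agent at a time yields $\ut_j(\stgp')\le\ut_j(\tilde\stgp)$ for all friendly $j$. But in $\tilde\stgp$ all of $D$ votes $\bA$ in the second round exactly as in $\stgp_\ag$, and the first round never affects the winner, so $\ut_j(\tilde\stgp)=\ut_j(\stgp_\ag)$. Thus $\ut_j(\stgp')\le\ut_j(\stgp_\ag)$ for every friendly $j$, contradicting the existence of a member gaining more than $\varepsilon_\ag\ge 0$, and $\stgp_\ag$ is an $\varepsilon_\ag$-strong BNE with $\varepsilon_\ag=o(1)$.

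The hard part is the third step: a priori the poll hands a coalition information about the non-deviators' signals that was unavailable in one round, so I must argue this extra information is strategically worthless here. The resolution rests on two facts combined. Lemma~\ref{lem:deviate} removes contingent agents, the only type whose optimal vote genuinely depends on the inferred state, from any profitable coalition; and for a coalition of a single predetermined type the second-round best response is the constant vote $\bA$ (or $\bR$) no matter what the poll reveals, so conditioning on $\hst$ cannot help. The only subtlety requiring care is to phrase the comparison as a sequence of single-agent second-round switches, which is what keeps $\hst$ and all other votes fixed at each step and makes the monotonicity invocation legitimate.
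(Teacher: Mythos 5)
Your proposal is correct and takes essentially the same route as the paper's proof: transfer the fidelity guarantee from the regular one-round $o(1)$-strong BNE via the one-round if-and-only-if characterization, invoke Lemma~\ref{lem:deviate} (your constant $6(1-\acc(\stgp_\ag))$ is the paper's $2B(B+2)(1-\acc(\stgp_\ag))$ specialized to $B=1$) to restrict any profitable coalition to a single predetermined type, and then kill such coalitions by monotonicity of the majority rule in second-round $\bA$-votes. Your auxiliary-profile, one-switch-at-a-time formalization of that last step is a more careful write-up of the argument the paper states only briefly, but it is the same idea, not a different approach.
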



By not exploiting the information revelation in the first round, the two-round voting game was reduced to the one-round voting game. 
Therefore, we claim that two-round voting is a more powerful tool than one-round voting, as it can reach an informed majority decision whenever one-round voting can reach it, and it provides natural equilibrium in every \textinst{} including those where one-round voting fails to have a natural solution. 

\paragraph{Proof Sketch} The proof proceeds in three steps. Firstly, the fidelity of the two-round profile sequence converges to 1, because agents in the second round behave exactly as they are playing a one-round game, and the fidelity of the one-round profile converges to 1. Secondly, by applying Lemma~\ref{lem:deviate}, a deviating group either contains only friendly agents or only unfriendly agents. Finally, such a deviating group cannot succeed, similar to Lemma~\ref{lem:eq_threshold}. A more technical description of Theorem~\ref{thm:one_and_two} and the full proof is in Appendix~\ref{apx:one_and_two}.

\section{General Valuation Utility.}
\label{sec:general_util}
In this section, we extend our theoretical results to a setting with more general utility functions. 

\paragraph{General Valuation utility} The utility function of each agent $\vt_{\sag}: \Wosset \times \{\bA, \bR\}\to \{0,1, 2, \cdots ,B\}$, where $B > 0$ is the integer upper bound. 
In our model, $\bA$ is more welcomed in $H$, and $\bR$ is more welcomed in $L$: $\vt_{\sag}(H, \bA) \ge \vt_{\sag}(L, \bA)$, and  $\vt_{\sag}(L, \bR) \ge \vt_{\sag}(H, \bR)$. For an agent $\sag$, if $\sag$ is a friendly agent, $\vt_{\sag}(H,\bA) \ge \vt_{\sag}(L, \bA) > \vt_{\sag}(L, \bR) \ge \vt_{\sag}(H, \bR)$; if $\sag$ is an unfriendly agent, $\vt_{\sag}(L, \bR) \ge \vt_{\sag}(H, \bR) > \vt_{\sag}(H,\bA) \ge \vt_{\sag}(L, \bA)$; and if $\sag$ is a contingent agent, $\vt_{\sag}(H, \bA) > \vt_{\sag}(H, \bR)$ and $\vt_{\sag}(L,\bR) > \vt_{\sag}(L, \bA)$. 

Under the general valuation utility setting, agents of the same type have the same ordinal preference, yet their utility functions may be different. For example, the utility function of a contingent agent may not equals to (or be proportional to) the fidelity anymore. All other definitions in the setting remain the same. Specifically, in a symmetric strategy profile, agents with equal utility function play the same strategy. This allows agents with the same ordinal preferences to play differently. 

As the utility diverges, a sincere voter does not always vote for the alternative that is more likely to be the informed majority decision. Instead, they vote for the alternative that brings them a higher expected utility. 
\begin{dfn}[Sincere Voting in the second round.]
    A sincere agent votes as if he/she is making an individual decision. The agent compares his/her expected utility conditioned on his/her private signal $m$, the first-round outcome $\hst$, and the event that $\bA$/$\bR$ becomes the winner, respectively.
    \begin{align*}
    \ut_{\sag}(\bA\mid \hst, \sig) = & \Pr[\Wosrv = L \mid \hst, \sig] \cdot \vt_{\sag}(L, \bA) + \Pr[\Wosrv = H \mid \hst, \sig] \cdot \vt_{\sag}(H, \bA). \\
    \ut_{\sag}(\bR\mid \hst, \sig) = & \Pr[\Wosrv = L \mid\hst, \sig] \cdot \vt_{\sag}(L,\bR) + \Pr[\Wosrv = H \mid \hst, \sig] \cdot \vt_{\sag}(H,\bR). 
\end{align*}

If $\ut_{\sag}(\bA\mid \hst, \sig) \ge \ut_{\sag}(\bR\mid \hst, \sig)$, the agent vote for $\bA$; otherwise, the agent vote for $\bR$. 
\end{dfn}

Now we extend our results to the general valuation setting. The sharp contrast between the one-round voting and the two-round voting still holds. For one-round voting, instances with NO pure symmetric equilibrium still exist. 

\begin{propnb}{prop:unnatural}
\label{prop:unnatural_nb}
There exists a sequence of \textinst{}s $\{\instance_n\}$ such that no sequence pure symmetric strategy profiles $\{\stgpbar_\ag\}$ of the one-round voting game satisfies that $\stgpbar_{\ag}$ is a $\varepsilon$-strong BNE for every $\ag$ with $\varepsilon \to 0$.  
\end{propnb}

The \textinst{}s specified in Proposition~\ref{prop:unnatural}, where all agents are contingent (and have equal utility functions) and the signal is biased, serves as the proof for Proposition~\ref{prop:unnatural_nb}. 

In the sharp contrast, all our positive results in the two-round voting holds in the general valuation utility setting. 

\begin{thmnb}{thm:eqtoacc}[{\bf All $\varepsilon$-BNE are good}{}]
    \label{thm:eqtoacc_nb}
Let $\{\stgp_{\ag}\}$ be a sequence of profiles in the two-round voting game. If for any $\ag$, $\stgp_{\ag}$ is an $\varepsilon$-strong BNE with $\varepsilon = o(1)$, then $\lim_{\ag\to\infty} \acc(\stgp_\ag) = 1$.
\end{thmnb}

\begin{thmnb}{thm:sincere}[{\bf``Informative + sincere'' equilibrium}{}]
    \label{thm:sincere_nb}
    For any \textinst{} sequence $\{\instance_{\ag}\}$, the sequence of the ``information + sincere'' two-round voting strategy profiles $\{\stgp^{\dag}_{\ag}\}$ satisfies that (1) $\lim_{\ag\to\infty}\acc(\stgp^{\dag}_{\ag}) = 1$,  and (2) for every $\ag$, $\stgp^{\dag}_{\ag}$ is a $\varepsilon$-strong BNE where $\varepsilon$ converges to 0 as $\ag$ goes to $\infty$. 
\end{thmnb}

\begin{thmnb}{thm:threshold}[{\bf``Informative + threshold'' equilibrium}{}]
\label{thm:threshold_nb}
    Under any \textinst{} sequence $\{\instance_{\ag}\}$, every constantly separable sequence of profiles $\{\stgp^*\}$ satisfies that (1) $\lim_{\ag\to\infty}\acc(\stgp^{*}_{\ag}) = 1$,  and (2) for every $\ag$, $\stgp^{*}_{\ag}$ is a $\varepsilon$-strong BNE where $\varepsilon$ converges to 0 as $\ag$ goes to $\infty$. 
\end{thmnb}

\begin{thmnb}{thm:one_and_two}
    (Informal) Under the same sequence of \textinst{}s, for any regular one-round voting strategy profile sequence being $o(1)$-strong BNE, any two-round voting profile sequence, where agents play exactly this one-round strategy in the second round, is also an $o(1)$-strong BNE. 
\end{thmnb}

\paragraph{Remark for Theorem~\ref{thm:sincere_nb}} Under the general valuation setting, as the behavior of sincere agents diverges with the utility, the ``sincere'' threshold depends on the utility of the agents. Formally, for a contingent agent $i$, 
\begin{equation*}
    \hst ^*_i = \tf + \frac{\log\frac{P_L \cdot (\vt_i(L, \bR) - \vt_i(L, \bA))}{P_H \cdot (\vt_i(H, \bA) - \vt_i(H, \bR))} + \tc \cdot \log \frac{P_{\ell L}}{P_{\ell H}}}{\log \frac{P_{hH}P_{\ell L}}{P_{\ell H}P_{hL}}}.
\end{equation*}
Agents with different utility functions may have different threshold $\hst^*_i$, yet this gap remains constant as $\ag$ grows. Therefore, in large elections, sincere voters adopt different yet close thresholds. 

The proofs of Theorem~\ref{thm:eqtoacc_nb} to~\ref{thm:one_and_two}' follow a similar reasoning to their 0-1 utility version, yet the general valuation utilities bring new technical challenges. As the preferences vary, the utility change of the agents are less aligned with the change in the fidelity. 
For example, in Lemma~\ref{lem:positive}, we show that a type of predetermined agents are happy with fidelity increases. However, in the general valuation setting, the Lemma further relies on the assumptions that agents have different sensitivities of utilities under different world state. The friendly agents are more sensitive in world state $H$ ($\vt_{\sag}(H,\bA) - \vt_{\sag}(H, \bR) > \vt_{\sag}(L, \bA) - \vt_{\sag}(L, \bR)$), while unfriendly agents are more sensitive in world state $L$. Therefore, they usually have higher utility increase when the fidelity increases compared to the utility decrease when the fidelity decreases, which improve their willingness to deviate to increase expected utility. All the proofs are in Appendix~\ref{apx:proof}


\section{Experiments}
{

In addition to our theoretical results, we conducted simulated voting experiments with generative AI agents to compare one-round and two-round voting. Specifically, we inject generative AI bots with different preferences and information drawn from given distributions, then run both one-round and two-round voting under various \textinst{} parameters.
Our findings show that the generative AI agents achieved the informed majority decision more often under the two-round mechanism than under the one-round mechanism. 


\subsection{Experimental Setting}

Our experiments employed two generative AI models—OpenAI GPT-4o and Deepseek R1, a newly released reasoning model. In each voting scenario, all voters were drawn from the same model.

For each vote, we set $\ag = 40$ agents, including $5$ friendly agents, $3$ unfriendly agents, and $32$ contingent agents. The prior is fixed at $P_H = 0.5$. We run experiments on two different signal distributions, the unbiased distribution $P_{hH} = 0.8, P_{hL} = 0.2$ and the biased distribution $P_{hH} = 0.8, P_{hL} = 0.6$.
We also test a wider range of signal distributions with fewer samples and 
select the distributions that capture different scenarios anticipated by our theoretical findings on the performance of one-round versus two-round voting. Under the unbiased distribution, both mechanisms 
admit natural equilibria that achieve the informed majority decision, so we expect strong performance from both. In contrast, the biased distribution, as shown by Proposition~\ref{prop:unnatural}, admits no pure-symmetric equilibria for one-round voting, whereas ``informative + sincere'' remains viable in the two-round mechanism—implying better performance in that setting.

We run 50 votes for each set of parameters.
In each vote, we first sampled the world state and signals for every agent from the designated distribution. Using these same samples, we then ran both a one-round and a two-round voting procedure. Specifically, we create $\ag$ generative AI agents for the one-round voting and another 
$\ag$ agents for the two-round voting.
In one-round voting, each agent was asked only once for its vote. In two-round voting, each agent voted twice: we first collected every agent’s first-round vote simultaneously, then shared the first-round outcome with all agents (including their own first-round response as a history message) when asking for the second-round vote.


The prompt for AI agents contains the following components. 

    \textit{Voting scenario.} We chose a company hiring setting: each AI agent plays the role of a human resource specialist and votes on whether to hire a candidate.
    
    \textit{World state and signals.} The world state is whether the candidate is qualified, which is not revealed to the agents. Each AI agent is told to have a Good or Bad impression of the candidate, representing its private signal.
    
    \textit{Prior and signal distribution.} AI agents are informed of the prior and signal distribution in two different ways. In the {\em accurate distribution} setting, every agent is told the exact distribution $(P_H, P_L)$ and $(P_{hH}, P_{hL}, P_{\ell H}, P_{\ell L})$. In the {\em vague distribution} setting, the information is transferred into a vague description via a Deepseek R1 chatbot. For example, the distribution $P_H = 0.5$, $P_{hH} = 0.8$, and $P_{hL} = 0.6$ is converted the following description. 

    \begin{ex}[Vague information in the prompt.]
Before meeting with the candidate, you thought she had a balanced chance to be qualified.
If the candidate is qualified, you will likely have a good impression after meeting with her with a high probability well above half.
If the candidate is not qualified, you will be less likely, though still moderately probable, to have a good impression at a likelihood distinctly lower than the qualified scenario but remaining above the halfway mark.\hfil\qed
\end{ex}
    
    \textit{Types and preferences.} Every agent is told their preferences based on their type. 
    
   \textit{Voting Mechanisms.} Every agent is informed that it will participate in either a one-round or a two-round voting procedure. For the two-round voting, we clarify that the first round does not determine the winner but is publicly observable.
   
    \textit{Vote.} In both the one-round voting and each round of the two-round voting, every agent is asked to determine and report its best vote (YES or NO), along with a brief explanation of its reasoning.


\subsection{Experimental Results}

\begin{table}[htbp]
\begin{tabular}{@{}cccccc@{}}
\toprule
\multirow{2}{*}{Model} & Distribution & \multicolumn{2}{c}{Unbiased (0.8, 0.2)} & \multicolumn{2}{c}{Biased  (0.8, 0.6)} \\ \cmidrule(l){2-6} 
 & Information & Accurate & Vague & Accurate & Vague \\ \midrule
\multirow{2}{*}{Deepseek R1} & One-round & 1 & 1 & \textbf{0.56} & \textbf{0.52} \\
 & Two-round & 1 & 1 & \textbf{0.88} & \textbf{0.78} \\
\multirow{2}{*}{GPT-4o} & One-round & 1 & 1 & 0.5 & 0.52 \\
 & Two-round & 1 & 1 & 0.48 & 0.5 \\ \bottomrule
\end{tabular}
\caption{The likelihood of the informed majority decision is reached. \label{tbl:GPT_IMD}}
\end{table}


In Table~\ref{tbl:GPT_IMD}, we compare how often one-round versus two-round voting achieves the informed majority decision under different signal distributions, information conditions, and generative AI models. When signals are unbiased, both voting mechanisms always reach the informed majority decision. However, their performances diverge significantly under biased signals. In particular, Deepseek R1 agents show a marked accuracy gap: one-round voting frequently misses the informed majority decision when the world state is unqualified (even though agents still receive a "Good" signal with probability 0.6), whereas two-round voting retains good performance under both world states. On the other hand, for the non-reasoning model GPT-4o, there is no significant different between the performance of one-round voting and two-round voting. 



In addition to the analysis of the informed majority decision, we also examine the behaviors of the AI agents from their reasoning and analysis and have some interesting observations. 

\paragraph{Bayesian Updates with accurate and vague information.} Almost every agent performs a Bayesian update to infer the likelihood of each world state based on the signal distribution and the private information it obtains. Even when vague information is given, many agents tend to guess a concrete value for each description and perform a Bayesian update based on the guessed values. 

universal switch in the second-round votes where a voter casts a different vote to its first-round vote. Examining the analysis of these voters, they usually find considerable votes in the first round that contradict their beliefs, which drives them to switch. On the other hand, there are also agents who believe more in their beliefs and stick to the same vote. 
An example of the prompts, answers, and reasoning of agents is in Appendix~\ref{apx:exp}. 

\subsection{Take-away Messages}

Our experiments illustrate a scenario where two-round voting is more likely to lead to an informed majority decision than one-round voting played by generative AI agents. 
These findings can guide future applications of generative AI in voting systems, such as AI-augmented or AI-proxy voting. As AI reasoning capabilities continue to improve, two-round voting will likely become even more effective in reaching informed outcomes. Moreover, our experiments serve as a practical reference for designing real-world experiments that compare different voting rules and explore strategic behavior. Because generative AI simulations allow for direct preference injection, they offer a valuable tool for preliminary experimentation and design.

}

\section{Conclusion and Future Work}
We study the voting problem with two alternatives where voters' preferences depend on a not-directly-observable state variable and propose the two-round voting mechanism where the first round serves as a polling stage and the winning alternative only depends on the outcome of the second round.
Our analysis demonstrates that this mechanism can achieve informed majority decisions under strategic behavior through a natural equilibrium combining informative first-round and sincere second-round voting. This sharply contrasts one-round voting mechanisms that often fail to provide both simplicity and effective equilibria simultaneously. Furthermore, we prove that the two-round mechanism supports second-round consistent equilibria, where voter behavior mirrors that of one-round voting scenarios. These properties establish two-round voting as an elegant solution that combines simplicity with robust information aggregation capabilities, ultimately achieving informed majority decisions effectively.



A future direction is to formulate the two-round voting game as an extensive-form game.
To incorporate the analysis with group manipulations, we need to first find a reasonable equilibrium concept that extends the strong Bayes Nash equilibrium to the extensive-form game.
It would be interesting to see if the corresponding equilibria lead to good election outcomes and compare the equilibria to the ones in this paper. 


In this paper, we assume that agents share a common prior on the world state and private signals. If agents have no information about the prior, then, it has been shown in various settings that strategic agents may not be able to make an informed majority decision~\citep{Feddersen97:Voting,schoenebeck21wisdom}.  However, it is still an interesting question of whether we could relax the common knowledge assumption.  For example, what if the agents had common knowledge about the distribution of possible world/signal models?  In this case, it remains unknown whether one-round or two-round voting mechanisms can still achieve the informed majority decision. We are interested in whether the polling stages in the two-round or multi-round voting can help agents share information and form a common belief so that the correct decision can be made eventually in the last round. 

Another possible direction is to conduct human-subject experiments comparing one-round and two-round voting mechanisms, as a further step of our generative AI experiments. We would like to know both if the polling stage indeed helps real-world voters to aggregate their information and preferences more easily, and how likely and when one-round and two-round mechanisms have different outcomes.


\bibliographystyle{ACM-Reference-Format}
\bibliography{references,newref}


\begin{thebibliography}{31}


\ifx \showCODEN    \undefined \def \showCODEN     #1{\unskip}     \fi
\ifx \showDOI      \undefined \def \showDOI       #1{#1}\fi
\ifx \showISBNx    \undefined \def \showISBNx     #1{\unskip}     \fi
\ifx \showISBNxiii \undefined \def \showISBNxiii  #1{\unskip}     \fi
\ifx \showISSN     \undefined \def \showISSN      #1{\unskip}     \fi
\ifx \showLCCN     \undefined \def \showLCCN      #1{\unskip}     \fi
\ifx \shownote     \undefined \def \shownote      #1{#1}          \fi
\ifx \showarticletitle \undefined \def \showarticletitle #1{#1}   \fi
\ifx \showURL      \undefined \def \showURL       {\relax}        \fi
\providecommand\bibfield[2]{#2}
\providecommand\bibinfo[2]{#2}
\providecommand\natexlab[1]{#1}
\providecommand\showeprint[2][]{arXiv:#2}

\bibitem[Austen-Smith(1993)]%
        {austen1993interested}
\bibfield{author}{\bibinfo{person}{David Austen-Smith}.}
  \bibinfo{year}{1993}\natexlab{}.
\newblock \showarticletitle{Interested experts and policy advice: Multiple
  referrals under open rule}.
\newblock \bibinfo{journal}{\emph{Games and Economic Behavior}}
  \bibinfo{volume}{5}, \bibinfo{number}{1} (\bibinfo{year}{1993}),
  \bibinfo{pages}{3--43}.
\newblock


\bibitem[Austen-Smith and Banks(1996)]%
        {Austen96:Information}
\bibfield{author}{\bibinfo{person}{David Austen-Smith} {and}
  \bibinfo{person}{Jeffrey~S. Banks}.} \bibinfo{year}{1996}\natexlab{}.
\newblock \showarticletitle{{Information Aggregation, Rationality, and the
  Condorcet Jury Theorem}}.
\newblock \bibinfo{journal}{\emph{The American Political Science Review}}
  \bibinfo{volume}{90}, \bibinfo{number}{1} (\bibinfo{year}{1996}),
  \bibinfo{pages}{34--45}.
\newblock


\bibitem[Baharad et~al\mbox{.}(2012)]%
        {baharad2012beyond}
\bibfield{author}{\bibinfo{person}{Eyal Baharad}, \bibinfo{person}{Jacob
  Goldberger}, \bibinfo{person}{Moshe Koppel}, {and} \bibinfo{person}{Shmuel
  Nitzan}.} \bibinfo{year}{2012}\natexlab{}.
\newblock \showarticletitle{Beyond Condorcet: Optimal aggregation rules using
  voting records}.
\newblock \bibinfo{journal}{\emph{Theory and decision}} \bibinfo{volume}{72},
  \bibinfo{number}{1} (\bibinfo{year}{2012}), \bibinfo{pages}{113--130}.
\newblock


\bibitem[Battaglini(2002)]%
        {battaglini2002multiple}
\bibfield{author}{\bibinfo{person}{Marco Battaglini}.}
  \bibinfo{year}{2002}\natexlab{}.
\newblock \showarticletitle{Multiple referrals and multidimensional cheap
  talk}.
\newblock \bibinfo{journal}{\emph{Econometrica}} \bibinfo{volume}{70},
  \bibinfo{number}{4} (\bibinfo{year}{2002}), \bibinfo{pages}{1379--1401}.
\newblock


\bibitem[Battaglini(2004)]%
        {battaglini2004policy}
\bibfield{author}{\bibinfo{person}{Marco Battaglini}.}
  \bibinfo{year}{2004}\natexlab{}.
\newblock \showarticletitle{Policy advice with imperfectly informed experts}.
\newblock \bibinfo{journal}{\emph{The BE Journal of Theoretical Economics}}
  \bibinfo{volume}{4}, \bibinfo{number}{1} (\bibinfo{year}{2004}),
  \bibinfo{pages}{0000102202153459631100}.
\newblock


\bibitem[Battaglini et~al\mbox{.}(2010)]%
        {Battaglini2010swing}
\bibfield{author}{\bibinfo{person}{Marco Battaglini},
  \bibinfo{person}{Rebecca~B. Morton}, {and} \bibinfo{person}{Thomas~R.
  Palfrey}.} \bibinfo{year}{2010}\natexlab{}.
\newblock \showarticletitle{{The Swing Voter's Curse in the Laboratory}}.
\newblock \bibinfo{journal}{\emph{The Review of Economic Studies}}
  \bibinfo{volume}{77}, \bibinfo{number}{1} (\bibinfo{date}{01}
  \bibinfo{year}{2010}), \bibinfo{pages}{61--89}.
\newblock
\showISSN{0034-6527}


\bibitem[Ben-Yashar and Zahavi(2011)]%
        {Ben11:Condorcet}
\bibfield{author}{\bibinfo{person}{Ruth Ben-Yashar} {and} \bibinfo{person}{Mor
  Zahavi}.} \bibinfo{year}{2011}\natexlab{}.
\newblock \showarticletitle{{The Condorcet jury theorem and extension of the
  franchise with rationally ignorant voters}}.
\newblock \bibinfo{journal}{\emph{Public Choice}} \bibinfo{volume}{148},
  \bibinfo{number}{3} (\bibinfo{year}{2011}), \bibinfo{pages}{435--443}.
\newblock


\bibitem[Black(1948)]%
        {Black48:Rationale}
\bibfield{author}{\bibinfo{person}{Duncan Black}.}
  \bibinfo{year}{1948}\natexlab{}.
\newblock \showarticletitle{On the rationale of group decision-making}.
\newblock \bibinfo{journal}{\emph{Journal of Political Economy}}
  \bibinfo{volume}{56}, \bibinfo{number}{1} (\bibinfo{year}{1948}),
  \bibinfo{pages}{23--34}.
\newblock


\bibitem[Chopra et~al\mbox{.}(2004)]%
        {Chopra2004Knowledge}
\bibfield{author}{\bibinfo{person}{Samir Chopra}, \bibinfo{person}{Eric
  Pacuit}, {and} \bibinfo{person}{Rohit Parikh}.}
  \bibinfo{year}{2004}\natexlab{}.
\newblock \showarticletitle{Knowledge-Theoretic Properties of Strategic
  Voting}. In \bibinfo{booktitle}{\emph{Logics in Artificial Intelligence}},
  \bibfield{editor}{\bibinfo{person}{J{\'o}se~J{\'u}lio Alferes} {and}
  \bibinfo{person}{Jo{\~a}o Leite}} (Eds.). \bibinfo{publisher}{Springer Berlin
  Heidelberg}, \bibinfo{address}{Berlin, Heidelberg}, \bibinfo{pages}{18--30}.
\newblock


\bibitem[Condorcet(1785)]%
        {Condorcet1785:Essai}
\bibfield{author}{\bibinfo{person}{Marquis~de Condorcet}.}
  \bibinfo{year}{1785}\natexlab{}.
\newblock \bibinfo{booktitle}{\emph{Essai sur l'application de l'analyse \`a la
  probabilit\'e des d\'ecisions rendues \`a la pluralit\'e des voix}}.
\newblock Paris: L'Imprimerie Royale.
\newblock


\bibitem[Coughlan(2000)]%
        {Coughlan2000:-In-Defense}
\bibfield{author}{\bibinfo{person}{Peter~J. Coughlan}.}
  \bibinfo{year}{2000}\natexlab{}.
\newblock \showarticletitle{{ In Defense of Unanimous Jury Verdicts: Mistrials,
  Communication, and Strategic Voting}}.
\newblock \bibinfo{journal}{\emph{The American Political Science Review}}
  \bibinfo{volume}{94}, \bibinfo{number}{2} (\bibinfo{year}{2000}),
  \bibinfo{pages}{375--393}.
\newblock


\bibitem[Crawford and Sobel(1982)]%
        {crawford1982strategic}
\bibfield{author}{\bibinfo{person}{Vincent~P Crawford} {and}
  \bibinfo{person}{Joel Sobel}.} \bibinfo{year}{1982}\natexlab{}.
\newblock \showarticletitle{Strategic information transmission}.
\newblock \bibinfo{journal}{\emph{Econometrica: Journal of the Econometric
  Society}} (\bibinfo{year}{1982}), \bibinfo{pages}{1431--1451}.
\newblock


\bibitem[Dekel and Piccione(2000)]%
        {Dekel00:Sequential}
\bibfield{author}{\bibinfo{person}{Eddie Dekel} {and} \bibinfo{person}{Michele
  Piccione}.} \bibinfo{year}{2000}\natexlab{}.
\newblock \showarticletitle{Sequential Voting Procedures in Symmetric Binary
  Elections}.
\newblock \bibinfo{journal}{\emph{Journal of Political Economy}}
  \bibinfo{volume}{108} (\bibinfo{year}{2000}), \bibinfo{pages}{34--55}.
\newblock
Issue 1.


\bibitem[Esponda and Vespa(2014)]%
        {Ignacio2014hypo}
\bibfield{author}{\bibinfo{person}{Ignacio Esponda} {and}
  \bibinfo{person}{Emanuel Vespa}.} \bibinfo{year}{2014}\natexlab{}.
\newblock \showarticletitle{Hypothetical Thinking and Information Extraction in
  the Laboratory}.
\newblock \bibinfo{journal}{\emph{American Economic Journal: Microeconomics}}
  \bibinfo{volume}{6}, \bibinfo{number}{4} (\bibinfo{date}{November}
  \bibinfo{year}{2014}), \bibinfo{pages}{180–202}.
\newblock


\bibitem[Feddersen and Pesendorfer(1997)]%
        {Feddersen97:Voting}
\bibfield{author}{\bibinfo{person}{Timothy Feddersen} {and}
  \bibinfo{person}{Wolfgang Pesendorfer}.} \bibinfo{year}{1997}\natexlab{}.
\newblock \showarticletitle{{Voting Behavior and Information Aggregation in
  Elections With Private Information}}.
\newblock \bibinfo{journal}{\emph{Econometrica}} \bibinfo{volume}{65},
  \bibinfo{number}{5} (\bibinfo{year}{1997}), \bibinfo{pages}{1029--1058}.
\newblock


\bibitem[Feddersen and Pesendorfer(1998)]%
        {Feddersen1998:innocent}
\bibfield{author}{\bibinfo{person}{Timothy Feddersen} {and}
  \bibinfo{person}{Wolfgang Pesendorfer}.} \bibinfo{year}{1998}\natexlab{}.
\newblock \showarticletitle{Convicting the Innocent: The Inferiority of
  Unanimous Jury Verdicts under Strategic Voting}.
\newblock \bibinfo{journal}{\emph{American Political Science Review}}
  \bibinfo{volume}{92}, \bibinfo{number}{1} (\bibinfo{year}{1998}),
  \bibinfo{pages}{23–35}.
\newblock


\bibitem[Gerardi and Yariv(2007)]%
        {Geraldi2007deliberative}
\bibfield{author}{\bibinfo{person}{Dino Gerardi} {and} \bibinfo{person}{Leeat
  Yariv}.} \bibinfo{year}{2007}\natexlab{}.
\newblock \showarticletitle{Deliberative voting}.
\newblock \bibinfo{journal}{\emph{Journal of Economic Theory}}
  \bibinfo{volume}{134}, \bibinfo{number}{1} (\bibinfo{year}{2007}),
  \bibinfo{pages}{317--338}.
\newblock
\showISSN{0022-0531}


\bibitem[Goeree and Yariv(2011)]%
        {Goeree2011experimental}
\bibfield{author}{\bibinfo{person}{Jacob~K. Goeree} {and}
  \bibinfo{person}{Leeat Yariv}.} \bibinfo{year}{2011}\natexlab{}.
\newblock \showarticletitle{An Experimental Study of Collective Deliberation}.
\newblock \bibinfo{journal}{\emph{Econometrica}} \bibinfo{volume}{79},
  \bibinfo{number}{3} (\bibinfo{year}{2011}), \bibinfo{pages}{893--921}.
\newblock


\bibitem[Goertz and Maniquet(2014)]%
        {Goertz14:Condorcet}
\bibfield{author}{\bibinfo{person}{Johanna~M.M. Goertz} {and}
  \bibinfo{person}{Fran{\c c}ois Maniquet}.} \bibinfo{year}{2014}\natexlab{}.
\newblock \showarticletitle{{Condorcet Jury Theorem: An example in which
  informative voting is rational but leads to inefficient information
  aggregation}}.
\newblock \bibinfo{journal}{\emph{Economics Letters}} \bibinfo{volume}{125},
  \bibinfo{number}{1} (\bibinfo{year}{2014}), \bibinfo{pages}{25--28}.
\newblock


\bibitem[Gradstein and Nitzan(1987)]%
        {Gradstein87:Organizational}
\bibfield{author}{\bibinfo{person}{Mark Gradstein} {and}
  \bibinfo{person}{Shmuel Nitzan}.} \bibinfo{year}{1987}\natexlab{}.
\newblock \showarticletitle{Organizational decision-making quality and the
  severity of the free-riding problem}.
\newblock \bibinfo{journal}{\emph{Economics Letters}} \bibinfo{volume}{23},
  \bibinfo{number}{4} (\bibinfo{year}{1987}), \bibinfo{pages}{335--339}.
\newblock


\bibitem[Han et~al\mbox{.}(2023)]%
        {han2023wisdom}
\bibfield{author}{\bibinfo{person}{Qishen Han}, \bibinfo{person}{Grant
  Schoenebeck}, \bibinfo{person}{Biaoshuai Tao}, {and} \bibinfo{person}{Lirong
  Xia}.} \bibinfo{year}{2023}\natexlab{}.
\newblock \showarticletitle{The Wisdom of Strategic Voting}. In
  \bibinfo{booktitle}{\emph{Proceedings of the 24th ACM Conference on Economics
  and Computation}} (London, United Kingdom). \bibinfo{publisher}{Association
  for Computing Machinery}, \bibinfo{address}{New York, NY, USA},
  \bibinfo{pages}{885–905}.
\newblock
\showISBNx{9798400701047}


\bibitem[Kaniovski(2010)]%
        {Kaniovski10:Aggregation}
\bibfield{author}{\bibinfo{person}{Serguei Kaniovski}.}
  \bibinfo{year}{2010}\natexlab{}.
\newblock \showarticletitle{{Aggregation of correlated votes and Condorcet's
  Jury Theorem}}.
\newblock \bibinfo{journal}{\emph{Theory and Decision}} \bibinfo{volume}{69},
  \bibinfo{number}{3} (\bibinfo{year}{2010}), \bibinfo{pages}{453--468}.
\newblock


\bibitem[Morgan and Stocken(2008)]%
        {Morgan2008Polls}
\bibfield{author}{\bibinfo{person}{John Morgan} {and}
  \bibinfo{person}{Phillip~C. Stocken}.} \bibinfo{year}{2008}\natexlab{}.
\newblock \showarticletitle{Information Aggregation in Polls}.
\newblock \bibinfo{journal}{\emph{American Economic Review}}
  \bibinfo{volume}{98}, \bibinfo{number}{3} (\bibinfo{date}{June}
  \bibinfo{year}{2008}), \bibinfo{pages}{864--96}.
\newblock
\urldef\tempurl%
\url{https://doi.org/10.1257/aer.98.3.864}
\showDOI{\tempurl}


\bibitem[Myerson(1998)]%
        {Myerson98:Extended}
\bibfield{author}{\bibinfo{person}{Roger~B. Myerson}.}
  \bibinfo{year}{1998}\natexlab{}.
\newblock \showarticletitle{{Extended Poisson Games and the Condorcet Jury
  Theorem}}.
\newblock \bibinfo{journal}{\emph{Games and Economic Behavior}}
  \bibinfo{volume}{25}, \bibinfo{number}{1} (\bibinfo{year}{1998}),
  \bibinfo{pages}{111--131}.
\newblock


\bibitem[Nitzan and Paroush(1980)]%
        {Nitzan80:Investment}
\bibfield{author}{\bibinfo{person}{Shmuel Nitzan} {and} \bibinfo{person}{Jacob
  Paroush}.} \bibinfo{year}{1980}\natexlab{}.
\newblock \showarticletitle{{Investment in Human Capital and Social Self
  Protection under Uncertainty}}.
\newblock \bibinfo{journal}{\emph{International Economic Review}}
  \bibinfo{volume}{21}, \bibinfo{number}{3} (\bibinfo{year}{1980}),
  \bibinfo{pages}{547--557}.
\newblock


\bibitem[Nitzan and Paroush(1984)]%
        {Nitzan84:Significance}
\bibfield{author}{\bibinfo{person}{Shmuel Nitzan} {and} \bibinfo{person}{Jacob
  Paroush}.} \bibinfo{year}{1984}\natexlab{}.
\newblock \showarticletitle{{The significance of independent decisions in
  uncertain dichotomous choice situations}}.
\newblock \bibinfo{journal}{\emph{Theory and Decision}} \bibinfo{volume}{17},
  \bibinfo{number}{1} (\bibinfo{year}{1984}), \bibinfo{pages}{47--60}.
\newblock


\bibitem[Prelec et~al\mbox{.}(2017)]%
        {prelec2017solution}
\bibfield{author}{\bibinfo{person}{Dra{\v{z}}en Prelec},
  \bibinfo{person}{H~Sebastian Seung}, {and} \bibinfo{person}{John McCoy}.}
  \bibinfo{year}{2017}\natexlab{}.
\newblock \showarticletitle{A solution to the single-question crowd wisdom
  problem}.
\newblock \bibinfo{journal}{\emph{Nature}} \bibinfo{volume}{541},
  \bibinfo{number}{7638} (\bibinfo{year}{2017}), \bibinfo{pages}{532--535}.
\newblock


\bibitem[Schoenebeck and Tao(2021)]%
        {schoenebeck21wisdom}
\bibfield{author}{\bibinfo{person}{Grant Schoenebeck} {and}
  \bibinfo{person}{Biaoshuai Tao}.} \bibinfo{year}{2021}\natexlab{}.
\newblock \showarticletitle{Wisdom of the Crowd Voting: Truthful Aggregation of
  Voter Information and Preferences}. In \bibinfo{booktitle}{\emph{Advances in
  Neural Information Processing Systems}}, Vol.~\bibinfo{volume}{34}.
  \bibinfo{publisher}{Curran Associates, Inc.}, \bibinfo{pages}{1872--1883}.
\newblock


\bibitem[Shapley and Grofman(1984)]%
        {Shapley84:Optimizing}
\bibfield{author}{\bibinfo{person}{Lloyd Shapley} {and}
  \bibinfo{person}{Bernard Grofman}.} \bibinfo{year}{1984}\natexlab{}.
\newblock \showarticletitle{Optimizing group judgmental accuracy in the
  presence of interdependencies}.
\newblock \bibinfo{journal}{\emph{Public Choice}} \bibinfo{volume}{43},
  \bibinfo{number}{3} (\bibinfo{year}{1984}), \bibinfo{pages}{329--343}.
\newblock


\bibitem[Wit(1998)]%
        {Wit98:Rational}
\bibfield{author}{\bibinfo{person}{J\"orgen Wit}.}
  \bibinfo{year}{1998}\natexlab{}.
\newblock \showarticletitle{{Rational Choice and the Condorcet Jury Theorem}}.
\newblock \bibinfo{journal}{\emph{Games and Economic Behavior}}
  \bibinfo{volume}{22}, \bibinfo{number}{2} (\bibinfo{year}{1998}),
  \bibinfo{pages}{364---376}.
\newblock


\bibitem[Young(1988)]%
        {Young88:Condorcet}
\bibfield{author}{\bibinfo{person}{H.~Peyton Young}.}
  \bibinfo{year}{1988}\natexlab{}.
\newblock \showarticletitle{Condorcet's Theory of Voting}.
\newblock \bibinfo{journal}{\emph{American Political Science Review}}
  \bibinfo{volume}{82} (\bibinfo{year}{1988}), \bibinfo{pages}{1231--1244}.
\newblock


\end{thebibliography}

\clearpage
\appendix


\section{More equilibrium analysis in one-round voting}
\label{apx:one_round}
{

In this section, we give more examples where natural strategies fail to be equilibria in the one-round voting, including sincere voting (where an agent chooses the alternative maximizing his/her expected utility when this alternative becomes the winner) and the Bayesian strategy (an agent votes for an alternative with the probability of the Bayesian posterior probability that this alternative is the informed majority decision). 

We follow the example in Section~\ref{sec:one_round_unnatural} where all agents are contingent, and the signal distribution is $P_{hH}=0.9, P_{lH}=0.1, P_{hL}=0.7$, and $P_{lL}=0.3$. In Section~\ref{sec:one_round_unnatural} we have shown that any pure symmetric strategy fails to be an equilibrium in this setting. Now we show that sincere voting fails, and the Bayesian strategy is an equilibrium only when 
common prior is $P_H = P_L = 0.5$,

One natural strategy is to choose the alternative that maximizes the agent's expected utility, assuming this agent is the only voter who decides the outcome.
This is called \emph{the sincere strategy}. 
Suppose the common prior is $P_H = P_L = 0.5$. 
An agent's posterior belief for world $H$ upon receiving signal $h$ is
$$\Pr(W=H\mid S_i=h)=\frac{\Pr(W=H)\cdot\Pr(S_i=h\mid W=H)}{\Pr(S_i=h)}=\frac{0.5\times 0.9}{0.5\times 0.9+0.5\times0.7}=\frac9{16},$$
and the posterior belief for world $H$ upon receiving signal $l$ is
$$\Pr(W=H\mid S_i=l)=\frac{\Pr(W=H)\cdot\Pr(S_i=l\mid W=H)}{\Pr(S_i=l)}=\frac{0.5\times 0.1}{0.5\times 0.1+0.5\times0.3}=\frac14.$$
If an agent $i$ receives $h$, the expected utility of voting for $\bA$ is
$$\Pr(W=H\mid S_i=h)\cdot v_i(H,\bA)+\Pr(W=L\mid S_i=h)\cdot v_i(L,\bA)=\frac9{16}v_i(H,\bA)+\frac7{16}v_i(L,\bA),$$
and the expected utility of voting for $\bR$ is
$$\Pr(W=H\mid S_i=h)\cdot v_i(H,\bR)+\Pr(W=L\mid S_i=h)\cdot v_i(L,\bR)=\frac9{16}v_i(H,\bR)+\frac7{16}v_i(L,\bR).$$
When voting sincerely, the agent receiving signal $h$ will vote for $\bA$ if and only if the expected utility for $\bA$ is higher:
$$\frac9{16}v_i(H,\bA)+\frac7{16}v_i(L,\bA)>\frac9{16}v_i(H,\bR)+\frac7{16}v_i(L,\bR).$$
Similarly, the agent receiving signal $l$ will vote for $\bA$ if and only if
$$\frac14v_i(H,\bA)+\frac34v_i(L,\bA)>\frac14v_i(H,\bR)+\frac34v_i(L,\bR).$$
The utility for a contingent agent satisfies that $v_i(H,\bA)>v_i(H,\bR)$ and $v_i(L,\bR)>v_i(L,\bA)$.
However, when agents care more about the outcome under world $H$ and are relatively indifferent about the outcome under world $L$, they will always vote for $\bA$ and it is not favored by them under world $L$.
In our example, if $v_i(H,\bA)-v_i(H,\bR)>3\cdot(v_i(L,\bR)-v_i(L,\bA))$, agents will vote for $\bA$ even when signal $l$ is received.
Thus, the sincere strategy profile does not guarantee the informed majority decision, and it is not a strong BNE by the if-and-only-if characterization.

Perhaps one more natural strategy is to vote according to the posterior, i.e., adopt the mixed strategy with probabilities matching the posterior belief: an agent votes for $\bA$ with probability $\beta_{\ell} = \Pr(W=H\mid S_i=l)$ when receiving signal $\ell$ and probability $\beta_h=\Pr(W=H\mid S_i=h)$ when receiving signal $h$.
In our specific \textinst{}, with the prior belief $P_H=P_L=0.5$, our previous calculations reveal that the strategy is given by $(\beta_l,\beta_h)=(\frac14,\frac9{16})$.
When the actual world is $H$, the ``correct'' alternative $\bA$ gets a share of $0.1\cdot\frac14+0.9\cdot\frac9{16}=53.125\%$ in expectation;
when the actual world is $L$, the ``incorrect'' alternative $\bA$ wins a share of
$0.3\cdot\frac14+0.7\cdot\frac9{16}=46.875\%$ in expectation.
This looks good: the ``correct'' alternative always receives more than half of the votes in expectation.
Some calculations reveal that the informed majority decision can be guaranteed (more formally, the fidelity goes to $1$ when $n\rightarrow\infty$) by this kind of matching-posterior strategy profile if the prior belief is half-half (i.e., $P_H=P_L=0.5$).
However, if the prior belief is significantly biased, say, $P_H$ is close to $1$, some calculations reveal that the expected fraction of $\bA$ votes is close to $1$ under both world states, in which case the informed majority decision is not secured under the world state $L$. 


\section{Full Proofs}
\label{apx:proof}
{

We abuse the notation for deterministic first/second round strategy such that $\stgone_{\sag}(\sig) = \bA$ means that $\sag$ always votes for $\bA$ (with probability $1$) in the first round when her signal is $\sig$. Other deterministic votes follow a similar notation.

The proofs are written under the general valuation setting and can be transferred to those under 0-1 utility setting by assigning the utilities. 


\subsection{Proof of Lemma~\ref{lem:deviate}}
\label{apx:deviate}

\begin{lembis}{lem:deviate}[Formal]
    Let $\stgp^*$ be an arbitrary profile, and let $\varepsilon = 2B(B+2)(1 - \acc(\stgp^*))$. If there exists a group of agents $D$ and another strategy profile $\stgp'$ such that agents in $D$ have incentives to deviate to $\stgp'$ so that (1) $\forall$ $i\in D$, $\ut_{\sag}(\stgp') \ge \ut_{\sag}(\stgp^*)$, and (2) there exists $\sag\in D$ such that $\ut_{\sag}(\stgp') > \ut_{\sag}(\stgp) +\varepsilon$, 
    then either $D \subseteq F$ or $D \subseteq U$. 
\end{lembis}


    In this proof, we use $\stgp^*$ to denote $\stgp^*_{\ag}$ for simplicity. We show that for any other strategy profile $\stgp'$, there does not exist a group of agents wishing to deviate. Let $e = 1 - \acc(\stgp^*)$. We know that $e$ converges to $0$ as $\ag$ goes to $1$. We take $\varepsilon = 2B(B+2)e$. 
    

\begin{proof}[Proof of Lemma~\ref{lem:deviate}]
    We consider two different cases of the fidelity of $\stgp'$.

    \paragraph{Case 1: $(1 - \acc(\stgp')) < (2B + 2)e$.} In this case, since the fidelity of $\stgp'$ is also close to one, two profiles make no significant difference, and no agent can gain more than $\varepsilon = 2B(B+2)e$.

    Consider the difference of expected utility in two profiles. 

    \begin{align*}
        \ut_{\sag}(\stgp') - \ut_{\sag}(\stgp^*) = &\ P_H\cdot (\lp_{H}^{\bA}(\stgp') - \lp_{H}^{\bA}(\stgp^*))(\vt_{\sag}(H, \bA) - \vt_{\sag}(H,\bR))\\
        &\ + P_L\cdot (\lp_{L}^{\bR}(\stgp') - \lp_{L}^{\bR}(\stgp^*))(\vt_{\sag}(L, \bR) - \vt_{\sag}(L,\bA)).
    \end{align*}

    Since we know that $\acc(\stgp^*)=1 - e$, we have $\lp_{H}^{\bA}(\stgp^*) \ge 1 - \frac{e}{P_H}$ and $\lp_{L}^{\bR}(\stgp^*) \ge 1 - \frac{e}{P_L}$. Similarly, for $\stgp'$, we have $\lp_{H}^{\bA}(\stgp') \ge 1 - \frac{(2B+2)e}{P_H}$ and $\lp_{L}^{\bR}(\stgp') \ge 1 - \frac{(2B+2)e}{P_L}$. Then we can bound the difference between these probabilities.
    \begin{align*}
        \lp_{L}^{\bR}(\stgp^*) - \lp_{L}^{\bR}(\stgp') \le &\ \frac{(2B+2)e}{P_L},\\
        \lp_{L}^{\bR}(\stgp') - \lp_{L}^{\bR}(\stgp^*) \le &\ \frac{e}{P_L},\\
        \lp_{H}^{\bA}(\stgp^*) - \lp_{H}^{\bA}(\stgp') \le &\ \frac{(2B+2)e}{P_H},\\
        \lp_{H}^{\bA}(\stgp') - \lp_{H}^{\bA}(\stgp^*) \le &\ \frac{e}{P_H}.
    \end{align*}
    Then we can bound the expected utilities.
    \begin{itemize}
        \item For the friendly agents, $\vt_{\sag}(H, \bA) > \vt_{\sag}(H, \bR)$ and $\vt_{\sag}(L, \bA) > \vt_{\sag}(L, \bR)$. Then 
        \begin{align*}
            \ut_{\sag}(\stgp') - \ut_{\sag}(\stgp^*) \le &\ P_L\cdot \frac{(2B+2)e}{P_L}\cdot B + P_H \cdot \frac{e}{P_H}\cdot B\\
            =&\ 2B(B+2)e.
        \end{align*}
        \item For contingent agents, $\vt_{\sag}(H, \bA) > \vt_{\sag}(H, \bR)$ and $\vt_{\sag}(L, \bA) < \vt_{\sag}(L, \bR)$. Then 
        \begin{align*}
            \ut_{\sag}(\stgp') - \ut_{\sag}(\stgp^*) \le &\ P_L\cdot \frac{e}{P_L}\cdot B + P_H \cdot \frac{e}{P_H}\cdot B\\
            =&\ 2Be.
        \end{align*}
        \item For the unfriendly agents, we have $\vt_{\sag}(H, \bA) < \vt_{\sag}(H, \bR)$ and $\vt_{\sag}(L, \bA) < \vt_{\sag}(L, \bR)$. Then 
        \begin{align*}
            \ut_{\sag}(\stgp') - \ut_{\sag}(\stgp^*) \le &\ P_L\cdot \frac{e}{P_L}\cdot B + P_H \cdot \frac{(2B+2)e}{P_H}\cdot B\\
            =&\ 2B(B+2)e.
        \end{align*}
    \end{itemize} 
    Therefore, no agents can gain more than $\varepsilon = 2B(B+2)e$.

    \paragraph{Case 2: $(1 - \acc(\stgp'))\ge (2B+2)e$. } In this case, we proceed with our proof in two steps. 
    \begin{enumerate}
        \item Firstly, contingent agents will not deviate, because their expected utility will strictly decrease. 
        \item Secondly, a deviating group cannot contain both friendly and unfriendly agents.  
    \end{enumerate}

    \begin{claim}
        \label{claim:contingent}
       If $(1 - \acc(\stgp'))\ge (2B+2)e$, then for any contingent agent $\sag$, $\ut_{\sag}(\stgp') - \ut_{\sag}(\stgp^*) < 0$. 
    \end{claim}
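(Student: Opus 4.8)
The plan is to reduce the whole statement to the two win-probability gaps and exploit that a \emph{contingent} agent's two marginal utilities are both strictly positive. Writing $e = 1-\acc(\stgp^*)$, I would first expand the utility difference exactly as in Case~1 of the lemma, using $\lp_L^{\bA}+\lp_L^{\bR}=\lp_H^{\bA}+\lp_H^{\bR}=1$, to obtain
\[
\ut_{\sag}(\stgp') - \ut_{\sag}(\stgp^*) = c_H\, P_H\,\Delta_H + c_L\, P_L\,\Delta_L,
\]
where $\Delta_H = \lp_H^{\bA}(\stgp') - \lp_H^{\bA}(\stgp^*)$, $\Delta_L = \lp_L^{\bR}(\stgp') - \lp_L^{\bR}(\stgp^*)$, and $c_H = \vt_{\sag}(H,\bA)-\vt_{\sag}(H,\bR)$, $c_L = \vt_{\sag}(L,\bR)-\vt_{\sag}(L,\bA)$. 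The point of restricting to a contingent agent is that both coefficients are strictly positive; since utilities are integers in $\{0,\dots,B\}$, they satisfy $1 \le c_H, c_L \le B$.

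Next I would record two quantitative facts about the gaps $\Delta_H,\Delta_L$. Summing the two fidelity terms gives $P_H\,\Delta_H + P_L\,\Delta_L = \acc(\stgp')-\acc(\stgp^*)$; combining $\acc(\stgp^*)=1-e$ with the case hypothesis $\acc(\stgp') \le 1-(2B+2)e$ yields
\[
P_H\,\Delta_H + P_L\,\Delta_L \;\le\; -(2B+1)\,e .
\]
Separately, because $\lp_H^{\bA}(\stgp^*) \ge 1 - e/P_H$ and $\lp_L^{\bR}(\stgp^*) \ge 1 - e/P_L$ (the one-sided bounds already used in Case~1) and each win probability is at most $1$, I obtain the individual upper bounds $P_H\,\Delta_H \le e$ and $P_L\,\Delta_L \le e$.

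The main obstacle is that $c_H$ and $c_L$ need not be equal, so I cannot read the sign off the fidelity gap directly; a short sign analysis on $(\Delta_H,\Delta_L)$ is needed. If both $\Delta_H \le 0$ and $\Delta_L \le 0$, then scaling each nonpositive term by a factor $\ge 1$ only decreases it, so $\ut_{\sag}(\stgp')-\ut_{\sag}(\stgp^*) \le P_H\Delta_H + P_L\Delta_L \le -(2B+1)e < 0$. If instead one gap, say $\Delta_H$, is positive, its contribution is bounded by $c_H P_H \Delta_H \le B\,e$ (using $c_H \le B$ and $P_H\Delta_H \le e$), while $P_L\Delta_L \le -(2B+1)e - P_H\Delta_H < 0$ forces $\Delta_L<0$, so $c_L P_L\Delta_L \le P_L\Delta_L \le -(2B+1)e$; the two contributions then sum to at most $-(B+1)e < 0$. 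The case $\Delta_L>0$ is symmetric. In every case the difference is strictly negative (for $e>0$, which holds in the regime of interest, as the hypothesis $\acc(\stgp')\le 1-(2B+2)e<1$ already requires $e>0$), which proves the claim and thereby excludes every contingent agent from a profitable deviating group $D$.
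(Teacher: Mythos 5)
Your proposal is correct and follows essentially the same route as the paper's own proof: the same decomposition $c_H P_H \Delta_H + c_L P_L \Delta_L$, the same two quantitative facts (the fidelity-gap bound $P_H\Delta_H + P_L\Delta_L \le -(2B+1)e$ and the one-sided bounds $P_H\Delta_H, P_L\Delta_L \le e$), and the same three-way sign analysis yielding $-(B+1)e<0$ in the mixed cases. The only difference is cosmetic: you state the coefficient bounds $1 \le c_H, c_L \le B$ explicitly and flag the $e>0$ issue (which the paper glosses over), but the argument is the paper's argument.
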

    Still, recall the difference between the two expected utilities. 
    \begin{align*}
        \ut_{\sag}(\stgp') - \ut_{\sag}(\stgp^*) = &\ P_H\cdot (\lp_{H}^{\bA}(\stgp') - \lp_{H}^{\bA}(\stgp^*))(\vt_{\sag}(H, \bA) - \vt_{\sag}(H,\bR))\\
        &\ + P_L\cdot (\lp_{L}^{\bR}(\stgp') - \lp_{L}^{\bR}(\stgp^*))(\vt_{\sag}(L, \bR) - \vt_{\sag}(L,\bA)).
    \end{align*}

    From $\acc(\stgp^*) = e$ and $\acc(\stgp') \le 1 - (2B+2)e$, we have 
    \begin{equation*}
        P_{L}\cdot \lp_{L}^{\bR}(\stgp^*) + P_{H} \cdot \lp_{H}^{\bA}(\stgp^*) -  P_{L}\cdot \lp_{L}^{\bR}(\stgp') + P_{H} \cdot \lp_{H}^{\bA}(\stgp') \ge (2B+1)e. 
    \end{equation*}
    Then we consider different cases between the $\lambda$. 
    
    \begin{itemize}
        \item If $\lp_{H}^{\bA}(\stgp') \le \lp_{H}^{\bA}(\stgp^*)$ and $\lp_{L}^{\bR}(\stgp') \le \lp_{L}^{\bR}(\stgp^*)$, at least one of them will be strict. Then $\ut_{\sag}(\stgp') - \ut_{\sag}(\stgp^*) < 0$. 
        \item If $\lp_{H}^{\bA}(\stgp') > \lp_{H}^{\bA}(\stgp^*)$, then we have $\lp_{H}^{\bA}(\stgp') -\lp_{H}^{\bA}(\stgp^*) < \frac{e}{P_H}$. On the other hand, $\lp_{L}^{\bR}(\stgp') - \lp_{L}^{\bR}(\stgp^*)\le -\frac{(2B+1)e}{P_L}$. Then, 
        \begin{align*}
            \ut_{\sag}(\stgp') - \ut_{\sag}(\stgp^*) \le P_H\cdot \frac{e}{P_H}\cdot B + P_L \cdot \left(-\frac{(2B+1)e}{P_L}\right)\cdot 1 < 0.
        \end{align*}
        \item If $\lp_{L}^{\bR}(\stgp') > \lp_{L}^{\bR}(\stgp^*)$, with similar reasoning, we can show that $\ut_{\sag}(\stgp') - \ut_{\sag}(\stgp^*) < 0$. 
    \end{itemize}
    
    \begin{claim}
        \label{claim:predetermined}
        Let $\sag_1$ be any friendly agent and $\sag_2$ be any unfriendly agent. Then 
        \begin{enumerate}
            \item If $\ut_{\sag_1}(\stgp') - \ut_{\sag_1}(\stgp^*) > \varepsilon$, then $\ut_{\sag_2}(\stgp') - \ut_{\sag_2}(\stgp^*) < 0$.
            \item If $\ut_{\sag_2}(\stgp') - \ut_{\sag_2}(\stgp^*) > \varepsilon$, then $\ut_{\sag_1}(\stgp') - \ut_{\sag_1}(\stgp^*) < 0$.
        \end{enumerate}
    \end{claim}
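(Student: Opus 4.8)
The plan is to track the two ``fidelity components'' as the profile changes from $\stgp^*$ to $\stgp'$ and to exploit that, in Case~2, the total fidelity drops by a definite amount. Set $\Delta_H = \lp_H^{\bA}(\stgp') - \lp_H^{\bA}(\stgp^*)$ and $\Delta_L = \lp_L^{\bR}(\stgp') - \lp_L^{\bR}(\stgp^*)$, and recall the utility-difference identity already displayed above, which for any agent $\sag$ reads $\ut_\sag(\stgp') - \ut_\sag(\stgp^*) = P_H \Delta_H\, c_H(\sag) + P_L \Delta_L\, c_L(\sag)$, where $c_H(\sag) = \vt_\sag(H,\bA) - \vt_\sag(H,\bR)$ and $c_L(\sag) = \vt_\sag(L,\bR) - \vt_\sag(L,\bA)$. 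The one global inequality I will use is $P_H \Delta_H + P_L \Delta_L = \acc(\stgp') - \acc(\stgp^*) \le -(2B+1)e$ with $e = 1-\acc(\stgp^*)$, which is exactly the Case~2 hypothesis $1-\acc(\stgp') \ge (2B+2)e$ combined with $\acc(\stgp^*) = 1-e$.

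Next I would record the sensitivity asymmetry coming from the general-valuation orderings. For a friendly $\sag_1$, the chain $\vt_{\sag_1}(H,\bA) \ge \vt_{\sag_1}(L,\bA) > \vt_{\sag_1}(L,\bR) \ge \vt_{\sag_1}(H,\bR)$ gives $c_H(\sag_1) \ge -c_L(\sag_1) > 0$, i.e.\ friendly agents react more strongly in world $H$; for an unfriendly $\sag_2$, the chain $\vt_{\sag_2}(L,\bR) \ge \vt_{\sag_2}(H,\bR) > \vt_{\sag_2}(H,\bA) \ge \vt_{\sag_2}(L,\bA)$ gives $c_L(\sag_2) \ge -c_H(\sag_2) > 0$. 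I will prove statement~1 in full; statement~2 is the mirror image under the symmetry swapping $H \leftrightarrow L$, friendly $\leftrightarrow$ unfriendly, and $\Delta_H \leftrightarrow \Delta_L$ (there one splits on the sign of $\Delta_L$ instead).

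For statement~1 I split on the sign of $\Delta_H$. If $\Delta_H \ge 0$, the global inequality forces $P_L \Delta_L \le -(2B+1)e < 0$, so $\Delta_L < 0$; then in $\ut_{\sag_2}(\stgp') - \ut_{\sag_2}(\stgp^*) = P_H\Delta_H c_H(\sag_2) + P_L\Delta_L c_L(\sag_2)$ the first term is nonpositive (since $c_H(\sag_2) < 0$) and the second is strictly negative, giving the unfriendly loss. The remaining case $\Delta_H < 0$ is where I spend the hypothesis $\ut_{\sag_1}(\stgp') - \ut_{\sag_1}(\stgp^*) > \varepsilon$: its $\Delta_H$-term is already nonpositive, so positivity forces the $\Delta_L$-term positive, whence $\Delta_L < 0$. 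Writing $\Delta_H = -|\Delta_H|$ and $\Delta_L = -|\Delta_L|$, the hypothesis becomes $P_L|\Delta_L|\,|c_L(\sag_1)| > \varepsilon + P_H|\Delta_H|\,c_H(\sag_1)$; using $c_H(\sag_1) \ge |c_L(\sag_1)|$ and dividing yields $P_L|\Delta_L| > P_H|\Delta_H|$. Feeding this together with $c_L(\sag_2) \ge |c_H(\sag_2)|$ into the unfriendly difference $P_H|\Delta_H|\,|c_H(\sag_2)| - P_L|\Delta_L|\,c_L(\sag_2)$ shows it is strictly negative, completing statement~1.

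The hard part is precisely this doubly-negative subcase $\Delta_H, \Delta_L < 0$, where each agent's gain is a difference of a positive and a negative contribution and the signs cannot be read off directly; here the argument genuinely needs the sensitivity asymmetry to convert the friendly agent's net gain into the strict comparison $P_L|\Delta_L| > P_H|\Delta_H|$, which in turn delivers the unfriendly agent's net loss. In the $0$--$1$ utility special case all four coefficients equal $1$ and the unfriendly difference is literally the negative of the friendly one, so the claim is immediate; the general valuation case is exactly what forces the asymmetry hypotheses flagged in the remark to Theorem~\ref{thm:sincere_nb}. As a final check I would confirm that $\varepsilon = 2B(B+2)e$ leaves the strict inequalities with room to spare and note that, unlike Case~1, this argument needs only the single fidelity-drop inequality rather than the individual bounds $P_H\Delta_H, P_L\Delta_L \le e$.
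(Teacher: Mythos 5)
Your proof is correct, and its core is the same as the paper's: both decompose the utility difference into an $H$-term and an $L$-term, split on the sign of $\Delta_H = \lp_{H}^{\bA}(\stgp') - \lp_{H}^{\bA}(\stgp^*)$, and in the doubly-negative case $\Delta_H, \Delta_L < 0$ run exactly the argument you give---the friendly ordering $\vt_{\sag_1}(H,\bA) \ge \vt_{\sag_1}(L,\bA) > \vt_{\sag_1}(L,\bR) \ge \vt_{\sag_1}(H,\bR)$ converts the friendly agent's gain into $P_L|\Delta_L| > P_H|\Delta_H|$, which together with the unfriendly ordering forces the unfriendly agent's strict loss. The only place you diverge is the sub-case $\Delta_H \ge 0$: the paper bounds the friendly agent's $H$-side gain by $P_H \cdot \frac{e}{P_H} \cdot B = eB < \varepsilon$ (using only $\acc(\stgp^*) = 1-e$) and then uses the hypothesis $\ut_{\sag_1}(\stgp') - \ut_{\sag_1}(\stgp^*) > \varepsilon$ to force the $L$-term positive and hence $\Delta_L < 0$, whereas you invoke the standing Case-2 inequality $\acc(\stgp') - \acc(\stgp^*) \le -(2B+1)e$ to get $\Delta_L < 0$ outright, never touching the friendly agent. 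Your variant gives a slightly stronger conclusion in that sub-case (the unfriendly agent loses whenever $\Delta_H \ge 0$, independently of the friendly agent's gain), at the price of making the claim's proof depend on the Case-2 hypothesis of Lemma~\ref{lem:deviate}, which the paper's proof of this claim does not use; since the claim is stated inside Case 2, both derivations are legitimate. One shared cosmetic point: the strict inequalities in the $\Delta_H \ge 0$ sub-case (your $-(2B+1)e < 0$, the paper's $eB < \varepsilon$) implicitly require $e > 0$, i.e.\ $\acc(\stgp^*) < 1$, which indeed always holds but is stated in neither argument.
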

    We only prove 1, and 2 will follow the same reasoning. Without loss of generality, let $\sag_1 = 1$ and $\sag_2 = 2$. 

    Still, recall the difference between the expected utility. 
    \begin{align*}
        \ut_{\sag}(\stgp') - \ut_{\sag}(\stgp^*) = &\ P_H\cdot (\lp_{H}^{\bA}(\stgp') - \lp_{H}^{\bA}(\stgp^*))(\vt_{\sag}(H, \bA) - \vt_{\sag}(H,\bR))\\
        &\ + P_L\cdot (\lp_{L}^{\bR}(\stgp') - \lp_{L}^{\bR}(\stgp^*))(\vt_{\sag}(L, \bR) - \vt_{\sag}(L,\bA)).
    \end{align*}
    We consider two cases. 
    
    Case 1: $\lp_{H}^{\bA}(\stgp') \ge \lp_{H}^{\bA}(\stgp^*)$. In this case, since $\lp_{H}^{\bA}(\stgp^*)$ is already close to 1, the gain in agent $1$'s expected utility on $H$'s side will not exceed $P_H\cdot \frac{e}{P_H}\cdot B = eB < \varepsilon$. Therefore, $\lp_{L}^{\bR}(\stgp') <\lp_{L}^{\bR}(\stgp^*)$. In this case, the expected utility of agent 2 will strictly decrease. 

    Case 2: $\lp_{H}^{\bA}(\stgp') < \lp_{H}^{\bA}(\stgp^*) $. In this case, the $P_L$ term is positive for agent $1$, while the $P_H$ term is positive for agent 2. For agent $1$, we have $\vt_{1}(H, \bA) \ge \vt_{1}(L, \bA) > \vt_{1}(L, \bR) \ge \vt_{1}(H, \bR)$. Then since $\ut_{1}(\stgp') - \ut_{1}(\stgp^*)$ is strictly positive, we must have 
    \begin{equation*}
         P_H\cdot (\lp_{H}^{\bA}(\stgp') - \lp_{H}^{\bA}(\stgp^*)) > P_L\cdot (\lp_{L}^{\bR}(\stgp') - \lp_{L}^{\bR}(\stgp^*)).
    \end{equation*}

    For agent $2$, on the other hand, we have $\vt_{1}(L, \bR) \ge \vt_{1}(H, \bR) > \vt_{1}(H, \bA) \ge \vt_{1}(L, \bA)$. Then there must have $\ut_{2}(\stgp') - \ut_{2}(\stgp^*) < 0$.
\end{proof}

\subsection{Proof of Theorem~\ref{thm:threshold}}
\label{apx:threshold}

The proof of Theorem~\ref{thm:threshold} resembles that of Theorem~\ref{thm:sincere}. Lemma~\ref{lem:fidelity_threshold} guarantees the the fidelity. For the equilibrium parts, applying Lemma~\ref{lem:deviate}, it suffice to consider a deviating group with only friendly agents or only unfriendly agents. Finally, Claim~\ref{claim:oneside_threshold} shows that such deviation cannot succeed. 

\begin{lembis}{lem:fidelity_threshold}
Under any \textinst{} sequence $\{\instance_{\ag}\}$, every constantly separable sequence of profiles $\{\stgp^*\}$ satisfies that (1) $\lim_{\ag\to\infty}\acc(\stgp^{\dag}_{\ag}) = 1$,  and (2) for every $\ag$, $\stgp^{*}_{\ag}$ is a $\varepsilon$-strong BNE where $\varepsilon$ converges to 0 as $\ag$ goes to $\infty$. 
\end{lembis} 

\begin{proof}
    The high-level idea of $\stgp^*$ is that contingent agents determine the world state with high probability via the first-round voting and vote for the correct alternative in the second round. Since predetermined agents behave regularly, and both $\agf$ and $\agu$ do not exceed 0.5, the alternative that all contingent agents vote for will be the winner. Now we show that both $\lim_{\ag\to\infty} \lp_{H}^{\bA}(\stgp^*) = 1$ and $\lim_{\ag\to\infty} \lp_{L}^{\bR}(\stgp^*) = 1$.

    For now we fix an arbitrary $\ag$.
    For an agent $\sag$, Let $X_{\sag}$ be the random variable indicating $\sag$'s vote in the first round. $X_{\sag} = 0$ means $\sag$ votes for $\bR$, and $X_{\sag} = 1$ means $\sag$ votes for $\bA$. 
    Then the result of the first round voting $\hst$ can be represented as 
    $\hst = \sum_{\sag} X_{\sag}. $ 
    If the first round outcome exceeds $\hst \ge \tf + \tc \cdot P_{hH} - \delta$, then all the agents $i$ will vote for $\bA$, and $\bA$ becomes the winner. 
    Therefore, the probability such that $\bA$ becomes the winner  conditioned on  the world state $H$ satisfies

    \begin{equation*}
        \lambda_H^{\bA}(\stgp^*) \ge Pr\left[\sum_{\sag} X_{\sag}\ge \tf + \tc \cdot P_{hH} - \delta\cdot \ag\ \middle\vert\  H\right].
    \end{equation*}

Since contingent agent votes informatively in the first round, the expectation of of the first round outcome $E[\sum_{\sag} X_{\sag} \mid H] = \tf + \tc \cdot P_{hH}$.
Then for all $\ag \ge \ag_0$, we can apply the Hoeffding inequality to the probability:
\begin{align*}
   \lambda_H^{\bA}(\stgp^*) \ge&\ \Pr\left[\sum_{\sag} X_{\sag}\ge \tf + \tc \cdot P_{hH} - \delta\cdot \ag\ \middle\vert\  H\right]\\
     = &\ \Pr\left[\sum_{\sag} X_{\sag} -E[\sum_{\sag} X_{\sag} \mid H] \ge \tf + \tc \cdot P_{hH} - \delta \cdot \ag- E[\sum_{\sag} X_{\sag} \mid H] \ \middle\vert\  H\right]\\
    = &\ \Pr\left[\sum_{\sag} X_{\sag} - E[\sum_{\sag} X_{\sag} \mid H]\ge -\delta\cdot \ag \ \middle\vert\ H\right]\\
    \ge &\ 1 - \exp(2\delta^2\cdot \ag)). 
\end{align*}

Similarly, when the world state is $L$, the expectation of the first-round outcome is $E[\sum_{\sag} X_{\sag} \mid L = \tf + \tc \cdot P_{hL}$. If the first round outcome is at most $\hst \ge \tf + \tc \cdot P_{hL} + \delta$, then all the agents $i$ will vote for $\bR$, and $\bR$ becomes the winner. 
Therefore, the probability such that $\bR$ becomes the winner  conditioned on  the world state $L$ satisfies
\begin{align*}
   \lambda_L^{\bR}(\stgp^*) \ge&\ \Pr\left[\sum_{\sag} X_{\sag}\le \tf + \tc \cdot P_{hL} + \delta\cdot \ag\ \middle\vert\  L\right]\\
     = &\ \Pr\left[\sum_{\sag} X_{\sag} -E[\sum_{\sag} X_{\sag} \mid L] \le \tf + \tc \cdot P_{hL} + \delta \cdot \ag- E[\sum_{\sag} X_{\sag} \mid L] \ \middle\vert\  L\right]\\
    = &\ \Pr\left[\sum_{\sag} X_{\sag} - E[\sum_{\sag} X_{\sag} \mid L]\ge \delta\cdot \ag \ \middle\vert\ L\right]\\
    \ge &\ 1 - \exp(2\delta^2\cdot \ag)). 
\end{align*}

Therefore, both $\lp_{H}^{\bA}(\stgp^*_{\ag})$ and  $\lp_{L}^{\bR}(\stgp^*_{\ag})$ converge to 1 as $\ag$ goes to infinity. Since fidelity $\acc(\stgp^*) = P_H\cdot\lp_{H}^{\bA}(\stgp^*_{\ag}) + P_L\cdot \lp_{L}^{\bR}(\stgp^*_{\ag})$, $\lim_{\ag\to\infty} \acc(\stgp^*) = 1$. 
\end{proof}

\begin{lembis}{lem:eq_threshold}
    $\stgp^{*}_{\ag}$ is an $\varepsilon$-strong BNE with $\varepsilon = o(1)$. 
\end{lembis}

By applying Lemma~\ref{lem:deviate}, we only need to show that the deviating group containing only one type of predetermined agents cannot succeed. Therefore, ``informative + threshold'' strategy profile $\stgp^*$ is an $\varepsilon$-strong BNE with $\varepsilon$ converges to 0. 

\begin{claim}
        \label{claim:oneside_threshold}
        Suppose $D$ is the deviating group that contains only friendly or only unfriendly agents, then $\ut_{\sag}(\stgp') - \ut_{\sag}(\stgp^*) < \varepsilon$ for all $\sag \in D$.
    \end{claim}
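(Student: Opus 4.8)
The plan is to prove the stronger statement that a deviation by a group $D$ consisting only of friendly agents (the only-unfriendly case being symmetric) cannot even \emph{weakly} increase any member's utility, so that $\ut_{\sag}(\stgp') - \ut_{\sag}(\stgp^*) < \varepsilon$ follows at once from $\ut_{\sag}(\stgp') - \ut_{\sag}(\stgp^*) \le 0$ together with $\varepsilon = 2B(B+2)(1-\acc(\stgp^*)) > 0$. The whole argument rests on one monotonicity observation: in the ``informative $+$ threshold'' profile $\stgp^*$, every friendly agent already casts $\bA$ in \emph{both} rounds, so any alternative first- or second-round behavior can only \emph{lower} the number of $\bA$-votes; since friendly agents strictly prefer $\bA$ to win in each world state, weakening $\bA$ can never help them.

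First I would set up a coupling across the two profiles. Fix a world state $\wos \in \{H,L\}$ and a joint realization of all agents' signals (and of any internal randomness used by the deviators), and compare the two runs on this common sample. Under $\stgp^*$ the first-round tally is $\hst = \tf + (\text{number of } h\text{-signals among contingent agents})$, since friendly agents contribute exactly $\tf$ and unfriendly agents contribute $0$; under $\stgp'$ the friendly agents contribute at most $\tf$ while the contingent and unfriendly agents (who are not in $D$) behave identically, so the first-round tally under $\stgp'$ is pointwise at most $\hst$. Next I would push this inequality through the second round: each contingent agent $\sag$ follows the fixed rule ``vote $\bA$ iff the observed count is at least $\hst^*_\sag$'', which is monotone increasing in that count, so on the coupled sample the set of contingent agents voting $\bA$ in round two under $\stgp'$ is a subset of those voting $\bA$ under $\stgp^*$. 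Adding that friendly agents contribute at most $\tf$ second-round $\bA$-votes under $\stgp'$ versus exactly $\tf$ under $\stgp^*$, and that unfriendly agents contribute $0$ in both, the total second-round $\bA$-count under $\stgp'$ is pointwise no larger than under $\stgp^*$. As $\bA$ wins exactly when this count exceeds $\ag/2$, the event ``$\bA$ wins under $\stgp'$'' is contained in ``$\bA$ wins under $\stgp^*$'' on every coupled sample, yielding $\lp_{H}^{\bA}(\stgp') \le \lp_{H}^{\bA}(\stgp^*)$ and $\lp_{L}^{\bA}(\stgp') \le \lp_{L}^{\bA}(\stgp^*)$.

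I would then substitute into the difference-of-utilities identity (the same one used in the proof of Lemma~\ref{lem:deviate}, rewritten via $\lp_{\wos}^{\bR} = 1 - \lp_{\wos}^{\bA}$):
\[
\ut_{\sag}(\stgp') - \ut_{\sag}(\stgp^*) = P_H\,\Delta_H\,(\vt_{\sag}(H,\bA) - \vt_{\sag}(H,\bR)) + P_L\,\Delta_L\,(\vt_{\sag}(L,\bA) - \vt_{\sag}(L,\bR)),
\]
where $\Delta_\wos = \lp_{\wos}^{\bA}(\stgp') - \lp_{\wos}^{\bA}(\stgp^*) \le 0$. For a friendly agent both bracketed utility gaps are strictly positive (by the type ordering $\vt_{\sag}(H,\bA) \ge \vt_{\sag}(L,\bA) > \vt_{\sag}(L,\bR) \ge \vt_{\sag}(H,\bR)$), so each term is nonpositive and $\ut_{\sag}(\stgp') - \ut_{\sag}(\stgp^*) \le 0 < \varepsilon$. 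The only-unfriendly case is identical after exchanging $\bA \leftrightarrow \bR$ and $H \leftrightarrow L$: there the deviators can only \emph{raise} the $\bA$-count, so every $\lp_{\wos}^{\bR}$ weakly drops and the symmetric expression is again nonpositive.

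The step I expect to require the most care is the coupling itself, namely justifying the pointwise domination when deviators use mixed, signal-dependent strategies: I would couple the deviators' internal randomness across the two profiles so that ``friendly contributes at most $\tf$'' holds sample-by-sample, and I would stress that the argument crucially relies on the contingent rule in an ``informative $+$ threshold'' profile being the \emph{non-reversed} threshold (vote $\bA$ for large counts), so that monotonicity in $\hst$ points in the correct direction --- unlike the reversed construction used in the proof of Theorem~\ref{thm:eqtoacc}. Finally I would note that nothing in the counting or coupling uses the $0$--$1$ assumption, so the same proof covers the general-valuation setting, where positivity of $\vt_{\sag}(\wos,\bA) - \vt_{\sag}(\wos,\bR)$ for friendly agents is again guaranteed by the type ordering.
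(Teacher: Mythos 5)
Your proposal is correct and takes essentially the same route as the paper's proof: since friendly deviators already vote $\bA$ in both rounds under $\stgp^*$, any deviation can only reduce the number of $\bA$-votes, and because the contingent agents' threshold rule is monotone in the first-round tally, the probability that $\bA$ wins cannot increase in either world state, so no friendly agent (and symmetrically no unfriendly agent) gains anything, let alone $\varepsilon$. Your explicit sample-by-sample coupling is a rigorous formalization of what the paper argues informally in terms of expectations and tendencies, and your conclusion that the utility change is in fact $\le 0 < \varepsilon$ matches the paper's.
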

    
We prove the friendly agent side, and the unfriendly agent side will follow the same reasoning. The expected utility of a friendly agent increases only when the likelihood that $\bA$ becomes the winner increases. However, in $\stgp^*_\ag$ , all the friendly agents always vote for $\bA$ in both rounds. If friendly agents deviate in the second round, there will be strictly fewer agents voting for $\bA$ in expectation, and $\bA$ will be less likely to be the winner. If friendly agents deviate in the first round, the result of the first-round voting $x$ will have a higher probability of being smaller, and contingent agents playing the threshold strategy will tend to vote $\bR$. Therefore, no matter how $D$ deviates, the likelihood of $\bA$ being the winner will not increase, and friendly agents cannot get a gain of more than $\varepsilon$.

\subsection{Proof for Lemma~\ref{lem:sincere_is_threshold}}
\begin{lembis}{lem:sincere_is_threshold}
    The sequence of ``informative + sincere'' strategy profiles is constantly separable. 
\end{lembis}

\begin{proof}
    Recall that 
    $$\hst ^* = \tf + \frac{\log\frac{P_L \cdot (\vt_i(L, \bR) - \vt_i(L, \bA))}{P_H \cdot (\vt_i(H, \bA) - \vt_i(H, \bR))} + \tc \cdot \log \frac{P_{\ell L}}{P_{\ell H}}}{\log \frac{P_{hH}P_{\ell L}}{P_{\ell H}P_{hL}}}. $$
As $\tf$ and $\tc$ increases with $\ag$, while $\log\frac{P_L \cdot (\vt_i(L, \bR) - \vt_i(L, \bA))}{P_H \cdot (\vt_i(H, \bA) - \vt_i(H, \bR))}$ remains constant, it suffice to show that 
\begin{equation*}
    \tf + \tc \cdot P_{hL} + \delta \cdot \ag \le \tf + \frac{\tc \cdot \log \frac{P_{\ell L}}{P_{\ell H}}}{\log \frac{P_{hH}P_{\ell L}}{P_{\ell H}P_{hL}}} \le \tf + \tc \cdot P_{hH} - \delta \cdot \ag,
\end{equation*}
which is equivalent to
\begin{equation*}
    P_{hL} + \delta \cdot \frac{\ag}{\tc} \le \frac{\log \frac{P_{\ell L}}{P_{\ell H}}}{\log \frac{P_{hH}P_{\ell L}}{P_{\ell H}P_{hL}}} \le P_{hH}- \delta \cdot\frac{\ag}{\tc}. 
\end{equation*}
    
\end{proof}

We first deal with the left inequality. 
\begin{align*}
    &\ P_{hL} + \delta \cdot \frac{\ag}{\tc} \le \frac{\log \frac{P_{\ell L}}{P_{\ell H}}}{\log \frac{P_{hH}P_{\ell L}}{P_{\ell H}P_{hL}}}\\
    \Leftrightarrow &\  P_{hL} + \delta \cdot \frac{\ag}{\tc} \le 
    \frac{\log \frac{P_{\ell L}}{P_{\ell H}}}{\log \frac{P_{hH}}{P_{hL}} + \log \frac{P_{\ell L}}{P_{\ell H}} }\\
    \Leftrightarrow &\ \delta \cdot \frac{\ag}{\tc} \cdot (\log \frac{P_{hH}}{P_{hL}} + \log \frac{P_{\ell L}}{P_{\ell H}} ) \le 
    (P_{hL}\cdot \log \frac{P_{hL}}{P_{hH}} + \log P_{\ell L} \cdot \frac{P_{\ell L}}{P_{\ell H}}).
\end{align*}
Note that the RHS takes a form of the KL-divergence between distribution $(P_{hL}, P_{\ell L})$ and $P_{hH}, P_{\ell H})$. Given that $P_{hH} > P_{hL}$, RHS is a strictly positive constant independent from $\ag$. On the other hand, $\tc \approx \agc \cdot \ag$. Therefore, by taking 
\begin{equation*}
    \delta \le \frac12 \agc\cdot  \frac{(P_{hL}\cdot \log \frac{P_{hL}}{P_{hH}} + \log P_{\ell L} \cdot \frac{P_{\ell L}}{P_{\ell H}})}{(\log \frac{P_{hH}}{P_{hL}} + \log \frac{P_{\ell L}}{P_{\ell H}} )} ,
\end{equation*}
the left inequality is satisfied. 

Likewise, the right inequality is equivalent to
\begin{equation*}
    \delta \cdot \frac{\ag}{\tc} \cdot (\log \frac{P_{hH}}{P_{hL}} + \log \frac{P_{\ell L}}{P_{\ell H}} ) \le 
    (P_{hH}\cdot \log \frac{P_{hH}}{P_{hL}} + \log P_{\ell H} \cdot \frac{P_{\ell H}}{P_{\ell L}}),
\end{equation*}
and by taking
\begin{equation*}
    \delta \le \frac12 \agc\cdot  \frac{(P_{hH}\cdot \log \frac{P_{hH}}{P_{hL}} + \log P_{\ell H} \cdot \frac{P_{\ell H}}{P_{\ell L}})}{(\log \frac{P_{hH}}{P_{hL}} + \log \frac{P_{\ell L}}{P_{\ell H}} )},
\end{equation*}
the right inequality is satisfied. Therefore, by a suffciently small $\delta$ and all sufficiently large $\ag$, the constaint for constantly separable is satisfied.

\subsection{Proof of Theorem~\ref{thm:eqtoacc}}

The proof applies to both the 0-1 utility setting (Theorem~\ref{thm:eqtoacc}) and the general valuation utility setting (Theorem~\ref{thm:eqtoacc_nb}
). 
Before starting the proof, we extend the definition of the first-round voting share. let $\exs_H^{F}(\stgp), \exs_H^{U}(\stgp)$, and $\exs_H^{C}(\stgp)$ be the expectation on the number of votes for $\bA$ from three types of agents conditioned on the world state is $H$ and the strategy profile is $\stgp$. We have $\exs_H(\stgp) = \exs_H^{F}(\stgp) + \exs_H^{U}(\stgp) + \exs_H^{C}(\stgp)$. The $L$ side is defined similarly. 

\label{apx:eqtoacc}
\begin{proof}
    We first give the proof for regular profiles. 
    Suppose $\{\stgp_{\ag}\}$ is a regular profile sequence such that $\lim_{\ag\to\infty} \acc(\stgp_\ag) = 1$ does not hold. 
    This means that there exist a constant $\delta$ and an infinite set $\mathcal{N}$ such that for all $\ag \in \mathcal{N}$, $\acc(\stgp_{\ag}) \le 1 - \delta$. We show that there exists a constant $\varepsilon = \frac{\delta\cdot P_H P_L}{2}$ and a infinite set $\mathcal{N}'$ such that for all $\ag\in \mathcal{N}'$, $\stgp_{\ag}$ is NOT an $\varepsilon$-strong BNE. 


    We will first construct a different profile sequence $\{\stgp'_{\ag}\}$ whose fidelity converges to 1. Then for all sufficiently large $\ag \in \mathcal{N}$, both $\lp_{H}^{\bA}$ and $\lp_{L}^{\bR}$ are close to one. On the other hand, Since $\acc(\stgp_{\ag}) \le 1 - \delta$, we have at least one of $\lp_{H}^{\bA}(\stgp_{\ag}) \le 1 - \delta$ and $\lp_{L}^{\bR}(\stgp_{\ag}) \le 1-\delta$ holds. Then we can see that all the contingent agents have incentives to deviate to $\stgp'_{\ag}$ to get a constant gain in the expected utility.

    For simplicity, we use $\stgp$ and $\stgp'$ to denote $\stgp_\ag$ and $\stgp'_\ag$. Let $\stgp'_{\ag}$ be the following strategy profile. The deviating group $D$ contains all the contingent agents. 
    
    

    \begin{itemize}
        \item All predetermined agents play the same strategy as in $\stgp$. 
        \item If $\exs_H^{F}(\stgp) + \exs_H^{U}(\stgp) \ge \exs_L^{F}(\stgp) + \exs_L^{U}(\stgp)$, all contingent agents vote informatively in the first round. For the second round, all contingent agents vote for $\bA$ if $\hst \ge \frac{\exs_H(\stgp') + \exs_L(\stgp')}{2}$ and vote for $\bR$ otherwise.
        \item If  $\exs_H^{F}(\stgp) + \exs_H^{U}(\stgp) < \exs_L^{F}(\stgp) + \exs_L^{U}(\stgp)$, contingent agents ``reverse'' their strategy. In the first round, a contingent agent votes for $\bA$ if receiving $l$ and for $\bR$ if receiving $h$. And in the second round,  all contingent agents vote for $\bA$ if $\hst < \frac{\exs_H(\stgp') + \exs_L(\stgp')}{2}$ and vote for $\bR$ otherwise.
    \end{itemize}

    The high-level idea of $\stgp'$ is similar to $\stgp^*$ in Theorem~\ref{thm:threshold}: contingent agents determine the world state with high probability via the first-round voting and vote for the correct alternative in the second round. Since predetermined agents behave regularly, and both $\agf$ and $\agu$ do not exceed 0.5, the alternative that all contingent agents vote for will be the winner. Now we show that both $\lim_{\ag\to\infty} \lp_{H}^{\bA}(\stgp') = 1$ and $\lim_{\ag\to\infty} \lp_{L}^{\bR}(\stgp') = 1$.

    With loss of generality, suppose $\exs_H^{F}(\stgp) + \exs_H^{U}(\stgp) \ge \exs_L^{F}(\stgp) + \exs_L^{U}(\stgp)$. The reasoning for the other case will be similar. For an agent $\sag$, Let $X_{\sag}$ be the random variable indicating $\sag$'s vote in the first round. $X_{\sag} = 0$ means $\sag$ votes for $\bR$, and $X_{\sag} = 1$ means $\sag$ votes for $\bA$. 
    Then the result of the first round voting $\hst$ can be represented as 
    $\hst = \sum_{\sag} X_{\sag}. $ 
    Then the probability such that $\bA$ becomes the winner (which is the probability that $\hst \ge \frac{\exs_H(\stgp') + \exs_L(\stgp')}{2}$) conditioned on  the world state $H$ can be written as 
    \begin{equation*}
        \lambda_H^{\bA}(\stgp') = \Pr[\sum_{\sag} X_{\sag}\ge \frac{\exs_H(\stgp') + \exs_L(\stgp')}{2} \mid H].
    \end{equation*}

    Note that $E[\sum_{\sag} X_{\sag} \mid H] = \exs_H(\stgp')$. 
    
    Since contingent agent votes informatively in the first round, $\exs_H^{C}(\stgp') = \tc \cdot P_{hH}$ and $\exs_L^{C}(\stgp') = \tc \cdot P_{hL}$. Since $P_{hH} > P_{hL}$, we have $\exs_H^{C}(\stgp') - \exs_L^{C}(\stgp') = \tc \cdot (P_{hH} - P_{hL}) \ge \ag \cdot (\agc - \frac{1}{\ag})\cdot (P_{hH} - P_{hL})$.  Then there exists a $\ag_0$ such that for all $\ag > \ag_0$, $\exs_H^{C}(\stgp') - \exs_L^{C}(\stgp') \ge \frac12 \ag \cdot \agc\cdot (P_{hH} - P_{hL})$. Then since $\exs_H^{F}(\stgp) + \exs_H^{U}(\stgp) \ge \exs_L^{F}(\stgp) + \exs_L^{U}(\stgp)$, we have  $\ag > \ag_0$, $\exs_H(\stgp') - \exs_L(\stgp') \ge \frac12 \ag \cdot \agc\cdot (P_{hH} - P_{hL})$. Then we can apply the Hoeffding inequality to the probability:
    \begin{align*}
        \lambda_H^{\bA}(\stgp')
        = &\ \Pr\left[\sum_{\sag} X_{\sag}\ge \frac{\exs_H(\stgp') + \exs_L(\stgp')}{2} \ \middle\vert\  H\right]\\
         = &\ \Pr\left[\sum_{\sag} X_{\sag} -E[\sum_{\sag} X_{\sag} \mid H] \ge \frac{\exs_H(\stgp') + \exs_L(\stgp')}{2} - E[\sum_{\sag} X_{\sag}  \mid H] \ \middle\vert\  H\right]\\
        = &\ \Pr\left[\sum_{\sag} X_{\sag} -\exs_H(\stgp')\ge \frac{\exs_L(\stgp') -\exs_H(\stgp')}{2} \ \middle\vert\ H\right ]\\
        \ge &\ \Pr[\sum_{\sag} X_{\sag} -\exs_H(\stgp')\ge - \frac14 \ag \cdot \agc\cdot (P_{hH} - P_{hL}) \mid H]\\
        \ge &\ 1 - \exp(-\frac18 (\agc\cdot (P_{hH} - P_{hL})^2 \cdot \ag)). 
    \end{align*}

    The $L$ side works similarly. 

    Therefore, both $\lp_{H}^{\bA}(\stgp'_{\ag})$ and  $\lp_{L}^{\bR}(\stgp'_{\ag})$ converge to 1 as $\ag$ goes to infinity. 

    Next, we show that contingent agents have incentives to deviate from $\stgp_{\ag}$ to $\stgp'_{\ag}$. Let $\mathcal{N}'$ be the set of all $\ag \in \mathcal{N}$ such that both $\lp_{H}^{\bA}(\stgp'_{\ag})\ge 1 - \frac{\delta\cdot P_H P_L}{2B}$ and  $\lp_{L}^{\bR}(\stgp'_{\ag})\ge 1 - \frac{\delta \cdot P_H P_L}{2B}$ holds. We know that $\mathcal{N}'$ is also an infinite set. We fix an arbitrary $\ag \in \mathcal{N'}$. Without loss of generality, suppose $\lp_{H}^{\bA}(\stgp_{\ag}) \le 1 - \delta$. Then we consider the difference in the expected utility of a contingent agent. 
    \begin{align*}
        \ut_{\sag}(\stgp'_{\ag}) - \ut_{\sag}(\stgp_{\ag}) = &\ P_H\cdot (\lp_{H}^{\bA}(\stgp'_{\ag}) - \lp_{H}^{\bA}(\stgp_{\ag}))(\vt_{\sag}(H, \bA) - \vt_{\sag}(H,\bR))\\
        &\ + P_L\cdot (\lp_{L}^{\bR}(\stgp'_{\ag}) - \lp_{L}^{\bR}(\stgp_{\ag}))(\vt_{\sag}(L, \bR) - \vt_{\sag}(L,\bA)).
    \end{align*}
    By assigning the bounds above, we have 
    \begin{align*}
        \ut_{\sag}(\stgp'_{\ag}) - \ut_{\sag}(\stgp_{\ag}) \ge &\ P_H \cdot (1 - \frac{\delta\cdot P_H P_L}{2B} - 1 + \delta) \cdot 1 + P_L\cdot (-\frac{\delta\cdot P_H P_L}{2B})\cdot B\\
        \ge &\ P_H\cdot \delta - \frac{\delta\cdot P_H P_L}{2}\\
        > &\ \frac{\delta\cdot P_H P_L}{2}. 
    \end{align*}
    The similar reasnoning can be achieved for $\lp_{L}^{\bR}(\stgp_{\ag}) \le 1 - \delta$. 

    Therefore, all the contingent agents have incentives to deviate from $\stgp_{\ag}$ to $\stgp'_{\ag}$ to get a constant gain. This means that $\stgp_{\ag}$ is NOT an $\varepsilon$-strong BNE for some constant $\varepsilon$. 

\subsubsection{Proof for non-regular profiles.} 
We adopt the same idea to construct a strategy profile where the deviating group forces the state to be revealed by the first-round vote and votes for the preferred agent in the second round. Then we show that as the fidelity converges to 1, the group has incentives to do so. 

For non-regular profiles, a deviating group with only contingent agents may not work, as there is no assumption on the behavior of predetermined agents in the second round. (For example, if  $\agc < 0.5$, and all the predetermined agents always vote for $\bA$, then the winner will always be $\bA$ regardless of the votes from contingent agents.) Fortunately, the following lemma shows that if the fidelity of deviating strategy profile is close to 1, then either all the friendly agents get an expected utility increase or all the unfriendly agents get an expected utility increase. Moreover, which type gets utility gain does not depend on the deviating strategy $\stgp'_\ag$. 

\begin{lembis}{lem:positive}
    Let $\{\stgp_{\ag}\}$ be a sequence of profiles such that there exists a constant $\delta$ and an infinite set $\mathcal{N}$ such that for all $\ag\in \mathcal{N}$, $\acc(\stgp_{\ag}) \le 1 - \delta$. Then there exists an infinite set $\mathcal{N}' \subseteq \mathcal{N}$ such that for each $\ag \in \mathcal{N}'$, at least one of the following holds for all sequences of profiles $\{\stgp_{\ag}'\}$ with $\lim_{\ag\to\infty} \acc(\stgp') = 1$: 
    \begin{enumerate}
        \item For any friendly agent $\sag_1$, $\ut_{\sag_1} (\stgp_{\ag}') - \ut_{\sag_1} (\stgp_{\ag}) \ge 0$. 
        \item For any unfriendly agent $\sag_2$, $\ut_{\sag_2} (\stgp_{\ag}') - \ut_{\sag_2} (\stgp_{\ag}) \ge 0$. 
    \end{enumerate}
    Moreover, for a fixed $\ag$ and $\stgp_{\ag}$, either 1 holds for all $\stgp_{\ag}'$ or 2 holds for all $\stgp_\ag'$. 
\end{lembis}
The proof of Lemma~\ref{lem:positive}
comes from the fact that a predetermined agent has different sensitivities to utility changes in different states of the world. For a friendly agent $\sag_1$, $\vt_{\sag_1}(H,\bA) > \vt_{\sag_1}(L, \bA) > \vt_{\sag_1}(L, \bR) > \vt_{\sag_1}(H, \bR)$. Therefore, $\sag_1$ is more sensitive to utility change in world state $H$ than that in world state $L$. On the other hand, an unfriendly agent $\sag_2$ is more sensitive in world state $L$. This difference in sensitivity guarantees that when the fidelity increases, the expected utility of both types of agents cannot decrease at the same time. 

\begin{proof}[Proof of Lemma~\ref{lem:positive}]
    Without loss of generality, suppose  $\ut_{\sag_1} (\stgp_{\ag}') - \ut_{\sag_1} (\stgp_{\ag}) \le 0$ holds for $\sag_1$ (which means (1) does not hold). We show that (2) $\ut_{\sag_2} (\stgp_{\ag}') - \ut_{\sag_2} (\stgp_{\ag}) > 0$ must hold for $\sag_2$. Recall that the expected utility of an agent $i$ is defined as follows. 
    \begin{align*}
    \ut_{\sag}(\stgp) = &\ P_{L}(\lp_{L}^{\bA}(\stgp)\cdot\vt_{\sag}(L, \bA) + \lp_{L}^{\bR}(\stgp)\cdot \vt_{\sag}(L, \bR))\\
    &\ +  P_{H}(\lp_{H}^{\bA}(\stgp)\cdot\vt_{\sag}(H, \bA) + \lp_{H}^{\bR}(\stgp)\cdot \vt_{\sag}(H, \bR)). 
\end{align*}
Therefore, the difference between $\ut_{\sag} (\stgp_{\ag}') $ and $\ut_{\sag} (\stgp_{\ag})$ can be represented as follows. 
 \begin{align*}
        &\ P_H\cdot (\lp_{H}^{\bA}(\stgp') - \lp_{H}^{\bA}(\stgp^*))(\vt_{\sag}(H, \bA) - \vt_{\sag}(H,\bR))\\
        &\ + P_L\cdot (\lp_{L}^{\bR}(\stgp') - \lp_{L}^{\bR}(\stgp^*))(\vt_{\sag}(L, \bR) - \vt_{\sag}(L,\bA)).
    \end{align*}

Since $\acc(\stgp_{\ag}) \le 1 - \delta$, at least one of $\lp_{H}^{\bA}(\stgp_{\ag})\le 1 - \delta$ and $\lp_{L}^{\bR}(\stgp_{\ag})\le  1- \delta$ holds. We deal with two cases separately. 

    \paragraph{Case 1: $\lp_{H}^{\bA}(\stgp_{\ag})\le 1 - \delta$.} In this case, we have $\lp_{H}^{\bA}(\stgp'_{\ag}) - \lp_{H}^{\bA}(\stgp_{\ag}) > 0$. 
    Note that for the friendly agent $\sag_1$, $\vt_{\sag_1}(H, \bA) - \vt_{\sag_1}(H,\bR) \ge -(\vt_{\sag_1}(L, \bR) - \vt_{\sag_1}(L,\bA) )$. Therefore, $\ut_{\sag_1} (\stgp') - \ut_{\sag_1} (\stgp) \le 0$ implies
    \begin{equation*}
        P_H\cdot (\lp_{H}^{\bA}(\stgp_{\ag}') - \lp_{H}^{\bA}(\stgp_{\ag})) \le P_L\cdot (\lp_{L}^{\bR}(\stgp_{\ag}') - \lp_{L}^{\bR}(\stgp_{\ag})).
    \end{equation*}
    On the other hand, for any unfriendly agent $\sag_2$, $- \vt_{\sag_2}(H, \bA) - \vt_{\sag_2}(H,\bR) \le \vt_{\sag_2}(L, \bR) - \vt_{\sag_2}(L,\bA) $. Therefore, $\ut_{\sag_2} (\stgp_{\ag}') - \ut_{\sag_2} (\stgp_{\ag}) \ge 0$. 

    \paragraph{Case 2: $\lp_{L}^{\bR}(\stgp_{\ag})\le  1- \delta$.} In this case, we have $\lp_{L}^{\bR}(\stgp_{\ag}') - \lp_{L}^{\bR}(\stgp_{\ag}) > 0$. If $\lp_{H}^{\bA}(\stgp_{\ag}') - \lp_{H}^{\bA}(\stgp_{\ag}) \le 0$, then 
    \begin{align*}
        \ut_{\sag_2} (\stgp_{\ag}') - \ut_{\sag_2} (\stgp_{\ag}) \ge  P_L\cdot (\lp_{L}^{\bR}(\stgp_{\ag}') - \lp_{L}^{\bR}(\stgp_{\ag}))(\vt_{\sag}(L, \bR) - \vt_{\sag}(L,\bA)) > 0.
    \end{align*}
    If $\lp_{H}^{\bA}(\stgp_{\ag}') - \lp_{H}^{\bA}(\stgp_{\ag}) > 0$, we apply the same reasoning as in Case 1. Therefore, $\ut_{\sag_2} (\stgp_{\ag}') - \ut_{\sag_2} (\stgp_{\ag}) \ge 0$ always holds.
For the 0-1 utility model, all these differences utilities become $1$or $-1$, and suffice to compare $P_H\cdot (\lp_{H}^{\bA}(\stgp_{\ag}') - \lp_{H}^{\bA}(\stgp_{\ag}))$ and $ P_L\cdot (\lp_{L}^{\bR}(\stgp_{\ag}') - \lp_{L}^{\bR}(\stgp_{\ag}))$ to determined which type of agents is happy. 
\end{proof}

Now we construct the deviating strategy profile sequence $\{\stgp'_{\ag}\}$, which is a combination of two strategy profile sequences $\{\stgp_F\}$ and $\{\stgp_U\}$. In both $\stgp_F$ and $\stgp_U$ contingent agents collaborate with one type of predetermined agents and deviate together. 


In $\stgp_{F}$, all the friendly and contingent agents vote informatively in the first round and play a threshold strategy in the second round.

\begin{itemize}
    \item All {\em unfriendly} agents play the same strategy as in $\stgp$. 
    \item If $\exs_H^{U}(\stgp_{\ag}) \ge  \exs_L^{U}(\stgp_{\ag})$, all friendly agents and contingent agents vote informatively in the first round. For the second round, all friendly agents and contingent agents vote for $\bA$ if $\hst \ge \frac{\exs_H(\stgp_F) + \exs_L(\stgp_F)}{2}$ and vote for $\bR$ otherwise.
    \item If  $\exs_H^{U}(\stgp_{\ag}) < \exs_L^{U}(\stgp_{\ag})$, friendly agents and contingent agents ``reverse'' their strategy. In the first round, a friendly (or contingent) agent votes for $\bA$ if receiving $l$ and for $\bR$ if receiving $h$. In the second round,  all friendly agents and contingent agents vote for $\bA$ if $\hst \ge \frac{\exs_H(\stgp_F) + \exs_L(\stgp_F)}{2}$
\end{itemize}

$\stgp_{U}$ follows the same ``informative and threshold'' pattern, except that the deviators are all the unfriendly agents and all the contingent agents. 

Since $\agf < 0.5$ and $\agu < 0.5$, if contingent agents and one type of predetermined agents vote for the same alternative, that alternative will be the winner. Therefore, with similar reasoning to the regular case, both $\stgp_{F}$ and $\stgp_{U}$ have fidelity converging to 1, and contingent agents have incentives to deviate to either of them. 

Applying Lemma~\ref{lem:positive}, we know that for each $\ag$ where the fidelity of $\stgp_\ag$ does not converge to 1, there is one type of predetermined agents whose expected utilities will increase if the deviators deviate to either $\stgp_F$ or $\stgp_U$. Moreover, the type of agents that get higher expected utilities does not depend on which deviating strategy they play. Therefore, when friendly agents get higher expected utilities, the deviators are all the friendly agents and contingent agents with $\stgp'_\ag = \stgp_F$; otherwise, the deviators are all the unfriendly agents and contingent agents with $\stgp_\ag' = \stgp_U$. Then, the expected utilities of contingent increase by at least a constant, and those of the deviating predetermined agents increase. Therefore, for each $\stgp_{\ag}$ whose fidelity is not close to 1, the deviators have incentives to deviate to either $\stgp_{F}$ or $\stgp_{U}$. Therefore, $\stgp_{\ag}$ is NOT an $\varepsilon$-strong BNE with constant $\varepsilon$.  

\end{proof}

\subsection{Formal Statement and Proof of Theorem~\ref{thm:one_and_two}}
\label{apx:one_and_two}

\begin{thmnb}{thm:one_and_two} Let $\{\stgpbar_{\ag}\}$ be a one-round regular strategy profile sequence in an \textinst{} sequence $\{\instance_{\ag} \}$. If for all $\ag$, $\stgpbar$ is an $\bar{\varepsilon}$-strong BNE in one-round voting with $\bar{\varepsilon}$ converges to 0, then for any two-round strategy profile sequence $\{\stgp_{\ag}\}$ in \textinst{} $\{\instance_{\ag}\}$, where for any $\ag$, $\stgp_{\ag}$ is a second-round consistent strategy profile of $\stgpbar_\ag$, $\stgp_{\ag}$ is an $\varepsilon$-strong BNE in the two-round voting for all $\ag$ with $\varepsilon$ converges to 0.
\end{thmnb}
\begin{proof} 
The proof proceeds in two steps. Firstly, we show that the fidelity of $\{\stgp_{\ag}\}$ converges to 1. Secondly, by applying Lemma~\ref{lem:deviate}, we know that a deviating group either contains only friendly agents or only unfriendly agents. Finally, we show that such a deviating group cannot succeed. 

\paragraph{Step 1: $\acc(\stgp_\ag)$ converges to 1.} In $\stgp_{\ag}$, the behavior and result in the first-round voting have no effect to the second-round voting that determines the winner. Agents in the second round behave exactly as they are playing a one-round voting game. Therefore, $\acc(\stgp_{\ag}) = \acc(\stgpbar_{\ag})$. Then since $\stgpbar_{\ag}$ is an $\bar{\varepsilon}$-strong BNE with $\bar{\varepsilon}$ converges to 0, we have $\acc(\stgpbar_{\ag})$ converges to 1 according to the if and only if characterization in~\citet[Theorem 2]{han2023wisdom}. Therefore, $\acc(\stgp_{\ag})$ also converges to 1.

    \paragraph{Step 2: equilibrium.} For each $\ag$, let $\varepsilon = 2B(B+2)(1 - \acc(\stgp_{\ag}))$. Fix an arbitrary $\ag$. Suppose there exists a strategy profile $\stgp'_{\ag}$ such that a deviating group $D$ of agents have incentives to deviate to $\stgp_{\ag}'$, then by Lemma~\ref{lem:deviate}, $D$ either contains only friendly agents or only unfriendly agents. 

    Then we show that such $D$ cannot succeed. This follows the last part of Lemma~\ref{lem:eq_threshold}. With loss of generality, suppose $D$ contains only friendly agents. Since $\stgpbar_{\ag}$ is a regular profile, we know that in $\stgp_{\ag}$, friendly agents always vote for $\bA$ in the second round. Then friendly agents cannot make themselves better off. The changes in the first-round voting have no effect on the second round, and changes in the second-round voting will only decrease the votes for $\bA$ and cannot increase the probability that $\bA$ becomes the winner. Therefore, for any friendly agents, the expected utility cannot increase by $\varepsilon$. 

    Therefore, we have show that for each $\ag$, $\stgp_{\ag}$ is a $\varepsilon$-strong BNE with $\varepsilon = 2B(B+2)(1 - \acc(\stgp_{\ag}))$. Since the fidelity converges to 1, $\varepsilon$ will converge to 0. 
\end{proof}}

\section{More Details on Experiments.}
\label{apx:exp}
{
\subsection{Configuration}
The experiment is run on a Lenovo Thinkbook 15 G3 AGL with AMD Ryzen 5 5600U with Radeon Graphics 2.30 GHz. The program is written in Python 3.9. 

We adopt online APIs to create generative AI agents. The API for Deepseek R1 is provided by Together AI, and the API for GPT-4o is provided by OpenAI's official API platform. 

\subsection{Examples for prompts and answers of agents.}

Here we give two examples of the prompts and the answers of the agents. More data can be found in the supplementary materials. 

\subsubsection{Deepseek R1, biased signal, accurate information, and one-round voting. }

\paragraph{Signal.} Bad Impression. 

\paragraph{Prompt.} "Consider the following hypothetical situation. Suppose you are a recruiter in a company. Your company is hiring for a new position by voting. There are in total 40 voters.You believe that the candidate should be hired only if he is qualified.Most of the company agrees with you, but a small minority <about 12\%> seem to support hiring every candidate regardless of qualifications, and another small minority <about 7\%> does not support hiring any candidate regardless of qualifications. Before meeting with the candidate, you thought she has <about 50\% chance> to be qualified. If the candidate is qualified, you will likely have a good impression after meeting with her <about 80\% chance>. If the candidate is not qualified, you will be less likely to have a good impression <about 60\% chance>.It seems that everyone is on the same page about both the candidate's likelihood of being qualified and how informative an interview is, but the interview impression is private and independent among all the voters. You got a BAD impression after meeting the candidate.After interviewing, but before any discussions, the company holds a vote. There will be only one round, where each voter vote YES or NO on whether to hire this candidate. The outcome is determined by whether the candidate gets a majority of yes votes. Analyze what is your best vote. The voting time is limited, so you cannot think everlastingly. Summarize and make an immediate decision in two minutes. Your answer should start with the analysis and end with a single character Y or N in a separate line without any other characters or symbols, indicating your vote. ",

\paragraph{Vote.} N.
\paragraph{Answer.} The candidate has a 33.3\% posterior probability of being qualified after a bad impression. With 5 automatic YES and 3 NO votes, the remaining 32 voters' expected YES votes (weighted by qualification likelihood) sum to ~21.33 + 5 = 26.33, likely securing a majority. However, given the higher probability the candidate is unqualified (66.7\%), voting NO aligns with preventing unqualified hires despite the likely majority.

\subsubsection{Deepseek R1, biased signal, vague information, and two-round voting. }
Here we give an example where an agent in two-round voting switches its vote in the second round. 

\paragraph{Signal.} Good Impression. 
\paragraph{First-round Prompt.} Consider the following hypothetical situation. Suppose you are a recruiter in a company. Your company is hiring for a new position by voting. There are in total 40 voters.You believe that the candidate should be hired only if he is qualified.Most of the company agrees with you, but a small minority seem to support hiring every candidate regardless of qualifications, and another small minority does not support hiring any candidate regardless of qualifications. Before meeting with the candidate, you thought her likelihood of being qualified was roughly balanced. If the candidate is qualified, you would almost certainly form a favorable impression after meeting her. If the candidate is not qualified, you would be far less inclined to have a positive impression, with odds comparable to a rare occurrence. It seems that everyone is on the same page about both the candidate's likelihood of being qualified and how informative an interview is, but the interview impression is private and independent among all the voters. You got a GOOD impression after meeting the candidate. After interviewing, but before any discussions, the company holds a vote.  There will be two rounds.  In each round a voter votes yes or no. In the first round, the individual votes are private, but the number of votes will be announced to all the voters. This round does not determine the outcome. In the second round, the voting is also private., the outcome is determined by whether the candidate gets a majority of YES votes. Analyze what is your best vote in the first round. The voting time is limited, so you cannot think everlastingly. Summarize and make an immediate decision in two minutes. Your answer should start with the analysis and end with a single character Y or N in a separate line without any other characters or symbols, indicating your vote.

\paragraph{First-round Vote.} Y.
\paragraph{First-round Answer.} To determine the optimal first-round vote, consider Bayesian updating: a GOOD impression strongly indicates the candidate is qualified (91\% probability). The first-round vote count informs the second-round voters. Voting Y truthfully contributes to an accurate signal, aiding rational voters in updating their beliefs. Extremists' votes are fixed, but their impact is diluted if rational voters coordinate via the first-round tally. Truthful voting maximizes collective accuracy. 

\paragraph{Second-round Prompt.} Consider the following hypothetical situation. Suppose you are a recruiter in a company. Your company is hiring for a new position by voting. There are in total 40 voters. You believe that the candidate should be hired only if he is qualified. Most of the company agrees with you, but a small minority seem to support hiring every candidate regardless of qualifications, and another small minority does not support hiring any candidate regardless of qualifications. Before meeting with the candidate, you thought her likelihood of being qualified was roughly balanced. If the candidate is qualified, you would almost certainly form a favorable impression after meeting her. If the candidate is not qualified, you would be far less inclined to have a positive impression, with odds comparable to a rare occurrence. It seems that everyone is on the same page about both the candidate's likelihood of being qualified and how informative an interview is, but the interview impression is private and independent among all the voters. You got a GOOD impression after meeting the candidate. After interviewing, but before any discussions, the company holds a vote.  There will be two rounds.  In each round, a voter votes yes or no. In the first round, the individual votes are private, but the number of votes will be announced to all the voters. This round does not determine the outcome. In the second round, the voting is also private., the outcome is determined by whether the candidate gets a majority of YES votes. After the first round, 14 of 40 voters voted yes. Analyze what is your best vote in the second round.  The voting time is limited, so you cannot think everlastingly. Summarize and make an immediate decision in two minutes. Your answer should start with the analysis and end with a single character Y or N in a separate line without any other characters or symbols, indicating your vote. 
\paragraph{Second-round Vote.} N.
\paragraph{Second-round Answer.}
After observing 14/40 first-round YES votes, Bayesian updating suggests the candidate is likely unqualified. Given rational voters' independent impressions, a low YES count aligns more with an unqualified candidate (expected ~8-12 YES) than qualified (\~28-32). Despite my GOOD impression, the aggregate signal outweighs individual likelihood due to extreme unlikelihood of 14 YES under qualification. Vote NO to align with probable unqualification.}
\end{document}